\documentclass[a4paper,12pt]{article}
\pdfpagewidth
\paperwidth
\pdfpageheight
\paperheight
\usepackage[english]{babel} 
\usepackage{amsmath,amssymb}
\usepackage[T1]{fontenc} 
\usepackage[utf8]{inputenc} 
\usepackage[usenames,dvipsnames]{xcolor}
\usepackage{graphicx,color}
\usepackage{geometry}
\usepackage{caption}
\usepackage{amsthm}
\usepackage{hyperref}
\usepackage{comment}

\DeclareMathOperator{\syn}{syn}
\def\F{\mathbb F}
\def\Fq{\mathbb{F}_q}

\def\PG{{\rm{PG}}}
\def\cC{\mathcal{C}}

\theoremstyle{plain} 
\newtheorem{thm}{Theorem}[section] 
\newtheorem{cor}[thm]{Corollary} 
\newtheorem{lem}[thm]{Lemma} 
\newtheorem{prop}[thm]{Proposition} 

\newtheorem{proposition}[thm]{Proposition}

\theoremstyle{definition}
\newtheorem{defn}[thm]{Definition}

\newtheorem{definition}[thm]{Definition}
\theoremstyle{remark}
\newtheorem{remark}[thm]{Remark}

\title{Breaking linearity in the sum-rank metric: MSRD codes from switched $\sigma$-rational normal curves}
\date{}

\author{Daniele Bartoli\thanks{Dipartimento di Matematica e Informatica, Universit\`a degli studi di Perugia, Perugia, Italy. daniele.bartoli@unipg.it}\and
Giovanni Giuseppe Grimaldi\thanks{Dipartimento di Matematica e Informatica, Universit\`a degli studi di Perugia, Perugia, Italy. giovannigiuseppe.grimaldi@unipg.it}\and
Giovanni Longobardi\thanks{Dipartimento di Matematica e Applicazioni ``R. Caccioppoli", Universit\`a di Napoli Federico {II}, Napoli, Italy. giovanni.longobardi@unina.it}}

\begin{document}
\maketitle

\begin{abstract}
We construct a family of scalar-closed, non-additive maximum sum-rank
distance codes (MSRD) by switching norm components of a $\sigma$-rational normal curve
and lifting the resulting point set through a linearized Reed--Solomon
syndrome map. We characterize the admissible switches
by a multiplicative stability condition on the norm classes.
In particular, for $1 \leq n_i \leq m$, $N=n_1+\ldots+n_{\ell}$ and $2\leq\delta\leq N-1$, every partition of $\F_q^*$ satisfying a specific condition, referred to as Condition $(\diamond)$, yields
a code in $\F_{q^m}^{N}$ of size $q^{m(N-\delta+1)}$ and minimum sum-rank
distance $\delta$; the code is non-additive whenever a curve component is
retained. For $\ell\geq2$ these appear to be the first non-additive MSRD codes
in the literature, all previously known families with more than one block
being $\F_q$-linear. Moreover, every code of the family has the same sum-rank weight
distribution as an $\F_{q^m}$-linear MSRD code with the same parameters,
although the geometry of the switched set distinguishes it from
linearized Reed--Solomon codes. In the single-block case, an explicit rank isometry identifies
the construction with the cone codes of Durante, Grimaldi and Longobardi.
\end{abstract}

\noindent \thanks{2020 MSC: 94B05, 11T71, 51E99, 05B25 }\\

\noindent \thanks{{\em Keywords}: sum-rank metric; MSRD codes; non-additive codes; linearized Reed--Solomon codes; skew polynomials; exterior sets; finite fields.}

\section{Introduction}

The sum-rank metric provides a common framework for the Hamming and
the rank metric, which it recovers as the two extreme cases of its block
structure. Introduced in the context of multishot network coding, it has
since become the natural setting for space-time codes and for locally
repairable and partial-MDS codes in distributed storage; for a deeper
introduction see \cite{Neri,MPSK}. Codes attaining the Singleton-like
bound in this metric are called maximum sum-rank distance, or MSRD,
codes, and they play the role that MDS codes play in the Hamming metric
and MRD codes play in the rank metric.

Since their introduction by Mart\'inez-Pe\~nas \cite{MartinezPenas},
MSRD codes have been investigated almost exclusively in the linear
setting. All the families known to us --- linearized Reed--Solomon codes
\cite{MartinezPenas} and their doubly extended version \cite{NSZ},
twisted linearized Reed--Solomon codes \cite{Neri}, the codes obtained
from extended Moore matrices \cite{MPExtMoore}, and the four
constructions of \cite{MP2024} --- are $\F_q$-linear, and most of them
are $\F_{q^m}$-linear. 

The same is true of the general theory: duality,
weight distributions and generalized weights \cite{BGR}, as well as the
geometric description of sum-rank metric codes by multiple $q$-systems
\cite{NSZ}, are all formulated for linear codes.

The situation in the rank metric, that is, for a single block, is
markedly different. There, non-additive MRD codes have been known for a
decade and form by now a well populated family
\cite{Coss,DonatiDurante,DGL,Sic,Otal}, and most of them arise from a
geometric construction: one starts from a point set of
$\PG(d,q^n)$ which is exterior with respect to a suitable secant variety,
and one \emph{switches} some of its components, replacing them by other
point sets which are again exterior.

To the best of our knowledge, no non-additive MSRD code with more
than one block has so far appeared in the literature. The only MSRD
construction we are aware of that formally admits non-linear ingredients
is the cartesian-product Construction 1 of \cite{MP2024}, whose component
codes are explicitly allowed to be ``linear or non-linear''; that
construction, however, leaves the number $\ell$ of blocks unchanged, so
that it turns non-additive MRD codes into non-additive MRD codes and can
produce a non-additive MSRD code with $\ell\geq2$ only if one is
available to begin with. The remaining constructions of \cite{MP2024}
require $\F_q$-linear or $\F_{q^m}$-linear inputs. The purpose of this
paper is to fill this gap: we exhibit a family of MSRD codes with more than one  block which are closed under multiplication by the
scalars of $\F_{q^m}$ but are \emph{not} additive, and which therefore
appear to be the first non-additive MSRD codes in a genuine sum-rank
setting. For $\ell=1$ our family specializes to the rank-metric codes of
\cite{DGL}, of which it is thus a sum-rank generalization.

In \cite{DonatiDurante}, a geometric construction of a family of
rectangular non-linear MRD codes was exhibited by Donati and Durante from maximum
exterior sets obtained by switching norm components of a $C_F^\sigma$-set
with suitable components on the line joining its two vertices, see also \cite{Coss,Sic}. Later in
\cite{DGL}, Durante, Grimaldi and Longobardi introduced square non-linear
MRD codes by taking cones over these exterior sets. More precisely, their
codes arise from projective cones whose base is a maximum exterior set
with respect to a suitable secant variety of a canonical subgeometry and
whose vertex is a complementary projective subspace. The earlier
rectangular non-linear MRD codes are obtained by puncturing the square
codes in their matrix representation. The aim of the present paper is to
extend this construction from the rank metric to the sum-rank metric.

Our starting point is a linearized Reed--Solomon code and its
associated multiple $q$-system, which lies on a $\sigma$-rational normal
curve \cite{ABCS}. This curve is partitioned into $q-1$ scattered linear
sets indexed by norm classes, see also \cite{DonatiDurante}.
Following the Donati--Durante switching construction, we replace some
curve components with the corresponding norm components on the line
joining the two distinct points of the curve.  The main difficulty here is that the exterior set has to be constructed with respect to a variety which is no longer the secant variety of a subgeometry, but rather the projective variety $\Omega_{\delta-2}(\mathcal U_\Lambda)$,
whose points arise as syndromes vectors of bounded sum-rank weight associated with the LRS multiple $q$-system. In particular, its combinatorial behaviour is governed by the set $\Lambda$ of norm classes supporting the blocks. We prove that the resulting point
set and all its secants avoid this low-weight syndrome variety if and
only if the original norm classes are replaced and the retained norm
classes satisfy a multiplicative stability condition.
Explicitly, writing $\F_q^*=S\,\dot\cup\,T$, where $T$ indexes the
components taken on the line and $S$ those kept on the curve, Theorem
\ref{thm:main-exterior} states that this happens if and only if
$\Lambda\subseteq T$ and $S\,\Pi_{\delta-1}(\Lambda)\subseteq S$, where
$\Pi_{\delta-1}(\Lambda)$ denotes the set of products of $\delta-1$
elements of $\Lambda$, with repetitions and with multiplicities at most
$m$. The second condition is purely group-theoretic: it says that $S$ is
a union of cosets of the multiplicative subgroup $G=\langle\Pi_{\delta-1}(\Lambda)\rangle$ of $\F_q^*$ such that it is disjoint from
$\Lambda G$.
The proof relies on the root-space bound for skew polynomials (Proposition \ref{prop:rootspace}) and, in the
equality case, on a relation between the extreme coefficients of a skew
polynomial and the norm classes of its root spaces, see Proposition \ref{prop:viete}. Finally, we
lift the switched exterior set by means of the syndrome map of the LRS
code. We show that the syndrome preimage is precisely an affine cone
whose vertex is the kernel of the syndrome map and whose base is
projectively equivalent to the switched exterior set. This yields a
family of scalar-closed MSRD codes that are non-additive whenever at
least one curve component is retained.

The codes so obtained are far from shapeless. In Theorem
\ref{thm:maximal-subcodes} we prove that their maximal
$\F_{q^m}$-linear subcodes are precisely the $q^m+1$ syndrome preimages
of the points of the switched set, that they all contain the kernel of
the syndrome map and that their pairwise intersections equal that kernel;
Theorem \ref{thm:DP-MSRD} determines which of them are themselves MSRD.
This yields a decomposition of the code (Remark \ref{rm:dec}) from which
we determine its weight distribution: in Theorem
\ref{thm:weight-invariance} we prove that $\mathcal C_{S,T}$ has exactly
the sum-rank weight distribution of an $\F_{q^m}$-linear MSRD code with
the same parameters, and we give the number of codewords of minimum
weight explicitly. Hence the weight distribution cannot distinguish the
non-additive codes from the linear ones. The decomposition does,
however, make the projective equivalence class of the switched set an
invariant of the code under $\F_{q^m}$-semilinear sum-rank equivalence
(Theorem \ref{cor:geometric-invariant}); in particular, no member of the
family with $S\neq\emptyset$ is equivalent to a linearized Reed--Solomon
code.

When there is a single block with $\Lambda=\{1\}$ and $a_1=1$, the
switched base is exactly the Donati--Durante exterior set, while the
syndrome preimage is, up to rank-metric equivalence, the square cone code
constructed by Durante, Grimaldi and Longobardi. We give an explicit
isometry establishing this equivalence. Thus the construction presented
here recovers the known rank-metric family and extends it to a sum-rank
setting.

The paper is organized as follows. Section \ref{DonDurcode} recalls
the rank-metric codes $\mathcal C_{\sigma,T}$ of \cite{DGL}. Section 3
introduces the variety $\Omega_r(\mathcal U)$ and relates it to
the sum-rank weight. Section \ref{geo:setting} contains the switching construction and
the characterization of the admissible partitions. Section \ref{thefamily:cst} is devoted
to the codes $\mathcal C_{S,T}$, to their maximal linear subcodes and to
their weight distribution. Section \ref{sec:rank-recovery} treats the
case $\ell=1$.

\section{Preliminaries and Notation}

Let $\F_{q^m}$ be a finite field with $q^m$ elements and let
\[ \bold{n}=(n_1,\ldots,n_\ell), \qquad N=n_1+\cdots+n_\ell.\]
The ambient space is written as
\[ \F_{q^m}^{\bold{n}} = \F_{q^m}^{n_1}\oplus\cdots\oplus\F_{q^m}^{n_\ell}. \]
For a vector
\[ \bold{z}=(\boldsymbol{z}^{(1)}\mid\cdots\mid \boldsymbol{z}^{(\ell)}) \in \F_{q^m}^{\bold{n}}, \qquad \boldsymbol{z}^{(i)}=(z^{(i)}_1,\ldots,z^{(i)}_{n_i}), \]
its \textit{sum-rank weight} is defined by
\[ \operatorname{wt}_{\rm SR}(\bold{z}) = \sum_{i=1}^{\ell} \dim_{\F_q} \langle z^{(i)}_1,\ldots,z^{(i)}_{n_i} \rangle_{\F_q}, \] where $\dim_{\F_q} \langle z^{(i)}_1,\ldots,z^{(i)}_{n_i}\rangle_{\F_q}$ denotes the \textit{$\F_q$-rank} of $\boldsymbol{z}^{(i)}$.

The corresponding distance is
\[ d_{\rm SR}(\bold{x},\bold{y}) = \operatorname{wt}_{\rm SR}(\bold{x}-\bold{y}). \]
If $\ell=1$, this is the rank metric, whereas if $n_1=\cdots=n_\ell=1$, it is the Hamming metric. Thus sum-rank metric codes simultaneously extend rank-metric codes and Hamming-metric codes. A code in the sum-rank metric is an arbitrary subset
$ \cC\subseteq\F_{q^m}^{\bold{n}}$
endowed with the sum-rank distance. The code $\mathcal{C}$ is said to be \textit{linear} if it is an $\F_{q^m}$-linear subspace. Two (linear) sum-rank metric codes  $\mathcal C,\mathcal C'\subseteq\F_{q^m}^{\bold n}$ are said to be \emph{equivalent} if there exists an ($\F_{q^m}$-linear) isometry $\varphi:\F_{q^m}^{\bold n}\longrightarrow\F_{q^m}^{\bold n}$ such that $\varphi(\mathcal{C})=\mathcal{C'}$. The following provides a characterization of $\F_{q^m}$-linear isometries.
\begin{thm}\cite[Theorem V2]{sal}
    Let $\varphi: \F_{q^m}^{\bold n} \longrightarrow \F_{q^m}^{\bold n}$ be an $\F_{q^m}$-linear isometry. Then there exist a permutation $\tau : \{1,\ldots, \ell\} \longrightarrow \{1,\ldots, \ell\}$ with the property that $\tau(i)=j$ implies $n_i=n_j$, $a_1,\ldots,a_{\ell} \in \F_{q^m}^*$ and invertible matrices $M_1 \in \mathrm{GL}(n_1,\F_q), \ldots, M_{\ell} \in \mathrm{GL}(n_{\ell},\F_q)$ such that
    $$
    \varphi(\boldsymbol{z}^{(1)}\mid\cdots\mid \boldsymbol{z}^{(\ell)})=\left(a_1 \boldsymbol{z}^{(\tau^{-1}(1))} M_1 \mid \cdots \mid a_{\ell}\boldsymbol{z}^{(\tau^{-1}(\ell))} M_{\ell}\right)
    $$
    for any $(\boldsymbol{z}^{(1)}\mid\cdots\mid \boldsymbol{z}^{(\ell)}) \in \F_{q^m}^{\bold n}$.
\end{thm}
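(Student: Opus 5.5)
The approach is to reduce the statement to the classification of $\F_{q^m}$-linear isometries of rank-metric spaces, applied one block at a time. The structure of $\varphi$ is then pinned down by studying how it acts on the vectors of sum-rank weight one.

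\textbf{Step 1: weight-one vectors.} Since $\varphi$ preserves sum-rank weight, it maps the set $W_1$ of weight-one vectors onto itself; the same holds for $\varphi^{-1}$. A vector has weight one exactly when it is supported on a single block $i$ and has the form $\lambda v$ with $\lambda\in\F_{q^m}^*$ and $v\in\F_q^{n_i}\setminus\{0\}$. Call two weight-one vectors \emph{compatible} if their sum has weight at most one or is zero. Since $\varphi$ is additive, it preserves compatibility.

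\textbf{Step 2: recovering the blocks.} The block of a weight-one vector can be read off from compatibility. Suppose $x$ and $y$ have weight one and lie in different blocks. Then $x+\mu y$ has weight two for every $\mu\neq0$. If instead they lie in the same block, one can connect them through a chain of weight-one vectors in that block, each step staying in a two-dimensional space of weight-$\le 1$ vectors. It is cleaner to argue with $\F_{q^m}$-spans. Each block $\F_{q^m}^{n_i}$ is the $\F_{q^m}$-span of the weight-one vectors in it. It is also a maximal subspace spanned by weight-one vectors in which every pair $x,y$ of weight-one vectors satisfies the following: $x+\mu y$ has weight $\le 1$ for some $\mu\neq0$. (Take $x=\lambda v$ and $y=\lambda' v'$; the element $\mu=-\lambda/\lambda'$ gives $\lambda(v-v')$, which has weight $\le 1$.) Vectors from different blocks never satisfy this. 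So define a graph on $W_1$ with edges given by this relation. Its connected components are exactly the blocks, and $\varphi$ is a graph automorphism. Hence there is a permutation $\tau$ with $\varphi(\F_{q^m}^{n_i})=\F_{q^m}^{n_{\tau(i)}}$. Comparing $\F_{q^m}$-dimensions gives $n_i=n_{\tau(i)}$.

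\textbf{Step 3: blockwise rank isometries.} Since $\varphi$ is linear, it decomposes as a direct sum of its restrictions $\varphi_i:\F_{q^m}^{n_i}\to\F_{q^m}^{n_{\tau(i)}}$. Each $\varphi_i$ is an $\F_{q^m}$-linear rank-metric isometry, because on a single block the sum-rank weight equals the rank. The known classification of $\F_{q^m}$-linear rank isometries (Berger) gives $\varphi_i(z)=a\,zM$ with $a\in\F_{q^m}^*$ and $M\in\mathrm{GL}(n_i,\F_q)$. If one wants to reprove this, note that $\varphi_i$ maps rank-one vectors to rank-one vectors, so $\varphi_i(e_j)=a_j u_j$ with $u_j\in\F_q^{n_i}$. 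Next, look at $e_j+e_k$, which has rank one; its image $a_ju_j+a_ku_k$ must have rank one, and this forces $a_j/a_k\in\F_q$. Absorbing these scalars into the rows gives $a\,M$, with $M$ invertible because $\varphi_i$ is bijective. Reindexing by $\tau^{-1}$ yields the stated formula.

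The main obstacle is Step 2: separating blocks purely from metric information. The connectivity argument has to be made rigorous, and I would rely on the additive relation ``$x+\mu y\in W_1\cup\{0\}$ for some $\mu$'' defined above. Step 3 is routine once Berger's theorem is invoked.
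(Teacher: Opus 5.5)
The paper gives no proof of this statement. It is quoted from \cite{sal} as background, so there is no internal argument to compare with. Your argument is a correct, self-contained proof: weight-one vectors identify the blocks, and the problem then reduces to rank-metric isometries block by block. Two points should be tightened.

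\textbf{Step 2.} The connectivity worry is unnecessary. By your own computation, the relation ``$\operatorname{wt}_{\rm SR}(x+\mu y)\leq 1$ for some $\mu\in\F_{q^m}^*$'' holds exactly when $x$ and $y$ lie in the same block. Its classes are therefore the blocks, and no chain or graph argument is needed. Since $\operatorname{wt}_{\rm SR}(\varphi(x)+\mu\varphi(y))=\operatorname{wt}_{\rm SR}(x+\mu y)$, the map $\varphi$ preserves this relation in both directions. This shows both that each class is sent into a single class and that the induced map $\tau$ is injective. You also use $\varphi^{-1}$; bijectivity of $\varphi$ holds because weight preservation forces a trivial kernel, and the dimension is finite.

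\textbf{Step 3.} The deduction $a_j/a_k\in\F_q$ from $\operatorname{rk}_{\F_q}(a_ju_j+a_ku_k)=1$ needs $u_j,u_k$ to be $\F_q$-independent. If $u_k=cu_j$, the vector $a_ju_j+a_ku_k$ has rank at most one for every choice of $a_j,a_k$.

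This independence does hold. Vectors of $\F_q^{n_i}$ that are $\F_q$-dependent are $\F_{q^m}$-dependent, so $\varphi_i(e_j)$ and $\varphi_i(e_k)$ would be proportional, contradicting injectivity. Once $u_j,u_k$ are independent, the entries of $a_ju_j+a_ku_k$ span $\langle a_j,a_k\rangle_{\F_q}$, and the conclusion follows. The same independence of all the $u_j$ is what makes $M$ invertible.

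With these two additions, your elementary version of Step 3 works for every $n_i\geq 1$, so you do not need to invoke Berger's theorem at all.
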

 When $\cC$ is an $\F_{q^m}$-linear subspace of dimension $k$, its minimum sum-rank distance satisfies the Singleton-like bound
\[ d_{\rm SR}(\cC)\leq N-k+1. \]
More generally, an arbitrary code of minimum distance $d$ satisfies
\[ |\cC|\leq q^{m(N-d+1)}, \] see \cite[Theorem 1.4]{MPSK}.
 Codes attaining the Singleton-like bound are called \textit{maximum sum-rank distance} codes, or MSRD codes. One of the main features of the sum-rank metric is the existence of several equivalent algebraic representations. These representations generalize the standard descriptions of rank-metric codes by linearized polynomials, matrices, and vectors, see \cite{Neri}.

From now on, unless otherwise stated, we consider full blocks, so that $n_1=\cdots=n_\ell=m$ and $N=\ell m$. Let $\sigma:x\mapsto x^{q^s}$, $\gcd(s,m)=1$
be a generator of $\operatorname{Gal}(\F_{q^m}\vert \F_q)$, and let $\F_{q^m}[X;\sigma]$ be the skew-polynomial ring determined by
\[ Xc=\sigma(c)X, \qquad c\in\F_{q^m}. \]
For $a\in\F_{q^m}$, define the truncated norms
\[ N_0(a)=1, \qquad N_j(a)=a\sigma(a)\cdots\sigma^{j-1}(a) \quad (j\geq1). \]
In particular,
$ N_m(a)=\operatorname{N}_{q^m/q}(a)$, i.e., the norm of the element $a \in \F_{q^m}$ over $\F_q$.
Choose elements
$ a_1,\ldots,a_\ell\in\F_{q^m}^* $
whose norms
$ \lambda_i=\operatorname{N}_{q^m/q}(a_i) $
are pairwise distinct, and put 
$$\Lambda=\{\lambda_1,\ldots,\lambda_\ell\} \subseteq\F_q^*$$
Clearly, $1\leq\ell\leq q-1$. The central skew polynomial
\[ H_\Lambda(X) = \prod_{i=1}^{\ell}(X^m-\lambda_i) \]
defines the quotient algebra 
\[ \mathcal R_\Lambda = \F_{q^m}[X;\sigma]/(H_\Lambda(X)). \]
Since $\deg H_\Lambda=\ell m$, the quotient algebra $\mathcal R_\Lambda$ is an $\F_{q^m}$-vector space of dimension $\ell m$. Each element of $\mathcal R_\Lambda$ will be represented by its unique skew polynomial of degree less than $N$. For
\[ F(X)=\sum_{j\geq0} f_jX^j\in\F_{q^m}[X;\sigma], \]
where only finitely many coefficients $f_j$ are nonzero, the right evaluation at $t\in\F_{q^m}$ is
\begin{equation} \label{eva}
        F(t)=\sum_{j\geq0} f_j N_j(t).
\end{equation}
Define
\[ F_{a_i}(x) = \sum_j f_jN_j(a_i)\sigma^j(x). \]
Each $F_{a_i}$ is an $\F_q$-linear endomorphism of $\F_{q^m}$. Let $\boldsymbol a=(a_1,\ldots,a_{\ell})$ and consider the map
\[ \begin{split} \Phi_{\boldsymbol a}:\mathcal R_\Lambda &\longrightarrow \operatorname{End}_{\F_q}(\F_{q^m})^\ell,\\ F &\longmapsto \bigl(F_{a_1},\ldots,F_{a_\ell}\bigr) \end{split} \]
is an $\F_{q^m}$-linear isomorphism, see \cite[Theorem 4.1]{Neri}. Under this correspondence, the sum-rank weight of $F$ is defined as 
\[ \operatorname{wt}_{\rm SR}(F) = \sum_{i=1}^{\ell} \operatorname{rk}_{\F_q}(F_{a_i}). \]
Equivalently, if
$ d_{\lambda_i}(F) = \dim_{\F_q}\ker(F_{a_i}), $
then
$ \operatorname{wt}_{\rm SR}(F) = \ell m-\sum_{i=1}^{\ell}d_{\lambda_i}(F). $

Note that the value $d_\lambda$ makes sense for every $\lambda\in\F_q^*$, and not
only for the classes belonging to $\Lambda$. Indeed, for $a\in\F_{q^m}^*$ with
$\operatorname{N}_{q^m/q}(a)=\lambda$ we put
\[
 F_a(x)=\sum_{j\geq0}f_jN_j(a)\sigma^j(x),
 \qquad
 d_\lambda(F)=\dim_{\F_q}\ker(F_a).
\]

\begin{remark}\label{rem:well-defined}
The integer $d_\lambda(F)$ does not depend on the chosen representative
$a$. Indeed, if $\operatorname{N}_{q^m/q}(a)=\operatorname{N}_{q^m/q}(a')$,
then $\operatorname{N}_{q^m/q}(a'/a)=1$ and
there exists $w\in\F_{q^m}^*$ with $a'=a\,\sigma(w)/w$. Since $N_j(a')=N_j(a)\sigma^j(w)/w$ for every $j\geq0$, we get
\[
 F_{a'}(x)=\sum_{j\geq0}f_jN_j(a)\frac{\sigma^j(w)}{w}\sigma^j(x)
          =\frac1w\,F_a(wx).
\]
Therefore $\ker F_{a'}=w^{-1}\ker F_a$, and the two kernels have the same
$\F_q$-dimension.
\end{remark}

\begin{prop}\cite[Theorem 3.4]{Neri}\label{prop:rootspace}
Let $\Xi\subseteq\F_q^*$ be a set of pairwise distinct elements and let
$F\in\F_{q^m}[X;\sigma]$ with $F\neq0$. Then
\[
 \sum_{\lambda\in\Xi}d_{\lambda}(F) \leq \deg(F).
\]
\end{prop}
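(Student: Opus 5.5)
Fix $F=\sum_{j=0}^{d} f_jX^j$ with $f_d\neq 0$, and for each $\lambda\in\Xi$ fix a representative $a_\lambda$ with $\operatorname{N}_{q^m/q}(a_\lambda)=\lambda$. By Remark~\ref{rem:well-defined}, $d_\lambda(F)$ does not depend on this choice. The plan has three steps:
\begin{enumerate}
\item turn each kernel $\ker F_{a_\lambda}$ into a set of right roots of $F$ in the skew polynomial ring;
\item show that the root sets coming from different norm classes are jointly ``independent'', in the sense that their minimal annihilating polynomial has degree equal to the sum of their dimensions;
\item conclude by right division.
\end{enumerate}

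\emph{Step 1.} We use the standard identity relating right evaluation and conjugation. For $x\in\F_{q^m}^*$ define the conjugate $a^x=\sigma(x)a x^{-1}$. From \eqref{eva} one gets $N_j(a^x)=\sigma^j(x)N_j(a)x^{-1}$, and hence
\[
F(a^x)=F_a(x)\,x^{-1}.
\]
Thus $x\in\ker F_{a}\setminus\{0\}$ exactly when $a^x$ is a right root of $F$. Equivalently, $X-a^x$ divides $F$ on the right; this follows from the right division algorithm in $\F_{q^m}[X;\sigma]$, since $F(t)$ is the remainder of $F$ on right division by $X-t$.

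\emph{Step 2.} Put $V_\lambda=\ker F_{a_\lambda}$, an $\F_q$-subspace of dimension $d_\lambda(F)$. Consider the set $\Omega=\{a_\lambda^{x}: \lambda\in\Xi,\ x\in V_\lambda\setminus\{0\}\}$ of right roots of $F$, and its minimal polynomial
\[
P_\Omega=\operatorname{lclm}\{X-t: t\in\Omega\}.
\]
Every element of $\Omega$ is a right root of $F$, so $P_\Omega$ right-divides $F$, and therefore $\deg P_\Omega\le \deg F$. It remains to prove
\[
\deg P_\Omega=\sum_{\lambda\in\Xi}\dim_{\F_q}V_\lambda .
\]
This is the Lam--Leroy theory of P-independence.
\begin{itemize}
\item Within one conjugacy class $C(a_\lambda)$, the norm class $\lambda$, the rank of a set of conjugates $a_\lambda^{x}$ equals the $\F_q$-dimension of the span of the corresponding $x$'s. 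Here $\F_q$ arises as the centralizer-type subfield $\{x:\sigma(x)=x\}$. So the rank of the part of $\Omega$ coming from $V_\lambda$ is $\dim V_\lambda$.
\item Roots lying in distinct conjugacy classes are P-independent across classes, so the ranks add up.
\end{itemize}

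The main obstacle is proving these two facts directly rather than quoting them.
\begin{itemize}
\item For the within-class statement, I would argue by induction on the size of a basis. If $x_1,\dots,x_k$ are $\F_q$-independent, I would show that the lclm has degree $k$. The approach is to note that the operator $x\mapsto G_{a}(x)$ associated with the lclm $G$ vanishes on $\langle x_1,\dots,x_k\rangle_{\F_q}$. This forces $\deg G\ge k$, because a $\sigma$-linearized polynomial of $\sigma$-degree $k'$ has kernel of $\F_q$-dimension at most $k'$. This last fact holds since $\sigma$ generates the Galois group; it is the Gow--Quinlan bound.
\item For the across-class statement, I would first try the following. Let $G$ be the lclm of the roots coming from the classes $\lambda_1,\dots,\lambda_{r-1}$. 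Right evaluation ``transports'' roots: if $t$ is not a root of $G$, then $F=F'G$ has $t$ as a root exactly when $F'$ has the conjugate $G(t)\,t\,G(t)^{-1}$ as a root. Conjugation preserves the norm class $\lambda_r$. The class-$\lambda_r$ roots of $F$ therefore correspond to class-$\lambda_r$ roots of $F'$ with the same $\F_q$-dimension, and the kernel of $G_{a_r}$ is trivial because $G$ has no root in class $\lambda_r$. Applying the within-class bound to $F'$ gives additivity by induction on $|\Xi|$.
\end{itemize}

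The delicate point is to verify that the map $x\mapsto x\,G_{a_r}(x)$-type transport is $\F_q$-linear on $V_{\lambda_r}$ and injective. I would reduce this to the composition rule
\[
(F'G)_a=F'_a\circ G_a,
\]
which follows from the multiplication rule $Xc=\sigma(c)X$. With this rule, $\ker(F'G)_a=G_a^{-1}(\ker F'_a)$, and the count $d_\lambda(F'G)\le d_\lambda(F')+d_\lambda(G)$ is plain linear algebra. In fact, this composition rule gives a cleaner overall proof: induct on $\deg F$ by factoring off a single linear right factor $X-a_\lambda^x$ for some root. Each degree-one factor $X-c$ has $d_\mu(X-c)$ equal to $1$ for the unique class $\mu$ of $c$ and $0$ otherwise, so
\[
\sum_{\lambda\in\Xi}d_\lambda(F)\le \sum_{\lambda\in\Xi}\bigl(d_\lambda(F')+d_\lambda(X-c)\bigr)\le (\deg F-1)+1 .
\]
If $F$ has no roots in any class of $\Xi$, the sum is $0$ and the claim is trivial. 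This short induction is the route I would actually write.
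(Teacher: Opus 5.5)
Your final argument, the short induction, is correct, and it takes a different route from the one behind this statement. The paper gives no proof of its own: it imports the bound from \cite[Theorem 3.4]{Neri}, whose argument rests on the Lam--Leroy theory of P-independent sets. That theory says that right roots lying in distinct conjugacy classes are P-independent, that the roots of $F$ in the class of $a$ have rank $\dim_{\F_q}\ker F_a$, and that the minimal polynomial of the root set right-divides $F$; this is essentially your Steps~1--3.

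Your induction replaces all of this with four elementary ingredients:
\begin{enumerate}
\item the remainder theorem $F-F(c)\in\F_{q^m}[X;\sigma](X-c)$, combined with $F(a^x)=F_a(x)x^{-1}$;
\item the multiplicativity $(F'G)_a=F'_a\circ G_a$, which follows from $N_{i+j}(a)=N_i(a)\sigma^i(N_j(a))$;
\item subadditivity of kernel dimensions under composition;
\item the fact that $d_\mu(X-c)\le 1$, with equality only for $\mu=\operatorname{N}_{q^m/q}(c)$.
\end{enumerate}
Write out the check behind the last item, since it is exactly where both hypotheses enter. If $x,y$ are two nonzero kernel elements of $x\mapsto a\sigma(x)-cx$, then $\sigma(x/y)=x/y$, so $x/y\in\F_q$ because $\sigma$ generates the Galois group. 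Taking norms in $a\sigma(x)=cx$ forces $\operatorname{N}_{q^m/q}(a)=\operatorname{N}_{q^m/q}(c)$. Distinctness of the elements of $\Xi$ then gives $\sum_{\mu\in\Xi}d_\mu(X-c)\le 1$.

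Your route is self-contained, and it reproves the Gow--Quinlan bound (the case $|\Xi|=1$) rather than invoking it. What the Lam--Leroy approach buys in addition is the structural equality statement: the least common left multiple of the roots has degree equal to the sum of the ranks. That is not needed for the inequality.

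In the alternative you discard, the transported root should be $\sigma(G(t))\,t\,G(t)^{-1}$. As written, $G(t)\,t\,G(t)^{-1}=t$ in the commutative field $\F_{q^m}$. This does not affect the induction you actually commit to.
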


\begin{proposition}\cite[Proposition 6.1]{Neri}
\label{prop:viete}  Let $\Xi \subset \F_q^*$ be a set of pairwise
distinct elements and
$F(X)=f_0+\cdots+f_dX^d$ with $f_0f_d\neq0$.  If
\[
 d=\deg(F)=\sum_{\lambda\in\Xi}d_\lambda(F),
\]
then
\begin{equation}\label{eq:viete}
 N_{m}\left(\frac{f_0}{f_d}\right)
 =(-1)^{dm}\prod_{\lambda\in\Xi}\lambda^{d_\lambda(F)}.
\end{equation}
\end{proposition}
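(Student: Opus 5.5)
The plan is to factor $F$ completely into linear skew factors whose norms are prescribed by the root spaces, and then read off the constant coefficient. After replacing $F$ by $f_d^{-1}F$, which multiplies $f_0/f_d$ by nothing, we may assume $F$ is monic.

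\emph{Step 1: from kernels to right roots.} For $a$ with $\operatorname{N}_{q^m/q}(a)=\lambda$ and $x\in\F_{q^m}^*$, a direct computation from (\ref{eva}) gives $F_a(x)=F\bigl(a\sigma(x)/x\bigr)\,x$. Hence $x\in\ker F_a\setminus\{0\}$ exactly when $a\sigma(x)/x$ is a right root of $F$. Note that every such root has norm $\lambda$. In this way the $\F_q$-space $\ker F_a$, of dimension $d_\lambda(F)$, parametrizes the right roots of $F$ in the $\sigma$-conjugacy class of $a$.

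\emph{Step 2: $F$ is the minimal polynomial of its roots.} This is the step I expect to be the main obstacle. For each $\lambda\in\Xi$, I will pick an $\F_q$-basis $x_1,\dots,x_{d_\lambda}$ of $\ker F_{a}$ and set $b=a\sigma(x_k)/x_k$. Using the Lam--Leroy theory of $P$-independence, I will show that the resulting $d=\sum_{\lambda\in\Xi}d_\lambda(F)$ roots are $P$-independent. Within one conjugacy class this is the standard correspondence between $\F_q$-linear independence of the $x_k$ and $P$-independence of the conjugates. Roots in distinct classes are independent, with the rank adding over classes; this is the content behind Proposition \ref{prop:rootspace}. It follows that their least left common multiple $L$ is monic of degree exactly $d$. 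Since $F$ vanishes on all these roots, $L$ right-divides $F$, and comparing degrees and leading coefficients gives $F=L$.

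\emph{Step 3: linear factorization.} Next I build $L$ incrementally. Write $L_1=X-b_1$ and $L_{j+1}=\operatorname{lclm}(L_j,X-b_{j+1})$. Because $b_{j+1}$ is not a root of $L_j$ (by independence), $L_{j+1}=(X-c_{j+1})L_j$, where $c_{j+1}=L_j(b_{j+1})^{\sigma}\,b_{j+1}\,L_j(b_{j+1})^{-1}$ is a $\sigma$-conjugate of $b_{j+1}$. In particular $\operatorname{N}_{q^m/q}(c_{j+1})=\operatorname{N}_{q^m/q}(b_{j+1})$. Thus
\[
F=(X-c_d)\cdots(X-c_1),
\]
where exactly $d_\lambda(F)$ of the $c_j$ have norm $\lambda$, for each $\lambda\in\Xi$.

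\emph{Step 4: the constant term.} Multiplying $G$ on the left by $(X-c)$ changes the constant term from $g_0$ to $-cg_0$, since $XG$ has no constant term. Therefore
\[
f_0=(-1)^d\prod_j c_j .
\]
Taking $N_m$, which is multiplicative, yields $N_m(f_0)=(-1)^{dm}\prod_{\lambda\in\Xi}\lambda^{d_\lambda(F)}$, which is (\ref{eq:viete}) for monic $F$. The hypothesis $f_0\neq0$ is consistent with this formula, since all the $c_j$ are nonzero.
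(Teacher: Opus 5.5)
The paper gives no proof of this statement: it quotes it from \cite{Neri} and uses it as a black box, so there is no internal argument to compare yours against. Your proof is correct.

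The following ingredients all check out in the paper's conventions:
\begin{itemize}
\item the identity $F_a(x)=x\,F\bigl(a\sigma(x)/x\bigr)$;
\item the product rule $(GH)(t)=G\bigl(\sigma(y)t/y\bigr)\,y$ for $y=H(t)\neq0$, which underlies your formula for $c_{j+1}$;
\item the invariance of the norm under $\sigma$-conjugation;
\item the constant-term recursion $g_0\mapsto -cg_0$.
\end{itemize}
Together they give $f_0=(-1)^d c_d\cdots c_1$, and hence (\ref{eq:viete}).

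One remark on the step you flag as the main obstacle: you do not need the Lam--Leroy $P$-independence theory there. The paper's own Proposition~\ref{prop:rootspace} suffices. Let $L$ be the monic least left common multiple of the $X-b$. By your Step~1, every chosen $x_k$ lies in $\ker L_a$, so $d_\lambda(L)\geq d_\lambda(F)$ for each $\lambda\in\Xi$. Proposition~\ref{prop:rootspace} then gives $\deg L\geq\sum_{\lambda\in\Xi}d_\lambda(F)=d$. Since $L$ right-divides the monic $F$ of degree $d$, you get $L=F$.

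The same degree count handles Step~3. Each passage $L_j\mapsto L_{j+1}$ raises the degree by at most one: either $L_j(b_{j+1})=0$ and $L_{j+1}=L_j$, or $(X-c_{j+1})L_j$ is a common left multiple of degree $\deg L_j+1$. Since $\deg L_1=1$ and $\deg L_d=d$, every passage raises the degree by exactly one. This yields $L_j(b_{j+1})\neq0$ and $L_{j+1}=(X-c_{j+1})L_j$ directly, without appealing to independence of subsets.
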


The same correspondence may be written using linearized polynomials. We use the notation $x^{\sigma^j}=\sigma^j(x)$; these polynomials are regarded as operators on $\F_{q^m}$. The skew monomial $f_jX^j$, evaluated in the $i$-th norm class, corresponds to the $\sigma$-linearized monomial
\[ f_jN_j(a_i)x^{\sigma^j}. \]
Thus a skew polynomial $F(X)=\sum_jf_jX^j$ determines the tuple
\[ \left( \sum_jf_jN_j(a_1)x^{\sigma^j}, \ldots, \sum_jf_jN_j(a_\ell)x^{\sigma^j} \right) \]
of $\sigma$-linearized polynomials. The $i$-th component represents the $\F_q$-linear endomorphism $F_{a_i}$. After fixing an $\F_q$-basis
$ \mathcal B_i=(\beta_{i1},\ldots,\beta_{im}) $
of $\F_{q^m}$ for each block, the endomorphism $F_{a_i}$ is represented by a matrix
$ M_{\mathcal B_i}(F_{a_i}) \in\F_q^{m\times m}. $
Consequently, the map
\[ F\longmapsto \bigl( M_{\mathcal B_1}(F_{a_1}), \ldots, M_{\mathcal B_\ell}(F_{a_\ell}) \bigr) \]
identifies $\mathcal R_\Lambda$ with a space of $\ell$-tuples of $m\times m$ matrices. In this representation,
\[ \operatorname{wt}_{\rm SR}(F) = \sum_{i=1}^{\ell} \operatorname{rk}_{\F_q} \bigl(M_{\mathcal B_i}(F_{a_i})\bigr). \]
Hence the sum-rank metric is the sum of the ordinary matrix ranks of the blocks. There is also a vector representation. Evaluating each endomorphism on the chosen basis gives
\[  \operatorname{ev}_{\boldsymbol a,\boldsymbol{\mathcal B}}(F) = \bigl( F_{a_1}(\beta_{11}),\ldots,F_{a_1}(\beta_{1m}) \mid\cdots\mid F_{a_\ell}(\beta_{\ell1}),\ldots, F_{a_\ell}(\beta_{\ell m}) \bigr) \in\F_{q^m}^{\ell m}.  \]
The $\F_q$-rank of the $i$-th vector block is exactly the rank of the endomorphism $F_{a_i}$, and hence
\[ \operatorname{wt}_{\rm SR} \bigl( \operatorname{ev}_{\boldsymbol a,\boldsymbol{\mathcal B}}(F) \bigr) = \operatorname{wt}_{\rm SR}(F). \]
We therefore have isometric descriptions
\[ \mathcal R_\Lambda \longleftrightarrow \operatorname{End}_{\F_q}(\F_{q^m})^\ell \longleftrightarrow \bigl(\F_q^{m\times m}\bigr)^\ell \longleftrightarrow \F_{q^m}^{\ell m}. \]
In other words, a sum-rank code can be regarded equivalently as a set of skew polynomials modulo $H_\Lambda$, a set of tuples of $\sigma$-linearized polynomials, a set of tuples of matrices over $\F_q$, or a set of block vectors over $\F_{q^m}$. The most well-known linear MSRD codes are the linearized Reed--Solomon codes introduced by Mart\'inez-Pe\~nas \cite{MartinezPenas} . In the skew-polynomial framework, the code of dimension $k$ ($1\leq k\leq N$) is simply
\[ \operatorname{LRS}_k(\Lambda) = \{F\in\mathcal R_\Lambda:\deg(F)<k\}. \]
Proposition \ref{prop:rootspace} implies that every nonzero $F\in\operatorname{LRS}_k(\Lambda)$ satisfies
\[ \operatorname{wt}_{\rm SR}(F) \geq \ell m-(k-1) = \ell m-k+1. \]
Therefore,
\[ d_{\rm SR} \bigl(\operatorname{LRS}_k(\Lambda)\bigr) = \ell m-k+1, \]
and linearized Reed--Solomon codes are MSRD. They simultaneously generalize Reed--Solomon codes and generalized Gabidulin codes: in the full-block model, $m=1$ recovers Reed--Solomon codes, while a single full-block recovers generalized Gabidulin codes.

\subsection{The dual of a sum-rank metric code}

Given
\[
 F=\sum_{i=0}^{N-1}f_iX^i,
 \qquad
 G=\sum_{i=0}^{N-1}g_iX^i
\]
in $\mathcal R_\Lambda$, define
\begin{equation}
 \langle F,G\rangle_\Lambda
 =\mathrm{Tr}_{q^m/q}\left(\sum_{i=0}^{N-1}f_ig_i\right).
 \label{eq:Lambda-form}
\end{equation}
where $\mathrm{Tr}(\cdot)$ denotes the trace map from $\F_{q^m}$ to $\F_q$. This is a non-degenerate $\F_q$-bilinear form.

\begin{definition}
Let $\mathcal C\subseteq\mathcal R_\Lambda$ be a $\F_q$-linear sum-rank metric
code. Its dual is
\[
 \mathcal C^\perp
 =\{G\in\mathcal R_\Lambda:
       \langle F,G\rangle_\Lambda=0
       \text{ for every }F\in\mathcal C\}.
\]
\end{definition}

Put $\boldsymbol a=(a_1,\ldots,a_\ell)$ and
$\boldsymbol a^{-1}=(a_1^{-1},\ldots,a_\ell^{-1})$ and assume that
$\Lambda$ is a subgroup of $\F_q^*$. Clearly, the norms of the entries of
$\boldsymbol a^{-1}$ then again form $\Lambda$.
Write the reduced operator representations as
\[
 F_{a_i}(x)=\sum_{r=0}^{m-1}c_{i,r}(F)x^{\sigma^r},
 \qquad
 G_{a_i^{-1}}(x)=\sum_{r=0}^{m-1}c^-_{i,r}(G)x^{\sigma^r},
\]
where
\begin{align*}
 c_{i,r}(F)
 &=N_r(a_i)\sum_{t=0}^{\ell-1}f_{r+tm}\lambda_i^t,\\
 c^-_{i,r}(G)
 &=N_r(a_i^{-1})\sum_{t=0}^{\ell-1}g_{r+tm}\lambda_i^{-t},
 \qquad 0\leq r<m.
\end{align*}
These expressions follow from $\sigma^m=\mathrm{id}$ and
$N_{r+tm}(a_i)=N_r(a_i)\lambda_i^t$.
By \cite[Proposition 5.3]{Neri},
\begin{equation}\label{scalarprod}
 \ell\langle F,G\rangle_\Lambda
 =\left\langle
   \Phi_{\boldsymbol a}(F),
   \Phi_{\boldsymbol a^{-1}}(G)
  \right\rangle_{\mathrm{srk}},
\end{equation}
where
\[
 \left\langle
   \Phi_{\boldsymbol a}(F),
   \Phi_{\boldsymbol a^{-1}}(G)
  \right\rangle_{\mathrm{srk}}
 =\sum_{i=1}^{\ell}\mathrm{Tr}_{q^m/q}
   \left(\sum_{r=0}^{m-1}c_{i,r}(F)c^-_{i,r}(G)\right).
\]
Indeed, the factors $N_r(a_i)N_r(a_i^{-1})$ equal $1$, and
$\sum_{i=1}^{\ell}\lambda_i^{t-u}$ is $\ell$ for $t=u$ and
$0$ otherwise, for $0\leq t,u<\ell$.

There is an analogous statement in the vector model, still under the subgroup hypothesis; see \cite[Theorem 5.5]{Neri}. Let
$\mathfrak B=(\mathcal{B}_1,\ldots,\mathcal{B}_\ell)$ be a collection of $\F_q$-bases of $\F_{q^m}$,
and let
$\mathfrak B^*=(\mathcal{B}_1^*,\ldots,\mathcal{B}_\ell^*)$ where $\mathcal{B}_j^*=(\beta_{j1}^*, \ldots, \beta_{jm}^*)$ is the corresponding dual basis of $\mathcal{B}_j=(\beta_{j1}, \ldots, \beta_{jm})$, that is
$$
\mathrm{Tr}_{q^m/q}(\beta_{jk}\beta_{jh}^*)=\begin{cases} 1 \qquad \mathrm{if} \,\, k=h  \\ 0 \qquad \mathrm{if} \,\, k \ne h. \end{cases}
$$
If $\mathcal C(\boldsymbol a,\mathfrak B)$ denotes the usual
vector representation of $\mathcal C$, then
\begin{equation}
 \mathcal C^\perp(\boldsymbol a,\mathfrak B)
 =\bigl(
    \mathcal C(\boldsymbol a^{-1},\mathfrak B^*)
   \bigr)^{\perp\!\!\!\!\perp},
 \label{vector-dual}
\end{equation}
where $\perp\!\!\!\!\perp$ is orthogonality with respect to the bilinear form on $\F_{q^m}^N$ defined as $\langle \mathbf{u},\mathbf{v} \rangle=\mathrm{Tr}_{q^m/q}(\mathbf{u}\mathbf {v}^{\top})$. 
Moreover, by \cite[Theorem 5.7]{Neri} if $\Lambda$ is a subgroup of $\F_q^*$ and $\mathcal C\subseteq\mathcal R_\Lambda$ is an $\F_q$-linear MSRD code, then
$\mathcal C^\perp$ is also an $\F_q$-linear MSRD code.

We now allow $\Lambda$ to be any set of distinct nonzero norms again.
For $1\leq k\leq N$, the linearized Reed--Solomon code of $\F_{q^m}$-dimension $k$ in
the skew-polynomial model is
\begin{equation}
 \mathrm{LRS}_k
 =\{F\in\mathcal R_\Lambda:\deg F<k\}
 =\langle1,X,\ldots,X^{k-1}\rangle_{\F_{q^m}}.
 \label{eq:LRS-definition}
\end{equation}
It has minimum sum-rank distance
$N-k+1$. We also put $\mathrm{LRS}_0=\{0\}$. The dual is given by
\begin{equation}
 (\mathrm{LRS}_k)^\perp
 =\langle X^k,X^{k+1},\ldots,X^{N-1}\rangle_{\F_{q^m}}
 =\mathrm{LRS}_{N-k} X^k
 =X^k \mathrm{LRS}_{N-k},
\label{dualLRS}
\end{equation}
see \cite[Theorem 5.9]{Neri}. Formula \eqref{dualLRS} is valid for an arbitrary set $\Lambda$ of pairwise distinct elements of $\F_q^*$. Indeed, orthogonality to every scalar multiple of $1,X,\ldots,X^{k-1}$ is equivalent to the vanishing of the first $k$ coefficients. Moreover, multiplication by $X^k$ induces composition with invertible operators in each nonzero norm class and hence preserves sum-rank weight.
In particular, $(\mathrm{LRS}_k)^\perp$ is equivalent, under a
sum-rank isometry, to the linearized Reed--Solomon code
$\mathrm{LRS}_{N-k}$.

\begin{remark}
Under the subgroup hypothesis, the compatibility above identifies coefficient-duality with vector duality. For $\ell=1$ and $\Lambda=\{1\}$, formula \eqref{dualLRS} specializes to the familiar
statement that the Delsarte dual of a Gabidulin code is equivalent to a
Gabidulin code of complementary dimension. For $m=1$, it specializes to the
classical duality of Reed--Solomon codes.
\end{remark}

\section{The non-linear MRD codes \texorpdfstring{$\mathcal{C}_{\sigma,T}$}{C sigma,T}}\label{DonDurcode}

In this section we recall the square rank-metric construction of \cite{DGL}. Let $m\geq 3$, let $2\leq k\leq m-1$, and put $d=m-k+1.$
Let
$\sigma\colon x\longmapsto x^{q^s}$
be a generator of
$\operatorname{Gal}(\mathbb{F}_{q^m}\vert \mathbb{F}_q)$, where
$\gcd(s,m)=1$, and let $1\in T\subseteq\mathbb{F}_q^*$.
For every $a\in\mathbb{F}_q^*$, set
\[
\mathcal{C}_a
 =
 \left\{
 x\in\mathbb{F}_{q^m}^*
 :
 \operatorname{N}_{q^m/q}(x)=a
 \right\}.
\]

Consider the canonical subgeometry
\[
\Sigma_{m,m}
 =
 \left\{
 (x,x^\sigma,\ldots,x^{\sigma^{m-1}})
 :
 x\in\mathbb{F}_{q^m}^*
 \right\}
 \cong \operatorname{PG}(m-1,q)
\]
of $\operatorname{PG}(m-1,q^m)$, and the two disjoint subspaces
\[
\Lambda^\star\colon
X_0=X_1=\cdots=X_{d}=0
\]
and
\[
\Lambda\colon
X_{d+1}=X_{d+2}=\cdots=X_{m-1}=0.
\]
Thus, $\Lambda^\star$ and $\Lambda$ have vector dimension $k-2$ and
$d+1=m-k+2$, respectively.

The projection of $\Sigma_{m,m}$ from
$\Lambda^\star$ onto $\Lambda$ is
\[
\Gamma
 =
 \left\{
 (x,x^\sigma,\ldots,x^{\sigma^d},0,\ldots,0)
 :
 x\in\mathbb{F}_{q^m}^*
 \right\}.
\]
This linear set is a $d$-embedding of $\Sigma_{m,m}$, see \cite[Theorem 6]{Lun2}.

Inside $\Lambda$, consider the two points
\[
A=(\underbrace{0,\ldots,0}_{d},1,0,\ldots,0),
\qquad
B=(1,0,\ldots,0),
\]
and, for every $a\in\mathbb{F}_q^*$, define
\[
\mathcal{X}_a
 =
 \left\{
 \left(
 1,t,t^{\sigma+1},\ldots,
 t^{\sigma^{d-1}+\cdots+\sigma+1},
 0,\ldots,0
 \right)
 :
 t\in \mathcal{C}_a
 \right\}  \subseteq \Lambda.
\]
The set
\[
\mathcal{X}
 =
 \{A,B\}\cup
 \bigcup_{a\in\mathbb{F}_q^*}\mathcal{X}_a
\]
is a $C_F^\sigma$-set with vertices $A$ and $B$. The subsets
$\mathcal{X}_a$ are its components.

For every $a\in\mathbb{F}_q^*$, define also the subset of the line
$\langle A,B\rangle$
\[
\mathcal{J}_a
 =
 \left\{
 \left(
 1,\underbrace{0,\ldots,0}_{d-1},
 (-1)^{d-1}t,
 0,\ldots,0
 \right)
 :
 t\in \mathcal{C}_a
 \right\}.
\]
The point set associated with $T$ is
\begin{equation}
\label{eq:ET}
\mathcal{E}_T
 =
 \left(
 \mathcal{X}\setminus
 \bigcup_{a\in T}\mathcal{X}_a
 \right)
 \cup
 \bigcup_{a\in T}\mathcal{J}_a.
\end{equation}

The set $\mathcal{E}_T$ has $q^m+1$ points and is a maximum exterior
set with respect to the $(d-2)$-secant variety
$\Omega_{d-2}(\Gamma)$.

Let
\[
\mathbb{K}_T
 =
 \mathbb{K}(\Lambda^\star,\mathcal{E}_T)
\]
be the cone with vertex $\Lambda^\star$ and base $\mathcal{E}_T$.
The cone $\mathbb{K}_T$ is a maximum exterior set with respect to
$\Omega_{d-2}(\Sigma_{m,m})$ and has
\[
|\mathbb{K}_T|
 =
 \frac{q^{mk}-1}{q^m-1}
\]
points.

We can now define the corresponding rank-metric code. For
\[
\boldsymbol{\alpha}
 =
 (\alpha_0,\ldots,\alpha_{m-1})
 \in\mathbb{F}_{q^m}^m,
\]
denote by
\[
f_{\boldsymbol{\alpha}}(X)
 =
 \sum_{i=0}^{m-1}\alpha_iX^{\sigma^i}
\]
the associated $\sigma$-linearized polynomial. Then
\begin{equation}
\label{eq:CsigmaT-geometric}
\mathcal{C}_{\sigma,T}
 =
 \{0\}
 \cup
 \left\{
 f_{\boldsymbol{\alpha}}(X)
 :
 \langle\boldsymbol{\alpha}\rangle_{\mathbb{F}_{q^m}}
       \in\mathbb{K}_T
 \right\}.
\end{equation}
Here, for each point of $\mathbb{K}_T$, all its nonzero vector
representatives are taken. Thus, every point of $\mathbb{K}_T$
contributes exactly $q^m-1$ nonzero linearized polynomials to
$\mathcal{C}_{\sigma,T}$. Consequently,
\[
|\mathcal{C}_{\sigma,T}|
 =
 1+(q^m-1)|\mathbb{K}_T|
 =
 q^{mk}.
\]

More explicitly, since the vertex $\Lambda^\star$ consists of the
points whose first $d+1$ coordinates vanish, the code is
\begin{align}
\label{eq:CsigmaT-explicit}
\mathcal{C}_{\sigma,T}
={}&
\left\{
 \sum_{i=0}^{d}
 \lambda\alpha^{\sigma^i}
 N_i(\xi)X^{\sigma^i}
 +
 \sum_{i=d+1}^{m-1}\beta_iX^{\sigma^i}
 :
 \begin{array}{l}
 \lambda,\alpha,\beta_i\in\mathbb{F}_{q^m},\\
 \xi\in\mathbb{F}_{q^m}^*,\\
 \operatorname{N}_{q^m/q}(\xi)\in\mathbb{F}_q^*\setminus T
 \end{array}
\right\}
\nonumber\\
&\cup
\left\{
 \lambda\alpha X
 +
 (-1)^{d+1}\lambda\alpha^\sigma\eta X^{\sigma^d}
 +
 \sum_{i=d+1}^{m-1}\beta_iX^{\sigma^i}
 :
 \begin{array}{l}
 \lambda,\alpha,\beta_i\in\mathbb{F}_{q^m},\\
 \eta\in\mathbb{F}_{q^m}^*,\\\operatorname{N}_{q^m/q}(\eta)\in T
 \end{array}
\right\}
\nonumber\\
&\cup
\left\{
 \alpha X^{\sigma^d}
 +
 \sum_{i=d+1}^{m-1}\beta_iX^{\sigma^i}
 :
 \alpha,\beta_i\in\mathbb{F}_{q^m}
\right\}
\nonumber\\
&\cup
\left\{
 \alpha X
 +
 \sum_{i=d+1}^{m-1}\beta_iX^{\sigma^i}
 :
 \alpha,\beta_i\in\mathbb{F}_{q^m}
\right\}.
\end{align}

Since $\mathbb{K}_T$ is a maximum exterior set with respect to
$\Omega_{d-2}(\Sigma_{m,m})$, the code
$\mathcal{C}_{\sigma,T}$ has minimum rank distance $d$ and cardinality
$q^{m(m-d+1)}=q^{mk}$. Hence, it is an $(m,m,q;d)\text{-MRD code}$.
For $q>2$ and
$1\in T\subsetneq\mathbb{F}_q^{*}$, the code is non-linear. Moreover,
$\mathcal{C}_{\sigma,T}$ contains two distinct subcodes equivalent to
the generalized Gabidulin code $\mathcal{G}_{k-1,\sigma}=\{ \sum_{i=0}^{k-2}f_iX^{\sigma^i} : f_i \in \F_{q^m}\}$. This
property distinguishes it from the non-linear Otal--\"Ozbudak codes constructed in \cite{Otal},
which contain a unique subcode equivalent to
$\mathcal{G}_{k-1,\sigma}$. Consequently,
$\mathcal{C}_{\sigma,T}$ is neither equivalent nor adjointly equivalent
to the non-linear Otal--\"Ozbudak family.

The construction also extends earlier geometric examples. Deleting the
last $k-2$ rows gives the rectangular MRD codes of Donati and Durante \cite{DonatiDurante},
while the cases $k=2$ and $(m,k)=(3,2)$ recover the constructions of
Durante--Siciliano \cite{Sic} and Cossidente--Marino--Pavese \cite{Coss}, respectively. If
$q=2$ or $T=\mathbb{F}_q^{*}$, the construction reduces, up to
equivalence, to the generalized Gabidulin code
$\mathcal{G}_{k,\sigma}$. These assertions, including the
inequivalence to the Otal--\"Ozbudak family and the puncturing relation,
are established in \cite{DGL}.

\section{The syndrome map}

In this section, let $n_i \geq 1$ for any $i \in \{1,\dots,\ell \}$ and $N=n_1 + \ldots + n_{\ell}$.
Thus a vector $\bold{z}\in\F_{q^m}^N$ is written as
$\bold{z}=(\boldsymbol{z}^{(1)}\mid\cdots\mid \boldsymbol{z}^{(\ell)})$, with $\boldsymbol{z}^{(i)}\in\F_{q^m}^{n_i}$, and
\[
\mathrm{wt}_{\mathrm{SR}} (\bold{z})=\sum_{i=1}^{\ell}
 \dim_{\F_q}\langle z^{(i)}_1,\ldots,z^{(i)}_{n_i}\rangle_{\F_q}.
\]

Fix an integer $2\leq \delta\leq N-1$.
Let $\bold{H}=(H_1\mid\cdots\mid H_\ell)\in\F_{q^m}^{(\delta+1)\times N}$ be a
full-rank matrix, with $H_i\in\F_{q^m}^{(\delta+1)\times n_i}$, and define
\[
 U_i=\operatorname{colsp}_{\F_q}(H_i)\leq\F_{q^m}^{\delta+1}.
\]

\begin{definition}
For $r\geq0$, define $\Omega_r(\mathcal{U})$, where
$\mathcal{U}=(U_1,\ldots,U_\ell)$, to be the subset of
$\PG(\delta,q^m)$ given by
\[
 \Omega_r(\mathcal{U})=
 \bigcup_{\substack{W_i\leq_{\F_q}U_i\\
                     \sum_i\dim_{\F_q}W_i\leq r+1}}
 \mathbb{P}\! \left(\langle W_1,\ldots,W_\ell\rangle_{\F_{q^m}}\right).
\]

\end{definition}
\medskip

\noindent Here, we adopt the convention: if $\mathcal{A}$ is a subset of $\F_{q^m}^{\delta+1}$, we denote by $\mathbb{P}(\mathcal{A})$ the set of all projective points whose representatives are non-zero vectors of $\mathcal{A}$. Moreover, 
$\mathbb{P}(\emptyset)=\mathbb P(\{\bold{0}\})=\emptyset$ and $\langle \bold{v} \rangle=\mathbb{P}(\{\bold{v}\})=\mathbb{P}(\langle \bold{v} \rangle_{\F_{q^m}})$ for a nonzero vector $\bold{v} \in \F_{q^m}^{\delta +1}$. 

\begin{remark} 
Note that $\Omega_{0}(\mathcal{U})=\cup_{i=1}^{\ell} \mathbb{P}( U_i )$, $\Omega_{\delta}(\mathcal{U})=\PG(\delta,q^m)$  and for any $1 \leq r \leq \delta$ we have $\Omega_{r-1}(\mathcal{U}) \subseteq \Omega_{r}(\mathcal{U})$. If $\ell=1$ and $\mathcal{U}=U_1$, the set $\Omega_r(\mathcal{U})$ is exactly the \textit{$r$-secant variety} of $\mathbb{P}(U_1 )$ defined in \cite{DGL}. 
\end{remark}
\medskip
Let
\[ \operatorname{syn}_\bold{H}:\bold{z} \in \F_{q^m}^N\longrightarrow \bold{H}\bold{z}^{\mathsf T} \in \F_{q^m}^{\delta+1},
\]
be the syndrome map. Let $K$ be the kernel of $\operatorname{syn}_\bold{H}$ and let $L$ be an $\F_{q^m}$-subspace of $\F_{q^m}^N$ such that
$\F_{q^m}^N=K \oplus L.
$

\begin{lem}\label{lem:syndrome}
For a vector $\bold{z}\in \F_{q^m}^N \setminus K$ and an integer $r\geq1$, one has
\[
 \mathrm{wt}_{\mathrm{SR}}(\bold{z})\leq r
 \quad\Longrightarrow\quad
 \langle\operatorname{syn}_\bold{H}(\bold{z})\rangle\in\Omega_{r-1}(\mathcal{U}).
\]
Conversely, every point of $\Omega_{r-1}(\mathcal{U})$ is represented by the syndrome of
some vector of $\F_{q^m}^N \setminus K$ with sum-rank weight at most $r$.
\end{lem}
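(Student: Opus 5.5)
The plan is to compute the syndrome block by block and to read off, for each block, an $\F_q$-subspace of $U_i$ whose dimension equals the rank of that block.

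\emph{Forward implication.} Write $\bold z=(\boldsymbol z^{(1)}\mid\cdots\mid\boldsymbol z^{(\ell)})$ and put $r_i=\dim_{\F_q}\langle z^{(i)}_1,\ldots,z^{(i)}_{n_i}\rangle_{\F_q}$, so that $\sum_i r_i=\mathrm{wt}_{\rm SR}(\bold z)\le r$.

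1. For each block choose an $\F_q$-basis $\gamma_{i1},\ldots,\gamma_{ir_i}$ of the $\F_q$-span of the entries of $\boldsymbol z^{(i)}$. Then $\boldsymbol z^{(i)}=(\gamma_{i1},\ldots,\gamma_{ir_i})A_i$ for some $A_i\in\F_q^{r_i\times n_i}$ of rank $r_i$.

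2. Expand the syndrome:
\[
\operatorname{syn}_{\bold H}(\bold z)=\sum_i H_i(\boldsymbol z^{(i)})^{\mathsf T}=\sum_i\sum_{j=1}^{r_i}\gamma_{ij}\,H_iA_i^{\mathsf T}e_j .
\]

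3. The columns $w_{ij}=H_iA_i^{\mathsf T}e_j$ are $\F_q$-combinations of columns of $H_i$, hence lie in $U_i$. Let $W_i=\langle w_{i1},\ldots,w_{ir_i}\rangle_{\F_q}\le U_i$; then $\dim W_i\le r_i$.

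4. By step 2 the syndrome lies in $\langle W_1,\ldots,W_\ell\rangle_{\F_{q^m}}$, and $\sum_i\dim W_i\le r=(r-1)+1$.

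5. Since $\bold z\notin K$, the syndrome is nonzero. Its projective point therefore lies in $\mathbb P(\langle W_1,\ldots,W_\ell\rangle_{\F_{q^m}})\subseteq\Omega_{r-1}(\mathcal U)$.

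\emph{Converse.} Take a point $P=\langle\bold v\rangle$ with $\bold v\in\langle W_1,\ldots,W_\ell\rangle_{\F_{q^m}}$, where $W_i\le U_i$ and $\sum\dim W_i\le r$.

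1. Choose $\F_q$-bases $w_{i1},\ldots,w_{is_i}$ of the $W_i$ and write $\bold v=\sum_{i,j}c_{ij}w_{ij}$ with $c_{ij}\in\F_{q^m}$.

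2. Each $w_{ij}\in U_i$ equals $H_i\mathbf u_{ij}$ for some $\mathbf u_{ij}\in\F_q^{n_i}$.

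3. Set $\boldsymbol z^{(i)}=\sum_j c_{ij}\mathbf u_{ij}^{\mathsf T}\in\F_{q^m}^{n_i}$. Then $\operatorname{syn}_{\bold H}(\bold z)=\bold v\neq\bold 0$, so $\bold z\notin K$.

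4. Each entry of $\boldsymbol z^{(i)}$ lies in $\langle c_{i1},\ldots,c_{is_i}\rangle_{\F_q}$, because the $\mathbf u_{ij}$ have entries in $\F_q$. Hence the block has $\F_q$-rank at most $s_i$, and $\mathrm{wt}_{\rm SR}(\bold z)\le\sum_i s_i\le r$.

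The only point needing care is that the coefficients $\gamma_{ij}$ and $c_{ij}$ lie in $\F_{q^m}$ while the $\mathbf u_{ij}$ and $A_i$ are over $\F_q$. This is exactly what makes the per-block rank match the $\F_q$-dimension of $W_i$. Beyond this, I expect no real obstacle.
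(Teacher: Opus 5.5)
Your proof is correct and follows the paper's argument essentially verbatim: your matrix $A_i$ just packages the paper's coefficient vectors $\boldsymbol{c}^{(i)}_j\in\F_q^{n_i}$ as its rows, giving $W_i\leq_{\F_q}U_i$ with $\dim_{\F_q}W_i\leq r_i$. The converse likewise lifts $\F_q$-bases of the $W_i$ through $H_i$ and notes that the block entries lie in the $\F_q$-span of the $\F_{q^m}$-coefficients, exactly as the paper does (the rank-$r_i$ property of $A_i$ is true but never needed).
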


\begin{proof} Let $\bold{z}=(\boldsymbol{z}^{(1)}\mid\cdots\mid \boldsymbol{z}^{(\ell)}) \in \F_{q^m}^N\setminus K$ be such that $\mathrm{wt}_{\mathrm{SR}}(\bold{z})\leq r$. For every $i\in\{1,\ldots,\ell\}$, write
\[ r_i= \dim_{\F_q} \langle z^{(i)}_1,\ldots,z^{(i)}_{n_i} \rangle_{\F_q}. \]
Then \[ \sum_{i=1}^{\ell}r_i=\mathrm{wt}_{\mathrm{SR}}(\bold{z})\leq r. \] Choose an $\F_q$-basis $(\alpha_{i1},\ldots,\alpha_{ir_i})$ of $ \langle z^{(i)}_1,\ldots,z^{(i)}_{n_i} \rangle_{\F_q}$. There exist $c_{kj}^{(i)}\in\F_q$ such that \[ z^{(i)}_k = \sum_{j=1}^{r_i}c_{kj}^{(i)}\alpha_{ij}, \qquad i=1,\ldots, \ell \text{ and } k=1,\ldots,n_i . \] Therefore, \[ \boldsymbol{z}^{(i)} = \sum_{j=1}^{r_i}\alpha_{ij}\boldsymbol{c}_j^{(i)}, \] where $\boldsymbol{c}_j^{(i)} = \left( c_{1j}^{(i)},\ldots,c_{n_i j}^{(i)} \right) \in\F_q^{n_i}$. It follows that \[ H_i(\boldsymbol{z}^{(i)})^{\mathsf T} = H_i \left( \sum_{j=1}^{r_i} \alpha_{ij}\boldsymbol{c}_j^{(i)} \right)^{\mathsf T} = \sum_{j=1}^{r_i} \alpha_{ij}H_i(\boldsymbol{c}_j^{(i)})^{\mathsf T}. \] Since $ H_i(\boldsymbol{c}_j^{(i)})^{\mathsf T}\in U_i, $ put 
$$ W_i = \left\langle H_i(\boldsymbol{c}_j^{(i)})^{\mathsf T} : j=1,\ldots,r_i \right\rangle_{\F_q}.$$ Then $W_i\leq_{\F_q}U_i$, 
$\dim_{\F_q}W_i\leq r_i$ and \[ H_i(\boldsymbol{z}^{(i)})^{\mathsf T} \in \langle W_i\rangle_{\F_{q^m}}. \] Consequently, \[ \operatorname{syn}_H(\bold{z}) = \bold{H}\bold{z}^{\mathsf T} = \sum_{i=1}^{\ell}H_i(\boldsymbol{z}^{(i)})^{\mathsf T} \in \langle W_1,\ldots,W_\ell\rangle_{\F_{q^m}}. \] Moreover, \[ \sum_{i=1}^{\ell}\dim_{\F_q}W_i \leq \sum_{i=1}^{\ell}r_i \leq r. \] Since $\bold{z}\notin K$, we have $\operatorname{syn}_{\bold{H}}(\bold{z})\neq0$, and hence \[ \langle\operatorname{syn}_H(\bold{z})\rangle \in \Omega_{r-1}(\mathcal U). \] Conversely, let $\langle \bold{y} \rangle \in\Omega_{r-1}(\mathcal U)$. Then there exist $\F_q$-subspaces \[ W_i\leq_{\F_q}U_i, \qquad i=1,\ldots,\ell, \] such that \[ \sum_{i=1}^{\ell}\dim_{\F_q}W_i\leq r \] and $\bold{y}\in \langle W_1,\ldots,W_\ell\rangle_{\F_{q^m}}$. Let $ r_i=\dim_{\F_q}W_i$ and let $\left  (\boldsymbol{w}^{(i)}_{1},\ldots,\boldsymbol{w}^{(i)}_{r_i}\right )$ be an $\F_q$-basis of $W_i$. 
Since $\bold{y}$ belongs to the $\F_{q^m}$-span of $W_1,\ldots,W_\ell$, there exist $\gamma_{ij}\in\F_{q^m}$ such that \[ \bold{y} = \sum_{i=1}^{\ell} \sum_{j=1}^{r_i} \gamma_{ij}\boldsymbol{w}^{(i)}_{j}. \] Since $\boldsymbol{w}^{(i)}_{j}\in U_i$, there exists $\boldsymbol{c}_{j}^{(i)}\in\F_q^{n_i}$ such that \[ \boldsymbol{w}_{j}^{(i)}=H_i\left (\boldsymbol{c}_{j}^{(i)}\right )^{\mathsf T}. \] Hence \[ \bold{y} = \sum_{i=1}^{\ell} H_i \left( \sum_{j=1}^{r_i}\gamma_{ij}\boldsymbol{c}^{(i)}_{j} \right)^{\mathsf T}. \] For every $i$, put \[ \boldsymbol{z}^{(i)} = \sum_{j=1}^{r_i}\gamma_{ij}\boldsymbol{c}_{j}^{(i)} \in\F_{q^m}^{n_i} \] and define \[ \bold{z}=(\boldsymbol{z}^{(1)}\mid\cdots\mid \boldsymbol{z}^{(\ell)}) \in\F_{q^m}^N. \] Then \[ \operatorname{syn}_{\bold{H}}(\bold{z}) = \bold{H}\bold{z}^{\mathsf T} = \bold{y}. \] 
Since $\bold{y}\neq0$, it follows that $\bold{z}\notin K$. Write \[ \boldsymbol{c}^{(i)}_{j} = \left( c_{j1}^{(i)},\ldots,c_{jn_i}^{(i)} \right), \qquad c_{jk}^{(i)}\in\F_q. \] For every $k\in\{1,\ldots,n_i\}$, we have $ z^{(i)}_k = \sum_{j=1}^{r_i} \gamma_{ij}c_{jk}^{(i)}. $ Therefore, $ z^{(i)}_k \in \left\langle \gamma_{i1},\ldots,\gamma_{ir_i} \right\rangle_{\F_q}, $ and hence \[ \left\langle z^{(i)}_1,\ldots,z^{(i)}_{n_i} \right\rangle_{\F_q} \leq \left\langle \gamma_{i1},\ldots,\gamma_{ir_i} \right\rangle_{\F_q}. \] It follows that \[ \dim_{\F_q} \left\langle z^{(i)}_1,\ldots,z^{(i)}_{n_i} \right\rangle_{\F_q} \leq r_i. \] Consequently, \[ \mathrm{wt}_{\mathrm{SR}}(\bold{z}) \leq \sum_{i=1}^{\ell}r_i = \sum_{i=1}^{\ell}\dim_{\F_q}W_i \leq r. \] The claim follows.
\end{proof}

\begin{defn}
A set $\mathcal E\subseteq\PG(\delta,q^m)$ is an \textit{exterior set} with respect to
$\Omega_{r}(\mathcal{U})$, $0\leq r\leq\delta$, if every line joining two distinct points of
$\mathcal E$ is disjoint from $\Omega_{r}(\mathcal{U})$. 
\end{defn}

\section{The geometric setting}\label{geo:setting}

Throughout this section, let $1 \leq n_i \leq m$, $i \in \{1,\ldots,\ell \}$, put $N=n_1+\ldots +n_{\ell}$ and assume $2\leq\delta\leq N-1$. Fix $\sigma:x \mapsto x^{q^s},\gcd(s,m)=1,$
and let $\F_{q^m}[X;\sigma]$ be the skew-polynomial ring. Let $\Lambda=\{\lambda_1,\ldots,\lambda_\ell\}\subseteq\F_q^*$ consisting of pairwise distinct elements. Choose $a_i\in\F_{q^m}^*$ with
$N_m(a_i)=\lambda_i$, and choose an $\F_q$-basis
$(\beta_{i1},\ldots,\beta_{im})$ of $\F_{q^m}$ for each $i\in \{1,\ldots,\ell\}$.  Define
\[
 \boldsymbol{v}_{a_i}(x)=
 \bigl(x,N_1(a_i)\sigma(x),\ldots,N_\delta(a_i)\sigma^\delta(x)\bigr)
 \in\F_{q^m}^{\delta+1}
\]
and
\begin{equation}\label{H}
 H_i=
 \bigl(\boldsymbol{v}^{\mathsf T}_{a_i}(\beta_{i1}), \ldots, \boldsymbol{v}^{\mathsf T}_{a_i}(\beta_{in_i})\bigr),
 \qquad
 \bold{H}=(H_1\mid\cdots\mid H_\ell) \in \F_{q^m}^{(\delta+1) \times N} .
\end{equation}
The row space of $\bold{H}$ is an LRS code of dimension $\delta+1$ and length
$N$, hence it is MSRD, see \cite{MartinezPenas, Neri}.  In particular, $\bold{H}$ has
full row rank.  In the following we will consider
\[
        \mathcal{U}_\Lambda=(U_{\lambda_1},\ldots,U_{\lambda_\ell}),
        \qquad U_{\lambda_i}=\operatorname{colsp}_{\F_q}(H_i).
\]
Note that, for any $i \in \{1,\ldots,\ell\}$, the map $\boldsymbol{v}_{a_i}: x \in \F_{q^m} \mapsto \boldsymbol{v}_{a_i}(x) \in U_{\lambda_i}$ is an $\F_q$-linear isomorphism.

In $\PG(\delta,q^m)$ consider the $\sigma$-rational normal curve
\[
 \mathcal{X}=\{P(t):t\in\F_{q^m}\}\cup\{A\},
 \qquad
 P(t)=(1,t,N_2(t),\ldots,N_\delta(t)),
\]
where
\begin{equation}\label{puntiAB}
 A= \langle \bold{e}_\delta \rangle ,\qquad B=P(0)=\langle \bold{e}_0 \rangle,
\end{equation}
where $ \bold{e}_i$ is the $i$-th standard vector $\F_{q^m}^{\delta+1}$, $i=0,\ldots,\delta$. The rational points of $\mathcal{X}$ are partitioned as
\[
 \mathcal{X}=\{A,B\}\ \dot\cup\!
       \bigcup_{b\in\F_q^*}\mathcal{X}_b,
 \qquad
 \mathcal{X}_b=\{P(t):N_m(t)=b\}.
\]
Each $\mathcal{X}_b$ is a scattered $\Fq$-linear set of rank $m$, and
$L_{U_{\lambda_i}}\subseteq\mathcal{X}_{\lambda_i}$; see
\cite[Proposition 3.3]{DonatiDurante} and
\cite[Lemma 2.5 and Theorem 2.8]{ABCS}.

The line $AB$ is partitioned as
\[
 AB=\{A,B\}\ \dot\cup\!
       \bigcup_{a\in\F_q^*}\mathcal{J}_a,
\]
where
\[
 \mathcal{J}_a=
 \left\{
 (1,0,\ldots,0,(-1)^{\delta+1}u): N_m(u)=a
 \right\}.
\]
Every $\mathcal{J}_a$ is a scattered linear set of rank $m$ of pseudoregulus type, see \cite[Remark 2.2]{DonDur0} and \cite{Lun}. 

\begin{lem}\label{lem:roots-curve}
Let $f(X)=\sum_{j=0}^{\delta}h_jX^j$ be nonzero and $\lambda \in \F_{q}^*$, and let
$\mathcal H_f$ be the hyperplane of $\PG(\delta,q^m)$ defined by the equation
\[
        h_0X_0+\cdots+h_\delta X_\delta=0.
\]
Then:
\begin{enumerate}
\item $P(t)\in\mathcal H_f$ if and only if $f(t)=0$;
\item if $P(t),P(u)\in\mathcal{X}_b$ are distinct and both belong to
      $\mathcal H_f$, then $d_b(f)\geq 2$;
\item if $W\leq_{\F_q}U_{\lambda}$ and $\mathbb{P}(\langle W\rangle_{\F_{q^m}})\subseteq\mathcal H_f$, then
      $d_\lambda(f)\geq\dim_{\F_q}W$.
\end{enumerate}
\end{lem}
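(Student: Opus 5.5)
The plan is to check each item directly from the definitions, using the right evaluation \eqref{eva} and the operators $f_a$ attached to a norm class.

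\emph{Item 1.} Substitute $P(t)=(1,t,N_2(t),\ldots,N_\delta(t))$ into the equation of $\mathcal H_f$. Since $N_0(t)=1$ and $N_1(t)=t$, we get $\sum_j h_jN_j(t)=f(t)$. So $P(t)\in\mathcal H_f$ exactly when $f(t)=0$, and no further argument is needed.

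\emph{Item 2.} The aim is to turn roots in $\mathcal X_b$ into kernel vectors of $f_a$, where $N_m(a)=b$. The argument of Remark \ref{rem:well-defined} gives, for $t=a\sigma(w)/w$ (which covers every $t$ with $N_m(t)=b$, by Hilbert 90),
\[
f(t)=\sum_j h_jN_j(a)\frac{\sigma^j(w)}{w}=\frac{1}{w}f_a(w).
\]
Hence $f(t)=0$ if and only if $w\in\ker f_a$. The map $w\mapsto a\sigma(w)/w$ from $\F_{q^m}^*$ onto the elements of norm $b$ has fibres $\F_q^*w$. Two distinct roots $t\neq u$ in $\mathcal X_b$ therefore come from $w,w'\in\ker f_a$ that are not $\F_q$-proportional. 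This gives $\dim_{\F_q}\ker f_a\geq 2$, that is, $d_b(f)\geq 2$.

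\emph{Item 3.} Here we work with $U_\lambda=\{\boldsymbol v_a(x):x\in\F_{q^m}\}$, taking $a=a_i$ with $N_m(a)=\lambda$. Because $\boldsymbol v_a$ is an $\F_q$-linear isomorphism, we can write $W=\boldsymbol v_a(W')$ with $\dim_{\F_q}W'=\dim_{\F_q}W$. If $\mathbb P(\langle W\rangle_{\F_{q^m}})\subseteq\mathcal H_f$, then every nonzero $\boldsymbol v_a(x)$ with $x\in W'$ satisfies the hyperplane equation. That equation reads $\sum_j h_jN_j(a)\sigma^j(x)=f_a(x)=0$. Therefore $W'\subseteq\ker f_a$, and $d_\lambda(f)\geq\dim W$.

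One technical point needs care: the vectors $\boldsymbol v_a(x)$ have length $\delta+1$, whereas $f_a$ is defined using all coefficients of $f$. These agree because $f$ has degree at most $\delta$, so no reduction modulo $\sigma^m$ is needed. Apart from this, the only non-routine step is the parametrization in item 2, which relies on surjectivity of the map $w\mapsto a\sigma(w)/w$ onto the norm-$b$ class (Hilbert 90).
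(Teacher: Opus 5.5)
Your proposal is correct and follows essentially the same route as the paper. Item 1 is direct substitution. Item 2 uses the parametrization $t=a\sigma(w)/w$ and the identity $N_j(t)=N_j(a)\sigma^j(w)/w$, which turn distinct roots in $\mathcal X_b$ into $\F_q$-independent vectors of $\ker f_a$. Item 3 identifies $\boldsymbol v_a^{-1}(H_f\cap U_\lambda)$ with $\ker f_a$, where $H_f$ is the vector space underlying $\mathcal H_f$.

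One minor point: when $n_i<m$, $U_{\lambda_i}=\operatorname{colsp}_{\F_q}(H_i)$ is the image under $\boldsymbol v_{a_i}$ of an $n_i$-dimensional $\F_q$-subspace, not of all of $\F_{q^m}$. This does not affect your argument, which only needs injectivity of $\boldsymbol v_a$.
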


\begin{proof}
\textit{Point 1.} follows directly from the definition of right evaluation. Indeed, $P(t)=(1,t,N_2(t),\ldots,N_{\delta}(t))$ for some $t \in \F_{q^m}$ and $f(t)=\sum_{i=0}^{\delta}h_i N_i(t)$ (cf. \eqref{eva}). 

Choose
$a\in\F_{q^m}^*$ with $N_m(a)=b$. Any $t$ with norm $b$ has the
form $t=a\sigma(x)/x$ for some $x\in\F_{q^m}^*$.  The identity $N_j(t)=N_j(a)\sigma^j(x)/x$ 
shows that $f(t)=0$ if and only if $x\in\ker(f_a)$. Now, let $u=a \sigma(y)/y$.  Since $P(t), P(u)$ are distinct points of $\mathcal{X}_b$, then $x$ and $y$ are $\F_q$-independent, yielding \textit{Point 2}.

Finally, let $a \in \F_{q^m}^*$ such that $N_m(a)=\lambda$ and let $\boldsymbol{v}_a: \F_{q^m} \longrightarrow U_{\lambda}$ be an $\F_q$-linear isomorphism defined as
$$
\boldsymbol{v}_a(x)=\left( x, N_1(a)\sigma(x), \ldots, N_{\delta}(a)\sigma^{\delta}(x) \right), \qquad x \in \F_{q^m}.
$$

Let $H_f$ be the underlying vector space of $\mathcal{H}_f$. Then, $\varphi_a(x) \in H_f$ if, and only if, $h_0x+h_1N_1(a)\sigma(x)+\ldots+h_{\delta}N_{\delta}(a)\sigma^{\delta}(x)=0$. This is equivalent to $f_a(x)=0$, i.e., $x\in\ker f_a$. Thus, $\boldsymbol{v}_a(\ker f_a)=H_f\cap U_\lambda\supseteq W$ and $\dim_{\F_q} \boldsymbol{v}_a(\ker f_a)=\dim_{\F_q}\ker f_a\geq\dim_{\F_q}W$, proving \textit{Point 3}.

\end{proof}

\begin{lem}\label{lem:independent-spans}
Let $\mu_1,\ldots,\mu_s\in\F_q^*$ be pairwise distinct, choose
$c_k\in\F_{q^m}^*$ with $\operatorname{N}_{q^m/q}(c_k)=\mu_k$ and put
$$U_{\mu_k}=\{\boldsymbol{v}_{c_k}(x): x \in \F_{q^m}\}.$$ Let $W_k\leq_{\F_q}U_{\mu_k}$and put
$r=\sum_{k=1}^{s}\dim_{\F_q}W_k$. If $r\leq\delta+1$, then
\[
\dim_{\F_{q^m}}\langle W_1,\ldots,W_s\rangle_{\F_{q^m}}=r.
\]
\end{lem}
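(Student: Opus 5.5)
We argue by duality with hyperplanes and reduce to the root-space bound of Proposition~\ref{prop:rootspace}. Since each $W_k$ is spanned over $\F_q$ by $\dim_{\F_q}W_k$ vectors, we always have $\dim_{\F_{q^m}}\langle W_1,\ldots,W_s\rangle_{\F_{q^m}}\leq r$. So it suffices to show that this span is not contained in any $\F_{q^m}$-subspace of dimension $r-1$. Equivalently, we show that the $\F_{q^m}$-space of linear functionals on $\F_{q^m}^{\delta+1}$ vanishing on all $W_k$ has dimension at most $\delta+1-r$.

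A linear functional is given by coefficients $(h_0,\ldots,h_\delta)$, which we identify with the skew polynomial $f(X)=\sum_{j=0}^{\delta}h_jX^j$ of degree at most $\delta$. By the computation in the proof of Lemma~\ref{lem:roots-curve}, Point 3, the functional vanishes on $\boldsymbol{v}_{c_k}(x)$ if and only if $x\in\ker f_{c_k}$. Hence the functional vanishes on $W_k$ if and only if $\boldsymbol{v}_{c_k}^{-1}(W_k)\subseteq\ker f_{c_k}$. Write $V_k=\boldsymbol{v}_{c_k}^{-1}(W_k)\leq_{\F_q}\F_{q^m}$, which has $\F_q$-dimension $r_k=\dim_{\F_q}W_k$, since $\boldsymbol{v}_{c_k}$ is injective ($x$ is its first coordinate). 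The annihilator we must bound is therefore
\[
\mathcal A=\{f:\deg f\leq\delta,\ V_k\subseteq\ker f_{c_k}\ \text{for all } k\}.
\]

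Consider the $\F_{q^m}$-linear map $f\mapsto\bigl(f_{c_k}|_{V_k}\bigr)_{k}$ from the $(\delta+1)$-dimensional space of skew polynomials of degree at most $\delta$. Its kernel is $\mathcal A$. The key step is to show that $\dim\mathcal A\leq\delta+1-r$ using Proposition~\ref{prop:rootspace}.

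The cleanest route is to exhibit a polynomial that generates $\mathcal A$. The set $I=\{f\in\F_{q^m}[X;\sigma]: V_k\subseteq\ker f_{c_k}\ \forall k\}$ is a left ideal. This holds because $(gf)_c=g_c\circ f_c$, which follows from $N_j(c)\sigma^j(N_i(c))=N_{i+j}(c)$. Hence $I=\F_{q^m}[X;\sigma]G$ for a monic minimal polynomial $G$. We expect $\deg G=r$, and then $\mathcal A$ consists of the elements $gG$ with $\deg g\leq\delta-r$, a space of dimension exactly $\delta+1-r$, which gives the claim.

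The main obstacle is proving $\deg G=r$. One inequality follows from Proposition~\ref{prop:rootspace}: since the norms $\mu_k$ are pairwise distinct, $\deg G\geq\sum_k d_{\mu_k}(G)\geq\sum_k r_k=r$, where we use Remark~\ref{rem:well-defined} to interpret $d_{\mu_k}$ via the representative $c_k$.

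For the reverse inequality we build, by induction, a polynomial of degree $r$ lying in $I$. We add one basis vector $x$ of some $V_k$ at a time. If the current $G$ does not kill $x$, set $y=G_{c_k}(x)\neq0$ and replace $G$ by $(X-c_k\sigma(y)/y)\,G$. The linear factor has degree $1$ and kills $y$ in class $c_k$: indeed $(X-e)_{c_k}(y)=c_k\sigma(y)-ey$, which vanishes for $e=c_k\sigma(y)/y$. Because the kernels only grow under left multiplication, everything already killed stays killed, so after processing all basis vectors we obtain an element of $I$ of degree at most $r$. Combining the two bounds gives $\deg G=r$, which completes the argument.

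The hypothesis $r\leq\delta+1$ is needed for this final count: it guarantees that $G$, of degree $r$, has multiples of degree at most $\delta$, so that $\dim\mathcal A=\delta+1-r\geq0$.
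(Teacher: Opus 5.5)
Your proof is correct. The key mechanism is the same as in the paper: a nonzero skew polynomial $f$ with $V_k=\boldsymbol{v}_{c_k}^{-1}(W_k)\subseteq\ker f_{c_k}$ for all $k$ must satisfy $\deg f\geq r$, by Proposition~\ref{prop:rootspace}.

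The paper applies this once, to the polynomial of degree at most $r-1$ obtained from a left-kernel vector of the $r\times r$ matrix of first $r$ coordinates. In your dual language this is just the observation $\mathcal A\cap\{f:\deg f<r\}=\{0\}$. Since that space has dimension $r$ inside the $(\delta+1)$-dimensional space of polynomials of degree at most $\delta$, it already gives $\dim_{\F_{q^m}}\mathcal A\leq\delta+1-r$, with no ideal theory needed.

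Your route through the left ideal $I=\F_{q^m}[X;\sigma]G$ is valid: the identity $(gf)_c=g_c\circ f_c$ holds, and right division by the monic $G$ works. For the lemma, however, only the inequality $\deg G\geq r$ is used, because the reverse bound $\dim\leq r$ is trivial. The inductive construction of a degree-$r$ element of $I$ is therefore superfluous here.

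What your approach buys is a sharper conclusion. The annihilator of $\langle W_1,\ldots,W_s\rangle_{\F_{q^m}}$ is exactly $\{gG:\deg g\leq\delta-r\}$, where $G$ is the minimal skew polynomial vanishing on all the $W_k$ and $\deg G=r$. The paper's determinant argument is shorter but yields only the dimension count it needs.
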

\begin{proof}
The assertion is immediate for $r=0$. Otherwise, choose an $\F_q$-basis
of each $W_k$, say $(\boldsymbol{v}_{c_k}(x_{k1}),\ldots,\boldsymbol{v}_{c_k}(x_{kr_k}))$,
where $r_k=\dim_{\F_q}W_k$. Form the $r\times r$ matrix $M$ whose columns are
the first $r$ coordinates of the vectors $\boldsymbol{v}_{c_k}(x_{kj})$.
If this matrix were singular, there exists a non-zero vector $\boldsymbol{h}=(h_0,h_1,\ldots,h_{r-1})$ such that $\boldsymbol{h}M=\bf{0}$. In particular, for any column of the matrix $M$ we have
$$
\sum_{i=0}^{r-1} h_i N_i(c_k)\sigma^i(x_{kj})=0,
$$
i.e. $f_{c_k}(x_{kj})=0$ with $f_{c_k}(x)=\sum_{i=0}^{r-1}h_i N_i(c_k) \sigma^i(x)$. Since $\boldsymbol{v}_{c_k}$ is an $\F_q$-linear isomorphism between $\F_{q^m}$ and $U_{\mu_k}$, we have that $x_{k1},\ldots,x_{kr_k}$ are $\F_q$-independent and thus $d_{\mu_k}(f)= \dim_{\F_q} \ker f_{c_k} \geq r_k$. Let $f(X)=\sum_{i=0}^{r-1}h_iX^i$, then by Proposition \ref{prop:rootspace} we get
\[
 r=\sum_{k=1}^{s}r_k\leq\sum_{k=1}^{s}d_{\mu_k}(f)
 \leq\deg(f)\leq r-1,
\]
a contradiction. Thus the chosen columns are independent over
$\F_{q^m}$, which proves the claim.
\end{proof}

\begin{remark}
We shall use the elementary fact that right multiplication by $X^i$ does
not change the root-space dimensions in nonzero norm classes:
\begin{equation}\label{eq:right-X}
        d_\lambda(gX^i)=d_\lambda(g),\qquad \lambda\in\F_q^*.
\end{equation}
This follows because the operator represented by $X^i$ is invertible.
\end{remark}

Let $\Lambda=\{\lambda_1,\ldots,\lambda_\ell\} \subseteq \F_q^*$ be a set of pairwise distinct elements. For some positive integer $r$, define
 \[
 \Pi_{r}(\Lambda)=
 \left\{
 \prod_{i=1}^{\ell}\lambda_i^{e_i}:
 0\leq e_i\leq n_i,\quad \sum_{i=1}^{\ell}e_i=r
 \right\}.
\]
Let $\F_q^*=S\ \dot\cup\ T$
be a partition of $\F_q^*$ such that
\begin{equation}\label{eq:switch-conditions} \tag{$\diamond$}
        \Lambda\subseteq T,
        \qquad
        S\Pi_{\delta-1}(\Lambda)\subseteq S.
\end{equation}
We define
\begin{equation}\label{eq:switched-set}
 \mathcal E_{S,T}=
 \{A,B\}\cup
 \bigcup_{b\in S}\mathcal{X}_b\cup
 \bigcup_{a\in T}\mathcal{J}_a.
\end{equation}
Since all components have $(q^m-1)/(q-1)$ points and are pairwise disjoint,
\begin{equation}\label{eq:size-E}
        |\mathcal E_{S,T}|=q^m+1.
\end{equation}

\begin{remark}\label{Scoset}
Let $G=\langle\Pi_{\delta-1}(\Lambda)\rangle$ be the multiplicative subgroup of $\F_q^*$. For any
$\pi\in\Pi_{\delta-1}(\Lambda)$, the inclusion  $S\pi\subseteq S$
is equivalent to $ S\pi=S$, because multiplication by $\pi$ is bijective. 
Consequently, the second condition in \eqref{eq:switch-conditions}
holds if and only if $S$ is a union of cosets of $G$.
Both conditions hold if and only if $S$ is a union of cosets of $G$
contained in $\F_q^*\setminus\Lambda G$ and $T=\F_q^*\setminus S$.
\end{remark}

\begin{proposition}\label{prop:admissible-partition}
Let $G=\langle\Pi_{\delta-1}(\Lambda)\rangle$. There exists a partition
$\F_q^*=S\,\dot\cup\,T$ with $S\neq\emptyset$ satisfying
\eqref{eq:switch-conditions} if and only if $\Lambda G\neq\F_q^*$.
\end{proposition}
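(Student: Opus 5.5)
The plan is to reduce everything to Remark \ref{Scoset}, which already rephrases Condition \eqref{eq:switch-conditions} group-theoretically. That remark says a partition $\F_q^*=S\,\dot\cup\,T$ satisfies \eqref{eq:switch-conditions} exactly when $S$ is a union of cosets of $G=\langle\Pi_{\delta-1}(\Lambda)\rangle$ contained in $\F_q^*\setminus\Lambda G$ and $T=\F_q^*\setminus S$. So the statement becomes: a nonempty union of $G$-cosets disjoint from $\Lambda G$ exists if and only if $\Lambda G\neq\F_q^*$.

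First we note that $\Pi_{\delta-1}(\Lambda)$ is nonempty. We need exponents $0\le e_i\le n_i$ with $\sum e_i=\delta-1$, and these exist since $0\le\delta-1\le N-2<N$. Hence $G$ is a well-defined subgroup of the cyclic group $\F_q^*$.

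Next we observe that $\Lambda G=\bigcup_{\lambda\in\Lambda}\lambda G$ is itself a union of cosets of $G$. Its complement $\F_q^*\setminus\Lambda G$ is therefore also a union of $G$-cosets, since the cosets partition $\F_q^*$.

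For the two directions:
\begin{itemize}
\item[($\Leftarrow$)] If $\Lambda G\neq\F_q^*$, take $S=\F_q^*\setminus\Lambda G$, which is nonempty, and $T=\Lambda G$. Then $S$ is a union of cosets of $G$ avoiding $\Lambda G$, so Remark \ref{Scoset} gives \eqref{eq:switch-conditions}. We can also check this directly: $\Lambda\subseteq\Lambda G=T$ because $1\in G$, and $S\pi\subseteq S$ for every $\pi\in\Pi_{\delta-1}(\Lambda)\subseteq G$ because $S$ is $G$-stable.
\item[($\Rightarrow$)] If some partition with $S\neq\emptyset$ satisfies \eqref{eq:switch-conditions}, then by Remark \ref{Scoset} we have $\emptyset\neq S\subseteq\F_q^*\setminus\Lambda G$, so $\Lambda G\neq\F_q^*$.
\end{itemize}

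The only delicate point is justifying Remark \ref{Scoset} itself. Stability of $S$ under each generator $\pi$ of $G$ must extend to stability under all of $G$. This works because $S\pi=S$ by finiteness, which then gives $S\pi^{-1}=S$, and closure under products follows. Once $S$ is $G$-stable, the condition $\Lambda\cap S=\emptyset$ upgrades to $\Lambda G\cap S=\emptyset$: if $\lambda g\in S$ with $g\in G$, then $\lambda=(\lambda g)g^{-1}\in S$, a contradiction. I would include this short argument if the remark is not considered already proved.
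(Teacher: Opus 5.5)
Your proposal is correct and follows the paper's proof. Both directions go through Remark \ref{Scoset}, and the converse uses the same witness $T=\Lambda G$, $S=\F_q^*\setminus\Lambda G$. Your extra step also correctly fills in what the remark only asserts: finiteness gives $S\pi=S$, hence $S$ is $G$-stable, hence $S\cap\Lambda G=\emptyset$.
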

\begin{proof}
If such a partition exists, Remark \ref{Scoset} gives
$\emptyset\neq S\subseteq\F_q^*\setminus\Lambda G$, so
$\Lambda G\neq\F_q^*$. Conversely, if $\Lambda G\neq\F_q^*$, consider
$T=\Lambda G$ and $S=\F_q^*\setminus T$. Then $S\neq\emptyset$,
$\Lambda\subseteq T$, and $S$ is a union of cosets of $G$.
\end{proof}

\begin{lem}\label{lem:switching-group}
We have
\[
 G=\left\langle\lambda_1^{\delta -1},
 \lambda_2/\lambda_1,\ldots,\lambda_\ell/\lambda_1\right\rangle,
 \qquad \Lambda G=\lambda_1G.
\]
In particular, a non-empty set $S$ satisfying Condition \eqref{eq:switch-conditions}
exists if and only if $G$ is a proper subgroup of $\F_q^*$.
If $1\in\Lambda$, then $G=\langle\Lambda\rangle$.
\end{lem}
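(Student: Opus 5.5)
The plan is to prove the displayed description of $G$ by double inclusion. Throughout, I view elements of $\Pi_{\delta-1}(\Lambda)$ as products $\pi_{\boldsymbol e}=\prod_i\lambda_i^{e_i}$, indexed by exponent vectors $\boldsymbol e=(e_1,\ldots,e_\ell)$ with $0\leq e_i\leq n_i$ and $\sum_i e_i=\delta-1$. Write $H=\langle\lambda_1^{\delta-1},\lambda_2/\lambda_1,\ldots,\lambda_\ell/\lambda_1\rangle$.

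\emph{Step 1: each ratio $\lambda_i/\lambda_j$ lies in $G$.} For $i\neq j$ I will exhibit an admissible $\boldsymbol e$ with $e_j\geq1$ and $e_i\leq n_i-1$. The vectors $\boldsymbol e'$ obtained by decreasing $e_j$ by one and increasing $e_i$ by one is then also admissible, and $\pi_{\boldsymbol e'}/\pi_{\boldsymbol e}=\lambda_i/\lambda_j\in G$. To find such an $\boldsymbol e$, set $e_j=1$ and $e_i=0$, and then distribute the remaining $\delta-2$ units greedily among the coordinates, respecting the caps $n_k$ (with cap $n_j$ at $j$ and $n_i-1$ at $i$). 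This works because the total capacity available is $N-1\geq\delta-1$, which holds since $\delta\leq N-1$. The requirement $e_j\geq1$ is possible since $\delta-1\geq1$ and $n_j\geq1$. This counting with the caps $n_i$ is the only point needing care, and it is where the hypothesis $2\leq\delta\leq N-1$ is used.

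\emph{Step 2: $G=H$.} Every $\pi_{\boldsymbol e}$ can be rewritten as $\pi_{\boldsymbol e}=\lambda_1^{\delta-1}\prod_{i\geq2}(\lambda_i/\lambda_1)^{e_i}$, using $\sum_i e_i=\delta-1$. Hence $\Pi_{\delta-1}(\Lambda)\subseteq H$, and so $G\subseteq H$. Conversely, Step 1 gives $\lambda_i/\lambda_1\in G$. Picking any single $\pi_{\boldsymbol e}\in G$, the same identity yields $\lambda_1^{\delta-1}=\pi_{\boldsymbol e}\prod_{i\geq2}(\lambda_i/\lambda_1)^{-e_i}\in G$, so $H\subseteq G$. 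The same Step 1 then gives $\lambda_iG=\lambda_1(\lambda_i/\lambda_1)G=\lambda_1G$ for every $i$, hence $\Lambda G=\lambda_1G$.

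\emph{Step 3: consequences.} By Proposition \ref{prop:admissible-partition}, a non-empty admissible $S$ exists if and only if $\Lambda G\neq\F_q^*$, that is, $\lambda_1G\neq\F_q^*$. A coset of $G$ equals $\F_q^*$ exactly when $G=\F_q^*$, so this condition is equivalent to $G$ being proper.

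Finally, suppose $1\in\Lambda$. Nothing in Steps 1 and 2 distinguishes the index $1$, so the argument works with any $\lambda_j$ in place of $\lambda_1$. Choosing $\lambda_j=1$ gives $G=\langle 1,\lambda_i/1:\ i\neq j\rangle=\langle\Lambda\rangle$.
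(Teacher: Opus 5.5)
Your proof is correct and follows the paper's argument essentially step by step. You exhibit an admissible exponent vector with $e_j\geq1$ and $e_i\leq n_i-1$, shift one unit to get $\lambda_i/\lambda_j\in G$, rewrite each product as $\lambda_1^{\delta-1}\prod_{i\geq2}(\lambda_i/\lambda_1)^{e_i}$, conclude via Proposition~\ref{prop:admissible-partition}, and treat $1\in\Lambda$ by relabelling. Your greedy count makes explicit the feasibility that the paper only asserts; for $\ell=1$, where Step 1 is vacuous, the admissible vector needed in Step 2 exists because $\delta-1\leq n_1$.
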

\begin{proof}
For $\ell=1$, the formulas follow directly from the definition.
Suppose $\ell\geq2$. For any distinct indices $i,j  \in \{1,\ldots,\ell\}$, choose integers
$e_1,\ldots,e_\ell$ with sum $\delta -1$ such that  $e_i\leq n_i-1$, $e_j\geq1$ and
$0\leq e_k\leq n_k$ for any $k$ distinct from $i$ and $j$.
Such a choice is possible because $1\leq \delta -1\leq N-2$; under these
constraints the minimum and maximum possible sums are $1$ and $N-1$.
If $e=(e_1,\ldots,e_{\ell})$ is an admissible exponent vector in the definition of $\Pi_{\delta-1}(\Lambda)$, then $e+\varepsilon_i-\varepsilon_j$,  where $\varepsilon_i$ denotes the $i$-th unit vector, is admissible as well.

Taking the ratio of the corresponding products gives
$\lambda_i/\lambda_j\in G$.
Every admissible product has the form
\[
 \prod_{i=1}^{\ell}\lambda_i^{e_i}
 =\lambda_1^{\delta-1}\prod_{i=2}^{\ell}(\lambda_i/\lambda_1)^{e_i}.
\]
Since an admissible exponent vector exists, this identity also gives
$\lambda_1^{\delta -1}\in G$. It proves the asserted description of $G$.
All elements of $\Lambda$ lie in the same coset $\lambda_1G$, so
$\Lambda G=\lambda_1G$. The existence assertion follows from
Proposition \ref{prop:admissible-partition}. If $1\in\Lambda$, putting $\lambda_1=1$ we obtain $G=\langle\Lambda\rangle$.
\end{proof}

\begin{lem}
\label{lem:points-exterior}
 Every point of $\mathcal E_{S,T}$ lies outside
$\Omega_{\delta-2}(\mathcal{U}_\Lambda)$.
\end{lem}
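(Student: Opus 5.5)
Argue by contradiction, point by point, translating membership in $\Omega_{\delta-2}(\mathcal U_\Lambda)$ into a hyperplane condition and then using root-space bounds. Let $P$ be a point of $\mathcal E_{S,T}$ and suppose $P\in\mathbb P(\langle W_1,\ldots,W_\ell\rangle_{\F_{q^m}})$ with $W_i\leq_{\F_q}U_{\lambda_i}$ and $r:=\sum_i\dim_{\F_q}W_i\leq\delta-1$. By Lemma \ref{lem:independent-spans} this span $V$ has $\F_{q^m}$-dimension $r$. Since $P\in V$, the space $\langle V,P\rangle$ has dimension at most $r\le\delta-1$. Hence it lies in at least a two-dimensional family of hyperplanes, and we may impose one further linear condition on the hyperplane. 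Write the hyperplane as $\mathcal H_f$ with $f=\sum_{j=0}^{\delta}h_jX^j\neq0$. By Lemma \ref{lem:roots-curve}(3), $\sum_{\lambda\in\Lambda}d_\lambda(f)\ge r$.

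\emph{Points $A$ and $B$.} Impose the extra condition that the hyperplane contains $A$, i.e.\ $h_\delta=0$. Then $B\in\mathcal H_f$ gives $h_0=0$. Thus $f=gX$ with $\deg g\le\delta-2$, and by \eqref{eq:right-X} and Proposition \ref{prop:rootspace},
\[
r\le\sum_{\lambda\in\Lambda}d_\lambda(g)\le\delta-2 .
\]
This does not yet contradict $r\le\delta-1$. Instead, I would impose the extra condition with more care: choose the hyperplane through $V$ and $B$ (forcing $h_0=0$) and additionally through a point $P(t)$ with $N_m(t)=\lambda_1$, chosen outside $V$. Such a point exists because $\mathcal X_{\lambda_1}$ spans more than $V$. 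This gives $d_{\lambda_1}$ one extra unit, so
\[
r+1\le\sum_{\lambda\in\Lambda}d_\lambda(f)\le\deg(f/X)\le\delta-1 .
\]
This is still not a contradiction when $r=\delta-2$, so the bookkeeping needs adjusting; a cleaner route is the following. For $B$, take $f$ with $h_0=0$ vanishing on $V$; the space of such $f$ has dimension $\ge\delta+1-r-1\ge1$. Then $f=gX$ with $\deg g\le\delta-1$ and $\sum d_\lambda(g)\ge r$, which is consistent with the bound. So I would instead use the syndrome description. By Lemma \ref{lem:syndrome}, $P\in\Omega_{\delta-2}$ means some vector of sum-rank weight $\le\delta-1$ has syndrome a multiple of $\mathbf e_0$ (respectively $\mathbf e_\delta$). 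Adding $\mathbf e_0$ as a row of the parity-check structure, the rows $1,\ldots,\delta$ of $\bold H$ form an LRS-type code of dimension $\delta$. A vector $\bold z$ with $\bold H\bold z^{\mathsf T}=c\,\mathbf e_0$ is then a nonzero codeword of the dual of that code, which is MSRD of distance $\delta+1$ by \eqref{dualLRS} and the isometry of right multiplication by $X$. Hence $\operatorname{wt}_{\rm SR}(\bold z)\ge\delta+1$, a contradiction. The same argument with rows $0,\ldots,\delta-1$ handles $A$.

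\emph{Points of $\mathcal X_b$, $b\in S$.} Since $\Lambda\subseteq T$, we have $b\notin\Lambda$. Take $\mathcal H_f$ through $V$ and $P(t)$; this is possible as long as $r+1\le\delta$. Then Lemma \ref{lem:roots-curve} gives $d_b(f)\ge1$, so with $\Xi=\Lambda\cup\{b\}$ we get $\sum_{\lambda\in\Xi}d_\lambda(f)\ge r+1$. Choosing $\mathcal H_f$ to also pass through $A$ (when $r\le\delta-2$) lowers the degree to $\le\delta-1$, and further passing through $B$ factors out $X$. This yields $r+1\le\delta-2$ or an equality case. In the tight case $r=\delta-1$, we get equality in Proposition \ref{prop:rootspace} with $\deg f=\delta$ and $f_0f_\delta\neq0$ (otherwise we factor out $X$ and obtain a contradiction). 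Then Proposition \ref{prop:viete} relates $N_m(h_0/h_\delta)$ to $b\prod\lambda_i^{d_{\lambda_i}}$.

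\emph{Points of $\mathcal J_a$, $a\in T$.} These have the form $(1,0,\ldots,0,(-1)^{\delta+1}u)$, so $h_0=(-1)^{\delta}h_\delta u$ and $N_m(h_0/h_\delta)=(-1)^{\delta m}a$. With $\deg f=\delta$ and $\sum_\Lambda d_\lambda(f)\ge\delta-1$, either a contradiction arises directly, or an extra root class $\mu$ is forced and Proposition \ref{prop:viete} gives $a=\mu\cdot\pi$ with $\pi\in\Pi_{\delta-1}(\Lambda)$.

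The main obstacle is controlling the degenerate cases, where some $h_0$ or $h_\delta$ vanishes or the root classes are not exactly $\Lambda$, and getting the dimension counts to close exactly. For the points themselves (rather than secants), I expect the pure counting above, possibly with the syndrome/dual-MSRD argument, to suffice without invoking $(\diamond)$ beyond $b\notin\Lambda$.
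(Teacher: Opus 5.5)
Your argument for $A$ and $B$ through Lemma~\ref{lem:syndrome} works. A vector whose syndrome lies in $\langle\mathbf e_0\rangle_{\F_{q^m}}$ is annihilated by rows $1,\ldots,\delta$ of $\mathbf H$. These rows generate an LRS code of dimension $\delta$ after the basis change $\beta'_{ih}=a_i\sigma(\beta_{ih})$, so the vector has weight at least $\delta+1$. Note, however, that the duality you need is the $\F_{q^m}$-vector duality of LRS codes used in Lemma~\ref{lem:kernel-distance}, not \eqref{dualLRS}, which concerns the trace form on $\mathcal R_\Lambda$.

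The genuine gap is that the cases $P\in\mathcal X_b$ and $P\in\mathcal J_a$ are never closed. For $P(t)\in\mathcal X_b$ you end with $N_m(h_0/h_\delta)=(-1)^{\delta m}b\prod_i\lambda_i^{r_i}$. That is a true identity, not a contradiction: in Theorem~\ref{thm:main-exterior} it bites only because a second point $Q\in\mathcal J_a$ on the hyperplane pins down the norm of $h_0/h_\delta$, and here there is no second point. For $P\in\mathcal J_a$ you end with $a=\mu\pi$ for an unknown class $\mu$, which again contradicts nothing. Moreover, the existence of $\mu$ is not justified, since Proposition~\ref{prop:viete} requires equality in Proposition~\ref{prop:rootspace}. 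Vi\`ete's relation and $(\diamond)$ are tools for secant lines, not for single points.

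The missing idea is a normalization that makes the counts close. Enlarge the $W_i$ inside $U_{\lambda_i}$ until $\sum_i r_i=\delta-1$. This is possible because $N\geq\delta+1$, and $P$ stays in the enlarged span. Then $\dim_{\F_{q^m}}V=\delta-1$ by Lemma~\ref{lem:independent-spans}. Since $P$ already lies in $\mathbb P(V)$, your hyperplane ``through $V$ and $P(t)$'' is just a hyperplane through $V$. The linear forms vanishing on $V$ form a $2$-dimensional space, all of whose members vanish at $P$. Hence one extra coefficient condition can always be imposed, also in what you call the tight case. Take $h_\delta=0$:
\begin{itemize}
\item For $P(t)\in\mathcal X_b$, where $b\notin\Lambda$, this gives $\delta\leq\sum_{\lambda\in\Lambda\cup\{b\}}d_\lambda(f)\leq\deg f\leq\delta-1$.
\item For $P\in\mathcal J_a$, the condition $h_0+(-1)^{\delta+1}uh_\delta=0$ forces $h_0=0$. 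So $f=gX$ and, by \eqref{eq:right-X}, $\delta-1\leq\deg g\leq\delta-2$.
\item The points $B$ and $A$ are handled identically, imposing $h_\delta=0$, resp.\ $h_0=0$. This is exactly what your first attempt for them lacked.
\end{itemize}
This is the paper's proof. It uses $(\diamond)$ only through $b\notin\Lambda$, and it shows that no point of the line $AB$ lies in $\Omega_{\delta-2}(\mathcal U_\Lambda)$.

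If you prefer your syndrome route, it also extends to the remaining cases:
\begin{itemize}
\item For $P\in\mathcal J_a$, rows $1,\ldots,\delta-1$ of $\mathbf H$ annihilate $\mathbf z$. Their kernel is $\mathcal C_{\emptyset,\F_q^*}$, which has minimum distance $\delta$, while $\mathbf z\neq\mathbf 0$ has weight at most $\delta-1$.
\item For $P=P(t)=\langle\boldsymbol v_c(x)\rangle$ with $N_m(c)=b$ and $\mathbf H\mathbf z^{\mathsf T}=\gamma\,\boldsymbol v_c(x)^{\mathsf T}$, append $\boldsymbol v_c(x)^{\mathsf T}$ to $\mathbf H$ as a new block of length $1$ and norm $b\notin\Lambda$. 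The kernel of this LRS parity-check matrix has minimum distance $\delta+2$, yet it contains $(\mathbf z\mid-\gamma)$, of weight at most $\delta$.
\end{itemize}
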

\begin{proof}
    Suppose that $P\in \mathcal{E}_{S,T}$ belongs to
$\Omega_{\delta-2}(\mathcal{U}_\Lambda)$.  Then $P$ lies in $\mathcal{W}=\mathbb{P}(W)$ where $W=\langle W_1,\ldots,W_\ell \rangle_{\F_{q^m}}$, with $W_i\leq_{\F_q}U_{\lambda_i}$ and  $r_i=\dim_{\F_q} W_i$. We may assume $\sum_i r_i=\delta-1$, and thus by Lemma \ref{lem:independent-spans} the subspace $W$ has dimension $\delta -1$. A hyperplane containing $\mathcal{W}$ can be seen as a linear operator which vanishes on $W$. Let 
$$W^*=\left\{ h(x)=\sum_{i=0}^{\delta} h_i x_i : h(x)=0 \,\, \forall \,\, x \in W \right\}.
$$
We have that $\dim_{\F_{q^m}} W^* = 2$. Consider the following $\delta$-dimensional $\F_{q^m}$-subspace of linear operators
$$
T_{j}=\left \{ h(x)=\sum_{i=0}^{\delta} h_i x_i: h_{j}=0 \right\}.
$$
Then, we have
$$
\dim_{\F_{q^m}}(W^* \cap T_{j}) \geq \dim_{\F_{q^m}} W^* + \dim_{\F_{q^m}} T_{j} - (\delta +1) = 1.
$$
Therefore, for any $j \in \{0,1,\ldots,\delta\}$, there exists a polynomial $f_j=\sum_{\overset{i=0}{i \neq j}}^{\delta}h_i X^i$ such that the hyperplane $\mathcal{H}_{f_j}$ containing $\mathcal{W}$ having equation
$$
\sum_{\overset{i=0}{i \neq j}}^\delta h_i X_i=0.
$$
We will distinguish three cases.
\begin{itemize}
    \item $P=P(t) \in \mathcal{X}_b$ with $b \in S$.\\ Let $\mathcal{H}_{f_{\delta}}:h_0 X_0 + \ldots + h_{\delta -1}X_{\delta }=0$ be a hyperplane containing $\mathcal{W}$. By \textit{Point 3} of Lemma \ref{lem:roots-curve} and Proposition \ref{prop:rootspace}, we have
    $$
    \deg(f_{\delta}) \geq \sum_{\lambda \in \Lambda \cup \{ b\}}d_{\lambda}(f_{\delta}) \geq \sum_{i=1}^{\ell} \dim_{\F_q}W_i + d_b(f_{\delta}).
    $$
    Since $P \in \mathcal{H}_{f_{\delta}} \cap \Omega_{\delta-2}(\mathcal{U}_\Lambda)$, then $d_b(f_{\delta}) \geq 1$. Therefore, $\deg(f_{\delta}) \geq \delta$, a contradiction follows.
    \item $P \in \mathcal{J}_a$ with $a \in T$.\\
    Let $\mathcal{H}_{f_{\delta}}:h_0 X_0 + \ldots + h_{\delta -1}X_{\delta -1}=0$ be a hyperplane containing $\mathcal{w}$. Since $P \in \mathcal{H}_{f_\delta}$, then $h_0=0$. Now,
    $$
    f_\delta=\sum_{i=1}^{\delta -1} h_i X^i =\left( \sum_{i=0}^{\delta -2}h_{i+1} X^i \right) X =:g X,
    $$
    where $\deg(g) \leq \delta -2$. By \eqref{eq:right-X}, $d_\lambda(f_\delta)=d_\lambda(g)$ for every nonzero norm class. Therefore,
    $$
    \deg(g) \geq \sum_{i=1}^{\ell}d_{\lambda_i}(g)=\sum_{i=1}^{\ell}d_{\lambda_i}(f_\delta)\geq \sum_{i=1}^{\ell} \dim_{\F_q}W_i=\delta -1,
    $$
    again a contradiction.
    \item $P \in \{A,B\}$. \\ 
    Assume $P=A$ and let $\mathcal{H}_{f_0}: h_1X_1 + \ldots + h_{\delta}X_{\delta}=0$  be a hyperplane containing $\mathcal{W}$.  Since $A\in \mathcal{W}$, every element of $W^*$ has $h_\delta=0$. Thus $f_0=hX$ for some polynomial $h$ with $\deg(h)\leq\delta-2$, and the previous root-space argument gives a contradiction.
    Assume $P=B$ and let $\mathcal{H}_{f_\delta}$ be a hyperplane containing $\mathcal{W}$. Since $B\in \mathcal{H}_{f_\delta}$ then $h_{0}=0$. Therefore, it follows a contradiction repeating the same argument of the previous case. 
\end{itemize}
Hence, the claim follows.
\end{proof}

\begin{thm}
\label{thm:main-exterior}
The set $\mathcal E_{S,T}$ is an exterior set with respect to
$\Omega_{\delta-2}(\mathcal{U}_\Lambda)$ if and only if the Condition \eqref{eq:switch-conditions} holds.
\end{thm}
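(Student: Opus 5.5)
The plan is to translate "the line $PQ$ meets $\Omega_{\delta-2}(\mathcal U_\Lambda)$" into the existence of a skew polynomial of degree at most $\delta$ with many roots. Then Proposition \ref{prop:rootspace} handles the inequality cases and Proposition \ref{prop:viete} handles the equality case.

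\emph{Basic reduction.} Suppose a point $R$ of the line $PQ$, with $P,Q\in\mathcal E_{S,T}$ distinct, lies in $\Omega_{\delta-2}(\mathcal U_\Lambda)$. As in Lemma \ref{lem:points-exterior}, we may assume $R\in\mathbb P(W)$ with $W=\langle W_1,\ldots,W_\ell\rangle_{\F_{q^m}}$, $W_i\le_{\F_q}U_{\lambda_i}$ and $\sum_i r_i=\delta-1$, where $r_i=\dim_{\F_q}W_i$. By Lemma \ref{lem:independent-spans}, $\dim W=\delta-1$. Since $R\in W\cap PQ$, the space $\langle W,P,Q\rangle=\langle W,P\rangle$ has dimension at most $\delta$. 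Hence some hyperplane $\mathcal H_f$, with $f=\sum_{j\le\delta}h_jX^j$, contains $W$, $P$ and $Q$. By Lemma \ref{lem:roots-curve}(3), $\sum_{\lambda\in\Lambda}d_\lambda(f)\ge\delta-1$.

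\emph{Sufficiency of $(\diamond)$.} Assume $(\diamond)$ holds. Points are already handled by Lemma \ref{lem:points-exterior}, so we run through the types of pairs.
\begin{itemize}
\item If $P\in\mathcal X_b$ and $Q\in\mathcal X_{b'}$ with $b,b'\in S$, then $b,b'\notin\Lambda$ because $\Lambda\subseteq T$. If $b\ne b'$, the roots give $\sum_{\lambda\in\Lambda\cup\{b,b'\}}d_\lambda(f)\ge\delta+1>\deg f$. If $b=b'$, Lemma \ref{lem:roots-curve}(2) gives $d_b(f)\ge 2$ and the same contradiction.
\item If one of $P,Q$ is $A$ or $B$, or if both lie on $AB$, then $h_\delta=0$ or $h_0=0$. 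In the first subcase the degree drops to at most $\delta-1$ while still carrying $\delta$ roots. In the second, or when both $h_0=h_\delta=0$, we write $f=gX$ with $\deg g\le\delta-2$ and use \eqref{eq:right-X}, exactly as in Lemma \ref{lem:points-exterior}.
\item The key case is $P=P(t)\in\mathcal X_b$ with $b\in S$, and $Q\in\mathcal J_a$ with $a\in T$ and $Q=(1,0,\ldots,0,(-1)^{\delta+1}u)$, $N_m(u)=a$. From $Q\in\mathcal H_f$ we get $h_0=(-1)^{\delta}uh_\delta$. If $h_\delta=0$, then $h_0=0$ as well and the $gX$ argument applies. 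Otherwise $h_0h_\delta\ne0$. Counting roots in $\Lambda\cup\{b\}$ then forces equality in Proposition \ref{prop:rootspace}: $d_b(f)=1$ and $d_{\lambda_i}(f)=r_i\le n_i$.
\end{itemize}
In the key case, Proposition \ref{prop:viete} gives
\[N_m\!\left(\tfrac{h_0}{h_\delta}\right)=(-1)^{\delta m}b\prod_i\lambda_i^{r_i}.\]
The left-hand side equals $N_m((-1)^\delta u)=(-1)^{\delta m}a$. Therefore $a=b\pi$ with $\pi\in\Pi_{\delta-1}(\Lambda)$, so $a\in S\Pi_{\delta-1}(\Lambda)\subseteq S$, contradicting $a\in T$.

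\emph{Necessity.} If some $\lambda_i\in S$, then $\mathcal X_{\lambda_i}$ contains points of $L_{U_{\lambda_i}}\subseteq\Omega_0(\mathcal U_\Lambda)$, so $\mathcal E_{S,T}$ is not even pointwise exterior. If instead $b\in S$ and $\pi=\prod\lambda_i^{e_i}\in\Pi_{\delta-1}(\Lambda)$ satisfy $a:=b\pi\in T$, we build a bad secant. Choose $W_i\le U_{\lambda_i}$ with $\dim W_i=e_i$ and a point $P\in\mathcal X_b$. By Lemma \ref{lem:independent-spans}, applied to the classes $\Lambda\cup\{b\}$ using that $P$ comes from $U_b$, the span $\langle W,P\rangle$ is a hyperplane $\mathcal H_f$ with $\deg f\le\delta$. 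Counting roots shows that $h_0h_\delta\ne0$ and that equality holds in Proposition \ref{prop:rootspace}. Proposition \ref{prop:viete} then gives $N_m((-1)^\delta h_0/h_\delta)=a$. Setting $u=(-1)^\delta h_0/h_\delta$, the point $Q\in\mathcal J_a$ lies on $\mathcal H_f$. The line $PQ$ lies in the $\delta$-dimensional space $\langle W,P\rangle$, in which $W$ has codimension one, so $PQ$ meets $\mathbb P(W)\subseteq\Omega_{\delta-2}(\mathcal U_\Lambda)$.

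I expect the main obstacle to be this converse construction, specifically verifying $h_0h_\delta\ne0$ and the sign bookkeeping in the Viète step. The plan there is to argue that $h_\delta=0$ or $h_0=0$ would reduce the degree while $\delta$ independent roots remain, as in the sufficiency part.
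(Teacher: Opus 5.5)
Your proposal is correct and follows the paper's proof essentially step for step. It uses the same reduction to a hyperplane $\mathcal H_f$ through $\mathbb P(W)$ and the two points, and the same case split closed by Proposition~\ref{prop:rootspace}, except in the curve--$\mathcal J_a$ case where Proposition~\ref{prop:viete} yields $a\in S\Pi_{\delta-1}(\Lambda)$. Your converse is also the paper's, building a bad secant $PQ$ with $P\in\mathcal X_b$ and $Q\in\mathcal J_{b\pi}$.

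There is one harmless bookkeeping slip. When only $h_0=0$ (that is, $P=B$ and $Q\in\mathcal X_b$), you get $\deg g\le\delta-1$ rather than $\delta-2$, but the $\delta$ roots in $\Lambda\cup\{b\}$ still give the contradiction.
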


\begin{proof}
    First assume Condition  \eqref{eq:switch-conditions}. Suppose that there are two distinct points $P,Q \in \mathcal{E}_{S,T}$ such that the line $PQ$ meets non-trivially the set $\Omega_{\delta-2}(\mathcal U_\Lambda)$. We may assume that there exist subspaces $W_i \leq_{\F_q} U_{\lambda_i}$ with $r_i=\dim_{\F_q}W_i$ and $\sum_i r_i = \delta -1$ such that
    $$
    PQ \cap  \mathcal{W}  \ne \emptyset.
    $$
    where $\mathcal{W}=\mathbb{P}(\langle W_1, \ldots, W_{\ell} \rangle_{\F_{q^m}}) $ and by Lemma \ref{lem:independent-spans}, it has projective dimension $\delta -2$. By Lemma \ref{lem:points-exterior}, the line $PQ$ meets $\mathcal{W}$ in a point $R$ different from $P$ and $Q$. The projective span $\langle P,\mathcal W\rangle$ contains $PQ$ and it has  projective dimension $\delta-1$. Let $ \langle P , \mathcal{W} \rangle=\mathcal H_f:h_0X_0+\cdots+h_\delta X_\delta=0$ be this hyperplane with $f=\sum_{i=0}^{\delta}h_i X^i$. By \textit{Point 3} of Lemma \ref{lem:roots-curve}, we have $d_{\lambda_i}(f)\geq r_i$ for any $i=1,\ldots,\ell$. We distinguish four cases.
    \begin{itemize}
        \item $P \in \mathcal{X}_b, Q \in \mathcal{X}_c$ with $b,c \in S$. 

        If $b \ne c$,  then $d_b(f)\geq 1$, $d_c(f)\geq 1$  and 
        $$
        \deg(f) \geq \sum_{\lambda \in \Lambda \cup \{b,c\}} d_{\lambda}(f) \geq \sum_{i=1}^{\ell} \dim_{\F_q}W_i + d_b(f)+d_c(f)=\delta +1.
        $$

        If $b=c$, then by \textit{Point 2} of Lemma \ref{lem:roots-curve} we have $d_b(f) \geq 2$ and 
        $$
        \deg(f) \geq \sum_{\lambda \in \Lambda \cup \{b\}} d_{\lambda}(f) \geq \delta +1.
        $$

        In both cases, a contradiction follows.
        \item $P,Q \in AB$. 

        In this case the line $AB$ is contained in the hyperplane $\mathcal{H}_f$ and so $h_0=h_{\delta}=0$. Applying the same argument of the proof of Lemma \ref{lem:points-exterior}, we get $f=g X$ with $\deg(g)\leq \delta -2$. Therefore, we have
        $$
        \deg(g) \geq \sum_{\lambda \in \Lambda} d_{\lambda}(g)= \sum_{\lambda \in \Lambda}d_{\lambda}(f) \geq \delta -1,
        $$
        yielding a contradiction.
        \item $P \in \mathcal{X}_b, Q \in \mathcal{J}_a$ with $b \in S$ and $a \in T$.  

Assume $Q=(1,0,\ldots,0,(-1)^{\delta +1}u)$ with $N_m(u)=a$. Since $Q \in \mathcal{H}_f$, then $h_0+(-1)^{\delta +1}u h_{\delta}=0$. If $h_{\delta}=0$, then $h_0=0$, so that $f=gX$ with $\deg(g)\leq\delta-2$; by \eqref{eq:right-X}, Point 1 of Lemma \ref{lem:roots-curve} applied to $P$, and Proposition \ref{prop:rootspace} with $\Xi=\Lambda\cup\{b\}$, we get
\begin{align*}
\delta-2\ \geq\ \deg(g)&\geq\sum_{\lambda\in\Lambda\cup\{b\}}d_\lambda(g)
=\sum_{\lambda\in\Lambda\cup\{b\}}d_\lambda(f)\\
&\geq\sum_{i=1}^{\ell}\dim_{\F_q}W_i+d_b(f)\ \geq\ (\delta-1)+1,
\end{align*}
a contradiction. Hence, it holds $h_{\delta}h_0 \ne 0$. Now, we have
$$
\delta = \deg(f) \geq \sum_{\lambda \in \Lambda \cup \{b \}}d_{\lambda}(f)\geq \sum_{i=1}^{\ell}\dim_{\F_q}W_i + d_b(f)\geq (\delta -1) +1= \delta.
$$
So, $$\deg(f)=\sum_{\lambda \in \Lambda \cup \{b\}}d_{\lambda}(f).$$

Applying Proposition \ref{prop:viete} with $\Xi=\Lambda\cup\{b\}$, we obtain
$$
N_m\left(\frac{h_0}{h_{\delta}}\right)=(-1)^{\delta m}b \prod_{i=1}^{\ell}\lambda_i^{r_i}.
$$
Recall that $u=(-1)^{\delta}\frac{h_0}{h_{\delta}}$, then we have
\begin{align*}
    a=N_m(u)&=N_m\left( (-1)^{\delta}\right) N_m\left( \frac{h_0}{h_{\delta}} \right) \\ &= (-1)^{\delta m} (-1)^{\delta m} b \prod_{i=1}^{\ell} \lambda_i^{r_i}\\ &= b \prod_{i=1}^{\ell} \lambda_i^{r_i} \in S \Pi_{\delta -1}(\Lambda) \subseteq S,
\end{align*}
a contradiction follows.
        
        \item $P \in \{A,B\}, Q \in \mathcal{X}_b$ with $b \in S$. 

        If $P=A$, then $h_\delta=0$, so
$\deg(f)\leq\delta-1$. We have
$$
\deg(f) \geq \sum_{\lambda \in \Lambda \cup \{b\}}d_{\lambda}(f) \geq \sum_{i=1}^{\ell}\dim_{\F_q}W_i + d_b(f) \geq \delta.
$$ 

If $P=B$, then $h_0=0$ and
$f=gX$ with $\deg(g)\leq\delta-1$.  We have
$$
\deg(g)\geq \sum_{\lambda \in \Lambda \cup \{b\}} d_{\lambda}(g)=\sum_{\lambda \in \Lambda \cup \{b\}} d_{\lambda}(f) \geq \delta.
$$

In both cases, a contradiction follows.

    \end{itemize}

Conversely, suppose that $\mathcal E_{S,T}$ is an exterior set with respect to $\Omega_{\delta-2}(\mathcal U_\Lambda)$. We first prove that $\Lambda\subseteq T$. Suppose, on the contrary, that $\lambda\in\Lambda\cap S$. Then \[ \mathcal X_\lambda \subseteq\Omega_0(\mathcal U_\Lambda) \subseteq\Omega_{\delta-2}(\mathcal U_\Lambda). \] Since $\mathcal X_\lambda\subseteq\mathcal E_{S,T}$, then a contradiction follows. Therefore, $\Lambda\subseteq T.$ We now prove that $S\Pi_{\delta-1}(\Lambda)\subseteq S$. Suppose, on the contrary, that this inclusion does not hold. Then there exist $b\in S$ and integers \[ 0\leq r_i\leq n_i, \qquad \sum_{i=1}^{\ell}r_i=\delta-1, \] such that \[ a=b\prod_{i=1}^{\ell}\lambda_i^{r_i}\in T. \] 
 For every $i$, choose an $\F_q$-subspace \[ W_i\leq_{\F_q}U_{\lambda_i}, \qquad \dim_{\F_q}W_i=r_i, \] and put \[ V=\langle W_1,\ldots,W_\ell\rangle_{\F_{q^m}}, \qquad \mathcal W=\mathbb P(V). \] By Lemma~\ref{lem:independent-spans}, \[ \dim_{\F_{q^m}}V=\sum_{i=1}^{\ell}r_i=\delta-1, \] and hence $\dim\mathcal W=\delta-2$. In particular, $\mathcal W\subseteq\Omega_{\delta-2}(\mathcal U_\Lambda).$ Choose a point $P=P(t)\in\mathcal X_b$. Let $c\in\F_{q^m}^*$ satisfy $N_m(c)=b$. Then there exists $x\in\F_{q^m}^*$ such that $t=c \sigma(x)/x$, and hence $P=\langle \boldsymbol{v}_c(x)\rangle$. Consider the $\F_q$-subspace $W_b=\langle \boldsymbol{v}_c(x)\rangle_{\F_q}$. Since the norm classes in $\Lambda\cup\{b\}$ are pairwise distinct, Lemma~\ref{lem:independent-spans}, applied to $W_1,\ldots,W_\ell,W_b$, gives \[ \dim_{\F_{q^m}} \langle W_1,\ldots,W_\ell,W_b\rangle_{\F_{q^m}} =\delta. \] Therefore, $\mathcal H_f=\langle\mathcal W,P\rangle$ is a hyperplane of $\PG(\delta,q^m)$. Write its equation as \[ h_0X_0+\cdots+h_\delta X_\delta=0 \] and set \[ f(X)=\sum_{j=0}^{\delta}h_jX^j. \] Since $\mathbb P(\langle W_i \rangle_{\F_{q^m} })\subseteq\mathcal H_f$, Point~\textit{3} of Lemma~\ref{lem:roots-curve} gives $d_{\lambda_i}(f)\geq r_i$ for any $i \in \{1,\ldots,\ell\}$. Moreover, since $P=P(t)\in\mathcal H_f$, Point~\textit{1} of Lemma~\ref{lem:roots-curve} gives $f(t)=0$, and hence $d_b(f)\geq1$. Thus, by Proposition~\ref{prop:rootspace}, \[ \delta =\sum_{i=1}^{\ell}r_i+1 \leq \sum_{i=1}^{\ell}d_{\lambda_i}(f)+d_b(f) \leq \deg(f) \leq\delta. \] Consequently, equality holds throughout. In particular, \[ \deg(f)=\delta, \qquad d_{\lambda_i}(f)=r_i \quad\text{for every }i, \qquad d_b(f)=1. \] Since $\deg(f)=\delta$, we have $h_\delta\neq0$. We also claim that $h_0\neq0$. Otherwise, $f=gX$ for some skew polynomial $g$ with $\deg(g)\leq\delta-1$. By \eqref{eq:right-X}, $d_\lambda(g)=d_\lambda(f)$ for every nonzero norm class $\lambda$. Hence \[ \delta = \sum_{i=1}^{\ell}d_{\lambda_i}(f)+d_b(f) = \sum_{i=1}^{\ell}d_{\lambda_i}(g)+d_b(g) \leq\deg(g) \leq\delta-1, \] a contradiction. Therefore $h_0h_\delta\neq0$. Applying Proposition~\ref{prop:viete} to the set $\Lambda\cup\{b\}$, we obtain \[ N_m\left(\frac{h_0}{h_\delta}\right) = (-1)^{\delta m} b\prod_{i=1}^{\ell}\lambda_i^{r_i}. \] Put $u=(-1)^\delta\frac{h_0}{h_\delta}$.  Then \[ N_m(u) = N_m((-1)^\delta) N_m\left(\frac{h_0}{h_\delta}\right) = b\prod_{i=1}^{\ell}\lambda_i^{r_i} =a. \] Therefore the point \[ Q=(1,0,\ldots,0,(-1)^{\delta+1}u) \] belongs to $\mathcal J_a\subseteq\mathcal E_{S,T}$. Moreover, \[ h_0+(-1)^{\delta+1}uh_\delta = h_0+(-1)^{\delta+1}(-1)^\delta h_0 =0, \] so $Q\in\mathcal H_f$. Since $\delta \geq 2$, the points $P$ and $Q$ are distinct: indeed, the second coordinate of $P=P(t)$ is $t\neq0$, whereas the second coordinate of $Q$ is zero. Finally, because $\mathcal E_{S,T}$ is exterior, each of its points lies outside $\Omega_{\delta-2}(\mathcal U_\Lambda)$. Since $\mathcal W\subseteq\Omega_{\delta-2}(\mathcal U_\Lambda)$, we have $P,Q\notin\mathcal W$. On the other hand, $\mathcal W$ has projective dimension $\delta-2$ and is contained in the hyperplane $\mathcal H_f$, which has projective dimension $\delta-1$. Thus $\mathcal W$ is a hyperplane of $\mathcal H_f$. Since $PQ\subseteq\mathcal H_f$, the line $PQ$ meets $\mathcal W$. Therefore, \[ PQ\cap\Omega_{\delta-2}(\mathcal U_\Lambda)\neq\emptyset, \] contradicting the hypothesis on $\mathcal E_{S,T}$. Hence $S\Pi_{\delta-1}(\Lambda)\subseteq S$, and the proof is complete.
    \end{proof}

\section{The family of MSRD codes}\label{thefamily:cst}

Throughout this section, assume that the partition
$\F_q^*=S\,\dot\cup\,T$ satisfies Condition \eqref{eq:switch-conditions}. Then, consider the sets
    $$
    \hat{\mathcal{E}}=\{ \alpha \bold{v} : \alpha \in \F_{q^m}, \langle \bold{v} \rangle \in \mathcal{E}_{S,T} \} \subseteq \F_{q^m}^{\delta +1}
    $$
and $\mathcal{C}_{S,T}=\syn_{\bold{H}}^{-1}(\hat{\mathcal{E}}) \subseteq \F_{q^m}^N$. In the following, given two disjoint subsets $\mathcal{A}$ and $\mathcal{B}$ we will denote by $\mathbb{K}(\mathcal{A},\mathcal{B})$ the cone having vertex $\mathcal{A}$ and base $\mathcal{B}$.

\begin{remark}\label{projcone}
    Let $\mathcal{A},\mathcal{B}$ two non-empty subsets of a vector space $\F_{q^m}^{t+1}$. Assume that $\mathcal{A},\mathcal{B}$ are both closed under $\F_{q^m}$-multiplication, $\mathcal{A} \cap \mathcal{B} = \{\bf{0}\}$ and define the subset 
    $$
    \mathcal{M}=\mathcal{A}+\mathcal{B}=\{ \bold{a}+\bold{b} : \bold{a} \in \mathcal{A}, \bold{b} \in \mathcal{B} \}.
    $$
    We claim that $\mathbb{P}(\mathcal{M})=\mathbb{K}(\mathbb{P}(\mathcal{A}),\mathbb{P}(\mathcal{B})) \subseteq \PG(t,q^m)$. Let $\bold{z}$ be a non-zero vector of $\mathcal{M}$, then there are $\bold{a} \in \mathcal{A}, \bold{b} \in \mathcal{B}$ such that $\bold{z}=\bold{a}+\bold{b}$. Hence, the point $\langle \bold{z} \rangle$ belongs to the line through $\langle \bold{a} \rangle$ and $\langle \bold{b} \rangle$. This yields $\mathbb{P}(\mathcal{M}) \subseteq \mathbb{K}(\mathbb{P}(\mathcal{A}),\mathbb{P}(\mathcal{B}))$. Now, let $\langle \bold{c} \rangle \in \mathbb{K}(\mathbb{P}(\mathcal{A}),\mathbb{P}(\mathcal{B}))$. Then, there are non-zero vectors $\bold{u} \in \mathcal{A},\bold{v} \in \mathcal{B}$ such that $\langle \bold{c} \rangle$ belongs to the line through $\langle \bold{u} \rangle$ and $\langle \bold{v} \rangle$. This yields there exist $\lambda, \mu \in \F_{q^m}$ such that
    $$
    \bold{c}=\lambda \bold{u} + \mu \bold{v},
    $$
    and since $\mathcal{A},\mathcal{B}$ are closed under $\F_{q^m}$-multiplication then $\lambda \bold{u} \in \mathcal{A}$ and $\mu \bold{v} \in \mathcal{B}$. Hence, $\langle \bold{c} \rangle \in \mathbb{P}(\mathcal{M})$.
\end{remark}

\begin{proposition}\label{conecode}
    Let $K$ be the kernel of the syndrome map $\syn_\bold{H}$ and let $L$ be an $\F_{q^m}$-subspace of $\F_{q^m}^N$ such that $\F_{q^m}^N=K \oplus L$. If
    $$
    \tilde{\mathcal{E}}=\mathbb{P}\left( (\syn_\bold{H} \vert_L)^{-1}(\hat{\mathcal{E}}) \right) \subseteq \mathbb P(L),
    $$
    then $\tilde{\mathcal{E}}$ is projectively equivalent to $\mathcal{E}_{S,T}$ and 
    $$
    \mathbb{P}(\mathcal{C}_{S,T})=\mathbb{K}(\mathbb P(K),\tilde{\mathcal{E}}),
    $$
    or equivalently
    $$
    \mathcal{C}_{S,T}=K + (\syn_\bold{H} \vert_L)^{-1}(\hat{\mathcal{E}}).
    $$
\end{proposition}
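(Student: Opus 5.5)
The plan is to work entirely with linear algebra of the syndrome map, using only that $\bold{H}$ has full row rank $\delta+1$ (stated after \eqref{H}). First, since $\bold{H}$ has full row rank, $\syn_{\bold{H}}$ is surjective onto $\F_{q^m}^{\delta+1}$. Its restriction $\syn_{\bold{H}}\vert_L:L\to\F_{q^m}^{\delta+1}$ is injective because $L\cap K=\{\bold{0}\}$. It is also surjective, since $\syn_{\bold{H}}(\bold{k}+\bold{l})=\syn_{\bold{H}}(\bold{l})$ for $\bold{k}\in K$, $\bold{l}\in L$. Hence $\psi=\syn_{\bold{H}}\vert_L$ is an $\F_{q^m}$-linear isomorphism $L\to\F_{q^m}^{\delta+1}$. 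It therefore induces a projectivity $\mathbb{P}(L)\to\PG(\delta,q^m)$, and $\psi^{-1}$ maps $\hat{\mathcal{E}}$ bijectively onto $(\syn_{\bold{H}}\vert_L)^{-1}(\hat{\mathcal{E}})$. Because $\hat{\mathcal{E}}$ is the union of the $1$-dimensional subspaces representing points of $\mathcal{E}_{S,T}$, this gives $\tilde{\mathcal{E}}=\psi^{-1}(\mathcal{E}_{S,T})$ projectively. This settles the first claim.

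Next I prove the vector identity $\mathcal{C}_{S,T}=K+\psi^{-1}(\hat{\mathcal{E}})$ by double inclusion.
\begin{itemize}
\item[$(\subseteq)$] Take $\bold{z}\in\mathcal{C}_{S,T}$ and decompose it as $\bold{z}=\bold{k}+\bold{l}$ with $\bold{k}\in K$, $\bold{l}\in L$. Then $\psi(\bold{l})=\syn_{\bold{H}}(\bold{z})\in\hat{\mathcal{E}}$, so $\bold{l}\in\psi^{-1}(\hat{\mathcal{E}})$.
\item[$(\supseteq)$] Conversely, $\syn_{\bold{H}}(\bold{k}+\bold{l})=\psi(\bold{l})\in\hat{\mathcal{E}}$.
\end{itemize}
Note that $\bold{0}\in\hat{\mathcal{E}}$ (take $\alpha=0$), so $K\subseteq\mathcal{C}_{S,T}$.

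Finally I apply Remark \ref{projcone} with $\mathcal{A}=K$ and $\mathcal{B}=\psi^{-1}(\hat{\mathcal{E}})$. Both sets are closed under $\F_{q^m}$-scalars: $K$ because it is a subspace, and $\mathcal{B}$ because $\hat{\mathcal{E}}$ is closed under scalars and $\psi$ is linear. Both contain $\bold{0}$, and $\mathcal{A}\cap\mathcal{B}\subseteq K\cap L=\{\bold{0}\}$. Both are non-empty; in fact $K\neq\{\bold{0}\}$ since $\delta+1\le N$ (if $K$ were trivial the cone would degenerate to the base, which would still be consistent). The Remark then yields $\mathbb{P}(\mathcal{C}_{S,T})=\mathbb{K}(\mathbb{P}(K),\mathbb{P}(\mathcal{B}))=\mathbb{K}(\mathbb{P}(K),\tilde{\mathcal{E}})$.

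For the equivalence of the two formulations, the vector identity gives the projective one directly via the Remark. In the other direction, both $\mathcal{C}_{S,T}$ and $K+\mathcal{B}$ are unions of full $1$-dimensional subspaces together with $\bold{0}$, so they are determined by their projectivizations.

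The only mildly delicate point is the bookkeeping with zero vectors and scalar closure: the convention that $\hat{\mathcal{E}}$ contains $\bold{0}$ and whole lines, and that $\mathbb{P}$ discards $\bold{0}$. I expect nothing beyond that to be an obstacle.
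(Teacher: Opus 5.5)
Your proposal is correct and follows the paper's proof: you show that $\syn_{\mathbf H}\vert_L$ is an isomorphism (the paper gets this from $\dim_{\F_{q^m}}L=\delta+1$, you from injectivity plus surjectivity), prove $\mathcal C_{S,T}=K+(\syn_{\mathbf H}\vert_L)^{-1}(\hat{\mathcal E})$ by the same double inclusion, and conclude with Remark~\ref{projcone}. One small slip: $\delta+1\le N$ does not give $K\neq\{\mathbf 0\}$, since $\delta=N-1$ is allowed and then $K=\{\mathbf 0\}$ by Lemma~\ref{lem:kernel-distance}; that case is handled by your parenthetical fallback (the cone reduces to its base), together with the fact that Remark~\ref{projcone} only needs $\mathcal A\neq\emptyset$, which $\{\mathbf 0\}$ satisfies.
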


\begin{proof}
    Since the matrix $\bold{H}$ has full-rank, then $\dim_{\F_{q^m}}L=\delta +1$, i.e. $\syn_\bold{H} \vert_L$ is an isomorphism. In particular, $\tilde{\mathcal E}$ is projectively equivalent to $\mathcal E_{S,T}$ via the projectivity induced by $\syn_{\bold{H}}\big|_L$. We now prove the equality
$$
\mathcal C_{S,T}
=
K+
\left(\syn_\bold{H}\vert_L\right)^{-1}
(\hat{\mathcal E}).
$$
Put
$$
\mathcal{E}'
=
\left(\syn_\bold{H}\vert_L\right)^{-1}
(\hat{\mathcal E})
\subseteq L.
$$
Note that $\bf{0}\in\hat{\mathcal E}$, and hence $\bf{0}\in\mathcal{E}'$. Let $\bold{z}\in\mathcal C_{S,T}$. By definition,
$\syn_\bold{H}(\bold{z})\in\hat{\mathcal E}$.
Since $\F_{q^m}^{N}=K\oplus L$,
there exist unique vectors $\bold{k}\in K$ and $\bold{x}\in L$ such that $\bold{z}=\bold{k}+\bold{x}$. So, we get $\syn_{\bold{H}}(\bold{z})=\syn_\bold{H}(\bold{x}) \in \hat{\mathcal{E}}$. Moreover, $\bold{x} \in L$ and hence $\bold{x} \in \left(\syn_\bold{H}\vert_L\right)^{-1}
(\hat{\mathcal E})$. It yields 
$$
\bold{z}=\bold{k}+\bold{x} \in K + \mathcal{E}',
$$
i.e. $\mathcal{C}_{S,T} \subseteq K + (\syn_\bold{H} \vert_L)^{-1}(\hat{\mathcal{E}})$. Conversely, let $\bold{z}\in
K+
\left(\syn_\bold{H}\vert_L\right)^{-1}
(\hat{\mathcal E})$.
Then there exist $\bold{k}\in K$ and $\bold{x}\in
\left(\syn_\bold{H}\vert_L\right)^{-1}
(\hat{\mathcal E})$
such that $\bold{z}=\bold{k}+\bold{x}$. So, we have
$$
\syn_\bold{H}(\bold{z})=\syn_\bold{H}(\bold{x})=\syn_\bold{H} \vert_L(\bold{x}) \in \hat{\mathcal{E}}.
$$
By definition of $\mathcal C_{S,T}$, we get $\bold{z}\in \syn_\bold{H}^{-1}(\hat{\mathcal E})
=
\mathcal C_{S,T}$. Hence $K+
\left(\operatorname{syn}_\bold{H}\vert_L\right)^{-1}
(\hat{\mathcal E})
\subseteq
\mathcal C_{S,T}$. Finally, the equality
$$
\mathbb{P}(\mathcal{C}_{S,T})=\mathbb{K}(\mathbb P(K),\tilde{\mathcal{E}})
$$
follows from Remark \ref{projcone}. The claim follows.
\end{proof}

\begin{lem}\label{lem:kernel-distance} Let $\bold{H}$ be the full-rank block matrix in \eqref{H} and put  $K=\ker(\syn_\bold{H})$. If $\delta\leq N-2$, then $K$ is an $\F_{q^m}$-linear MSRD code of dimension $N-\delta-1$ and \[ d_{\rm SR}(K)=\delta+2. \] If $\delta=N-1$, then $K=\{\boldsymbol{0}\}$. \end{lem}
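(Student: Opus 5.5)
The plan is to handle dimension and distance separately. For the dimension, $\bold H$ has full row rank $\delta+1$ (its row space is an LRS code of dimension $\delta+1$), so rank–nullity gives $\dim_{\F_{q^m}}K=N-\delta-1$. In particular $K=\{\boldsymbol 0\}$ when $\delta=N-1$, which settles the last assertion. From now on assume $\delta\leq N-2$, so that $K\neq\{\boldsymbol0\}$.

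For the lower bound on the distance, I will use the geometry rather than duality, since duality in this paper needs the subgroup hypothesis on $\Lambda$. Take a nonzero $\bold z\in K$ and suppose $\mathrm{wt}_{\rm SR}(\bold z)=r\leq\delta+1$. The argument of Lemma~\ref{lem:syndrome} applies verbatim up to the point where $\bold z\notin K$ is used. It produces $\F_q$-independent $\alpha_{i1},\ldots,\alpha_{ir_i}$ spanning the $i$-th block, and vectors $\boldsymbol c^{(i)}_j\in\F_q^{n_i}$ that are $\F_q$-independent, with $\sum_i r_i=r$. It also gives the relation
\[ \boldsymbol 0=\bold H\bold z^{\mathsf T}=\sum_{i=1}^{\ell}\sum_{j=1}^{r_i}\alpha_{ij}\,H_i(\boldsymbol c^{(i)}_j)^{\mathsf T}. \]
The column space of $H_i$ over $\F_q$ is $\boldsymbol v_{a_i}$ applied to $\langle\beta_{i1},\ldots,\beta_{in_i}\rangle_{\F_q}$, and this map is injective since $n_i\leq m$ and the $\beta_{ij}$ are independent. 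Hence the vectors $\boldsymbol w^{(i)}_j=H_i(\boldsymbol c^{(i)}_j)^{\mathsf T}$ are $\F_q$-independent in $U_{\lambda_i}$ and span some $W_i$ with $\dim_{\F_q}W_i=r_i$. Since $r\leq\delta+1$, Lemma~\ref{lem:independent-spans} says that these $r$ vectors are $\F_{q^m}$-linearly independent. The displayed relation then forces every $\alpha_{ij}=0$, which contradicts $\bold z\neq\boldsymbol0$. Therefore every nonzero codeword has weight at least $\delta+2$.

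For the upper bound, the Singleton-like bound for the linear code $K$ of dimension $N-\delta-1$ gives $d_{\rm SR}(K)\leq N-(N-\delta-1)+1=\delta+2$. So equality holds and $K$ is MSRD.

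The main obstacle is the identification $\dim_{\F_q}W_i=r_i$, which rests on the injectivity of $\F_q^{n_i}\to U_{\lambda_i}$, $\boldsymbol c\mapsto H_i\boldsymbol c^{\mathsf T}$. This injectivity is exactly where the assumption $n_i\leq m$ enters, together with the fact that $\boldsymbol v_{a_i}$ is an $\F_q$-isomorphism because its first coordinate is $x$.
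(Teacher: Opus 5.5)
Your proof is correct, but it follows a different route from the paper's.

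The paper identifies $K$ with the vector dual $\mathcal L^{\perp\!\!\!\!\perp}$ of the row space $\mathcal L$ of $\mathbf H$, which is an LRS code of dimension $\delta+1$. It then cites \cite[Theorem 4]{MartinezKschischang} (the dual of an LRS code is an LRS code of complementary dimension) to obtain dimension and distance at once.

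You instead get the dimension from rank--nullity and prove $d_{\rm SR}(K)\geq\delta+2$ directly. A nonzero $\mathbf z\in K$ of weight $r\leq\delta+1$ would give a nontrivial $\F_{q^m}$-relation among $r$ vectors. These vectors form $\F_q$-bases of subspaces $W_i\leq U_{\lambda_i}$ with $\sum_i\dim_{\F_q}W_i=r$, and Lemma~\ref{lem:independent-spans} forbids such a relation. The Singleton-like bound then gives equality.

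The paper's route is shorter and yields the extra fact that $K$ is itself an LRS code. Yours is self-contained and rests only on Proposition~\ref{prop:rootspace}. It therefore visibly covers the exact setting of the lemma (arbitrary $1\leq n_i\leq m$, a different basis in each block, arbitrary distinct norms) without your having to check that the cited duality theorem's hypotheses match it. It also shows that the MSRD property of $K$ is just the $(\delta+1)$-wise independence of $\mathcal U_\Lambda$, which is the mechanism behind Lemma~\ref{lem:syndrome} and Theorem~\ref{thm:main-exterior}.

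Two small points. First, the proof of Lemma~\ref{lem:syndrome} does not give the $\F_q$-independence of the $\boldsymbol c^{(i)}_j$, so you should add the one-line reason. The matrix $(c^{(i)}_{kj})$ has rank $r_i$, because its rows are the coordinates, in the basis $(\alpha_{i1},\ldots,\alpha_{ir_i})$, of a spanning set of $\langle\alpha_{i1},\ldots,\alpha_{ir_i}\rangle_{\F_q}$. Second, your stated motivation for avoiding duality is not accurate. For this lemma the paper uses the vector-duality result for LRS codes from \cite{MartinezKschischang}, not the subgroup-dependent coefficient duality recalled in the preliminaries.
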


\begin{proof} Let $\mathcal L=\operatorname{rowsp}_{\F_{q^m}}(\bold{H})$.  By construction, $\mathcal L$ is a linearized Reed--Solomon code of length $N$ and dimension $\delta+1$. In particular, it is an $\F_{q^m}$-linear MSRD code. Since $\bold{H}$ is a generator matrix of $\mathcal L$, we have \[ K = \{\bold{z}\in\F_{q^m}^N:\bold{H}\bold{z}^{\mathsf T}= \boldsymbol{0}\} = \mathcal L^{\perp\!\!\!\!\perp}. \] 

 The vector dual of a linearized Reed--Solomon code is again a linearized Reed--Solomon code of complementary dimension, see \cite[Theorem 4]{MartinezKschischang}. Therefore, if $\delta\leq N-2$, then \[ \dim_{\F_{q^m}}K = N-(\delta+1) = N-\delta-1 \] and \[ d_{\rm SR}(K) = N-\dim_{\F_{q^m}}K+1 = \delta+2. \] If $\delta=N-1$, then $\dim_{\F_{q^m}}K=0$, and hence $K=\{\boldsymbol{0}\}$. \end{proof}

We note that $|\hat{\mathcal{E}}|=(q^m+1)(q^m-1)+1=q^{2m}$. 

\begin{thm}\label{thm:MSRD}
    The code $\mathcal{C}_{S,T}$ satisfies
    $$
    |\mathcal{C}_{S,T}|=q^{m(N-\delta +1)} \qquad \mathrm{and} \qquad d_{SR}(\mathcal{C}_{S,T})=\delta,
    $$
    i.e. $\mathcal{C}_{S,T}$ is a MSRD code. Moreover, it is closed under $\F_{q^m}$-multiplication.
\end{thm}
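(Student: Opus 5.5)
The plan is to obtain the cardinality from the cone decomposition of Proposition \ref{conecode}, the lower bound on the distance from the exterior property of Theorem \ref{thm:main-exterior} via Lemma \ref{lem:syndrome}, and the matching upper bound from the Singleton-like bound.

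\textbf{Cardinality.} Since $\syn_{\bold H}$ is surjective ($\bold H$ has full row rank), every fibre of $\syn_{\bold H}$ is a coset of $K$. Hence
\[
|\mathcal C_{S,T}|=|K|\cdot|\hat{\mathcal E}|=q^{m(N-\delta-1)}\cdot q^{2m}=q^{m(N-\delta+1)},
\]
using $|\hat{\mathcal E}|=q^{2m}$, which comes from \eqref{eq:size-E}. Closure under $\F_{q^m}$-multiplication is immediate: $\hat{\mathcal E}$ is a union of $\F_{q^m}$-lines through $\bold 0$, and $\syn_{\bold H}$ is $\F_{q^m}$-linear, so $\alpha\bold z\in\mathcal C_{S,T}$ whenever $\bold z\in\mathcal C_{S,T}$.

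\textbf{Distance lower bound.} Take distinct $\bold z,\bold z'\in\mathcal C_{S,T}$ and put $\bold w=\bold z-\bold z'$. We show $\mathrm{wt}_{\rm SR}(\bold w)\ge\delta$. Write $\syn_{\bold H}(\bold z)=\bold s$ and $\syn_{\bold H}(\bold z')=\bold s'$, both in $\hat{\mathcal E}$. There are three cases.
\begin{itemize}
\item If $\bold w\in K$, Lemma \ref{lem:kernel-distance} gives $\mathrm{wt}_{\rm SR}(\bold w)\ge\delta+2$. This case does not occur when $\delta=N-1$, since then $K=\{\bold 0\}$.
\item Otherwise $\bold s\neq\bold s'$. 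Suppose $\mathrm{wt}_{\rm SR}(\bold w)\le\delta-1$. Lemma \ref{lem:syndrome} with $r=\delta-1\ge1$ gives $\langle\bold s-\bold s'\rangle\in\Omega_{\delta-2}(\mathcal U_\Lambda)$.
\item If one of $\bold s,\bold s'$ is $\bold 0$, or if $\langle\bold s\rangle=\langle\bold s'\rangle$, then $\langle\bold s-\bold s'\rangle$ is a point of $\mathcal E_{S,T}$. This contradicts Lemma \ref{lem:points-exterior}.
\item If $\langle\bold s\rangle\neq\langle\bold s'\rangle$, then $\langle\bold s-\bold s'\rangle$ lies on the line joining two distinct points of $\mathcal E_{S,T}$. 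This contradicts Theorem \ref{thm:main-exterior}, which applies because Condition \eqref{eq:switch-conditions} holds.
\end{itemize}
Hence $d_{\rm SR}(\mathcal C_{S,T})\ge\delta$.

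\textbf{Distance upper bound.} The Singleton-like bound $|\mathcal C|\le q^{m(N-d+1)}$ for arbitrary codes, applied with the computed size, forces $d_{\rm SR}(\mathcal C_{S,T})\le\delta$. Therefore $d_{\rm SR}(\mathcal C_{S,T})=\delta$ and the code is MSRD.

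The main point requiring care is the case split in the lower bound. It is essential that $\hat{\mathcal E}$ is closed under scalars, so that the difference of two syndromes lies on a secant line or at a point of $\mathcal E_{S,T}$. The degenerate cases (proportional syndromes, or a zero syndrome) must be routed to the pointwise exteriority of Lemma \ref{lem:points-exterior} rather than to Theorem \ref{thm:main-exterior}.
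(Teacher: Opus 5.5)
Your proposal is correct and follows the paper's proof essentially step by step. You obtain the cardinality from $|K|\cdot|\hat{\mathcal E}|$ and split the difference of two codewords into the kernel case (Lemma \ref{lem:kernel-distance}) and the nonzero-syndrome case; there the syndrome is a point of $\mathcal E_{S,T}$ or lies on a secant of it, hence avoids $\Omega_{\delta-2}(\mathcal U_\Lambda)$ by Lemma \ref{lem:points-exterior} and Theorem \ref{thm:main-exterior}, and Lemma \ref{lem:syndrome} with the Singleton-like bound finishes, with your explicit treatment of zero or proportional syndromes just spelling out what the paper compresses into ``either belongs to a secant line or is a point of $\mathcal E_{S,T}$''.
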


\begin{proof}
    By Proposition \ref{conecode}  and since the matrix $\bold{H}$ has rank $\delta +1$, we get
    $$
    |\mathcal{C}_{S,T}|=|K| \cdot |\hat{\mathcal{E}}|=q^{m(N-\delta -1)} \cdot q^{2m}=q^{m(N-\delta +1)}.
    $$
    Now, let $\bold{z},\bold{z}'$ be two distinct codewords of $\mathcal{C}_{S,T}$ and put $\bold{c}=\bold{z}-\bold{z}'$.

    Suppose $\syn_\bold{H}(\bold{c})=0$. Since $\bold{c}\neq0$, Lemma~\ref{lem:kernel-distance}
gives $\operatorname{wt}_{\rm SR}(\bold{c})\geq\delta+2$. This case cannot occur
when $\delta=N-1$, because then $K=\{\boldsymbol{0}\}$.

    Suppose $\syn_\bold{H}(\bold{c}) \ne \boldsymbol{0}$. Since $\syn_\bold{H}(\bold{z}), \syn_\bold{H}(\bold{z}') \in \hat{\mathcal{E}}$, then the projective point $\langle \syn_\bold{H}(\bold{c}) \rangle$ either belongs to a secant line through two distinct points of $\mathcal{E}_{S,T}$ or is a point of $\mathcal{E}_{S,T}$. In both cases, by Lemma \ref{lem:points-exterior} and  Theorem \ref{thm:main-exterior}, we have $\langle \syn_\bold{H}(\bold{c}) \rangle \not \in \Omega_{\delta -2}(\mathcal{U}_\Lambda)$ and so by Lemma \ref{lem:syndrome} it holds $\mathrm{wt_{\mathrm{SR}}}(\bold{c}) \geq \delta$.

We have proved $d_{\rm SR}(\mathcal C_{S,T})\geq\delta$.
Its cardinality and the Singleton bound imply that $d_{\mathrm{SR}}(\cC_{S,T})=\delta$. Finally, $\syn_\bold{H}(\alpha \bold{z})=\alpha\syn_\bold{H}(\bold{z})$ and
$\hat{\mathcal E}$ is closed under multiplication by every
$\alpha\in\F_{q^m}$; therefore the code has the same closure property.
    \end{proof}

    \begin{remark} There are two trivial cases. \begin{itemize}
    \item Suppose that $S=\emptyset$. Then $T=\F_q^*$ and \[ \mathcal E_{\emptyset,\F_q^*} = \{A,B\}\cup\bigcup_{a\in\F_q^*}\mathcal{J}_a = AB. \] Consequently, $\hat{\mathcal E} = \langle \bold{e}_0, \bold{e}_\delta \rangle_{\F_{q^m}}$, (cf. \eqref{puntiAB}) and hence \[ \mathcal C_{\emptyset,\F_q^*} = \syn_\bold{H}^{-1} \bigl(\langle \bold{e}_0,\bold{e}_\delta\rangle_{\F_{q^m}}\bigr) \] is an $\F_{q^m}$-linear code. More explicitly, let $\bold{H}'$ be the $(\delta-1)\times N$ submatrix of $\bold{H}$ formed by the rows with indices $2,\ldots,\delta$. Then \[ \mathcal C_{\emptyset,\F_q^*} = \ker \bold{H}'= (\operatorname{rowsp}_{\F_{q^m}}(\bold{H}'))^{\perp\!\!\!\!\perp}. \] For every $i$, define \[ \mathcal B_i' = (\beta_{i1}',\ldots,\beta_{im}'), \qquad \beta_{ih}'=a_i\sigma(\beta_{ih}), \] and put \[ \mathfrak B' = (\mathcal B_1',\ldots,\mathcal B_\ell'). \] Each $\mathcal B_i'$ is an $\F_q$-basis of $\F_{q^m}$, since $x\mapsto a_i\sigma(x)$ is invertible. Furthermore, \[ N_r(a_i)\sigma^r(\beta_{ih}') = N_{r+1}(a_i)\sigma^{r+1}(\beta_{ih}), \qquad 0\leq r\leq\delta-2. \] It follows that \[ \operatorname{rowsp}_{\F_{q^m}}(\bold{H}') = \operatorname{ev}_{\boldsymbol a,\boldsymbol{\mathcal B}'} (\mathrm{LRS}_{\delta-1}). \] Hence $\mathcal C_{\emptyset,\F_q^*}$ is the vector dual of an LRS code of dimension $\delta-1$. By \cite[Theorem 4]{MartinezKschischang}, it is therefore an LRS code of complementary dimension $N-\delta+1$ and, in particular, an $\F_{q^m}$-linear MSRD code with minimum sum-rank distance $\delta$. 
    \item If $q=2$, then $\ell=1$ and $\Lambda=\{1\}$. The condition $\Lambda\subseteq T$ then forces \[ T=\F_q^*, \qquad S=\emptyset, \] so the construction reduces to the preceding linear case. \end{itemize} \end{remark}

\begin{remark}\label{rem:nonadditive}
If $S\neq\emptyset$, then $\mathcal C_{S,T}$ is non-additive, for every
$\ell \geq 1$. Indeed, additivity together with closure under all
$\F_{q^m}$-scalars would make the code an $\F_{q^m}$-subspace. Its image
$\hat{\mathcal E}$ under the syndrome map would then be a subspace of
dimension $2$, since $|\hat{\mathcal E}|=q^{2m}$. Since it contains $\bold{e}_0$ and
$\bold{e}_{\delta}$, then $\mathcal{E}_{S,T}$ must be the line $AB$. This is a contradiction because every point $P(t)\in\mathcal X_b$, with $b \in S$, has a nonzero second
coordinate and lies outside $AB$.
\end{remark}

As byproduct of remark above, we have the following.

\begin{cor}\label{cor:no-line}
If $S\neq\emptyset$, then $\mathcal E_{S,T}$ contains no line of
$\PG(\delta,q^m)$.
\end{cor}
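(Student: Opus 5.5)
The plan is to argue by contradiction, using only the exterior property of $\mathcal E_{S,T}$ together with the non-additivity argument of Remark \ref{rem:nonadditive}. Suppose $S\neq\emptyset$ and that some line $\ell_0$ of $\PG(\delta,q^m)$ is contained in $\mathcal E_{S,T}$. Lemma \ref{lem:points-exterior} and Theorem \ref{thm:main-exterior} do not apply to points of a single line directly. Instead I will exploit the cardinality $|\mathcal E_{S,T}|=q^m+1$ from \eqref{eq:size-E} together with the fact that a line of $\PG(\delta,q^m)$ also has exactly $q^m+1$ points. Hence $\ell_0\subseteq\mathcal E_{S,T}$ forces $\mathcal E_{S,T}=\ell_0$.

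Once $\mathcal E_{S,T}$ is a line, the key step is to identify it. Both $A=\langle\bold e_\delta\rangle$ and $B=\langle \bold e_0\rangle$ belong to $\mathcal E_{S,T}$ by \eqref{eq:switched-set}, and they are distinct. So the line must be $AB$, and $\hat{\mathcal E}=\langle\bold e_0,\bold e_\delta\rangle_{\F_{q^m}}$. This is exactly the situation excluded at the end of Remark \ref{rem:nonadditive}.

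To finish, pick $b\in S$ and any $t$ with $N_m(t)=b$, which exists since the norm is surjective onto $\F_q^*$. The point $P(t)=(1,t,N_2(t),\ldots,N_\delta(t))$ lies in $\mathcal X_b\subseteq\mathcal E_{S,T}$. Its coordinate in position $1$ is $t\neq0$, while every point of $AB$ has zero coordinates in positions $1,\ldots,\delta-1$. Since $\delta\geq2$, position $1$ is strictly between $0$ and $\delta$, so $P(t)\notin AB$, which is the desired contradiction.

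Equivalently, the argument can be phrased through the code. If $\mathcal E_{S,T}$ contained a line, the cardinality step would make $\hat{\mathcal E}$ an $\F_{q^m}$-subspace. Then $\mathcal C_{S,T}=\syn_\bold H^{-1}(\hat{\mathcal E})$ would be additive, contradicting Remark \ref{rem:nonadditive}. The only point needing care is the counting step, which rests on the components of $\mathcal E_{S,T}$ being pairwise disjoint so that \eqref{eq:size-E} holds; this is already given in the paper. I therefore expect no real obstacle.
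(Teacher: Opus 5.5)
Your proof is correct and is exactly the argument the paper leaves implicit when it states Corollary~\ref{cor:no-line} as a byproduct of Remark~\ref{rem:nonadditive}. The count $|\mathcal E_{S,T}|=q^m+1$ forces a contained line to equal $\mathcal E_{S,T}$, hence to be $AB$ because $A,B\in\mathcal E_{S,T}$, and this is refuted by any $P(t)$ with $N_m(t)\in S$ (your opening claim that you use the exterior property is inaccurate --- only the cardinality, $\delta\geq 2$ and $S\neq\emptyset$ are needed --- but this does not affect the argument).
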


\subsection{Maximal linear subcodes, a geometric invariant and weight distribution}

Let $\bold{H}\in\F_{q^m}^{(\delta+1)\times N}$ be any
matrix of full rank, consider the associated syndrome map $\syn_{\bold{ H}}$ and  $K=\ker(\syn_{\bold H})$. For each point
$P=\langle\bold{p}\rangle\in\PG(\delta,q^m)$, define
\begin{equation}\label{DP}
\mathcal D_P
=
\syn_{\bold H}^{-1}
\bigl(\langle\bold{p}\rangle_{\F_{q^m}}\bigr).
\end{equation}
Since $\syn_{\bold{H}}$ is surjective, each $\mathcal D_P$
is an $\F_{q^m}$-linear subspace containing $K$, with $\dim_{\F_{q^m}}\mathcal D_P= N -\delta$.
Moreover,  if $P \neq Q$,
$\mathcal D_P\cap\mathcal D_Q=K$.
For the particular syndrome matrix in \eqref{H}, these subspaces
play a role in the sum-rank setting analogous to that of the
generalized Gabidulin subcodes used in \cite{DGL} to distinguish
$\mathcal C_{\sigma,T}$ from the Otal--\"Ozbudak family.

\begin{remark}\label{rm:dec}
Let $\F_q^*=S\,\dot\cup\,T$ be a partition satisfying Condition~\eqref{eq:switch-conditions}. Note that \[ \hat{\mathcal E}\setminus\{\boldsymbol{0}\} = \dot\bigcup_{\langle \bold{p} \rangle \in\mathcal{E}_{S,T}} \bigl(\langle \bold{p}\rangle_{\F_{q^m}}\setminus\{\bold{0}\}\bigr). \] Taking preimages under $\syn_\bold{H}$ gives \begin{equation*}\label{eq:decomposition} \mathcal C_{S,T} = K\ \dot\cup\!\! \bigcup_{P\in\mathcal E_{S,T}} \bigl(\mathcal D_P\setminus K\bigr). \end{equation*}
\end{remark}

\begin{thm}\label{thm:maximal-subcodes}
Let $\F_q^*=S\,\dot\cup\,T$ be a partition satisfying
Condition~\eqref{eq:switch-conditions}, with $S\neq\emptyset$,
and let $\bold{H}$ be the syndrome matrix in as in \eqref{H}.
Then $\mathcal C_{S,T}$ contains exactly $q^m+1$ maximal
$\F_{q^m}$-linear subcodes. These are precisely the $(N-\delta)$-spaces $\mathcal D_P$, with $P\in\mathcal E_{S,T}$.
\end{thm}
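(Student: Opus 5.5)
The plan is to reduce the statement to a statement about $\F_{q^m}$-subspaces of $\F_{q^m}^{\delta+1}$ contained in $\hat{\mathcal E}$, via the syndrome map.

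\textbf{Step 1 (each $\mathcal D_P$ lies in the code).} For each $P\in\mathcal E_{S,T}$ we have $\langle\bold p\rangle_{\F_{q^m}}\subseteq\hat{\mathcal E}$. Hence $\mathcal D_P=\syn_{\bold H}^{-1}(\langle\bold p\rangle_{\F_{q^m}})\subseteq\mathcal C_{S,T}$, and $\mathcal D_P$ is an $\F_{q^m}$-linear subcode of dimension $N-\delta$.

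\textbf{Step 2 (every linear subcode lies in some $\mathcal D_P$).} Let $\mathcal D\subseteq\mathcal C_{S,T}$ be an $\F_{q^m}$-linear subspace. Then $V=\syn_{\bold H}(\mathcal D)$ is an $\F_{q^m}$-subspace of $\F_{q^m}^{\delta+1}$ contained in $\hat{\mathcal E}$. Equivalently, $\mathbb P(V)\subseteq\mathcal E_{S,T}$. Since $S\neq\emptyset$, Corollary \ref{cor:no-line} says $\mathcal E_{S,T}$ contains no line, so $\dim_{\F_{q^m}}V\leq 1$.

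If $V=\{\bold 0\}$, then $\mathcal D\subseteq K\subseteq\mathcal D_P$ for every $P$. Otherwise $V=\langle\bold p\rangle_{\F_{q^m}}$ with $P=\langle\bold p\rangle\in\mathcal E_{S,T}$, and then $\mathcal D\subseteq\syn_{\bold H}^{-1}(V)=\mathcal D_P$. So every $\F_{q^m}$-linear subcode is contained in some $\mathcal D_P$ with $P\in\mathcal E_{S,T}$.

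\textbf{Step 3 (maximality and count).} All the $\mathcal D_P$ have the same dimension $N-\delta$. For $P\neq Q$ we have $\mathcal D_P\cap\mathcal D_Q=K\subsetneq\mathcal D_P$, so no $\mathcal D_P$ is contained in another. By Step 2, any linear subcode containing $\mathcal D_P$ lies in some $\mathcal D_Q$, which forces $Q=P$; hence each $\mathcal D_P$ is maximal. Conversely, Step 2 shows that any maximal linear subcode equals some $\mathcal D_P$. Distinct points give distinct subspaces, and $|\mathcal E_{S,T}|=q^m+1$ by \eqref{eq:size-E}, so there are exactly $q^m+1$ maximal linear subcodes.

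\textbf{Main obstacle.} The only substantive point is Step 2: that $\hat{\mathcal E}$ contains no $2$-dimensional $\F_{q^m}$-subspace. This is exactly Corollary \ref{cor:no-line}, so that is where the argument rests.

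If one wants to re-derive the corollary directly rather than cite it, the approach is as follows. Suppose a line $\ell\subseteq\mathcal E_{S,T}$. It has $q^m+1$ points, which is the whole of $|\mathcal E_{S,T}|$, so $\ell=\mathcal E_{S,T}$. In particular $\ell$ contains $A$ and $B$, hence $\ell=AB$. But a point $P(t)$ with $N_m(t)\in S$ has nonzero second coordinate and so is not on $AB$, a contradiction.
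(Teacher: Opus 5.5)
Your proof is correct and follows essentially the same route as the paper: you push a linear subcode forward by the syndrome map, use Corollary~\ref{cor:no-line} to force the image to have dimension at most one, and conclude that every linear subcode lies in $K$ or in some $\mathcal D_P$. Your explicit check that distinct $\mathcal D_P$ are incomparable (their intersection being $K$) makes the maximality and the count of $q^m+1$ slightly more transparent than in the paper, and your direct re-derivation of the no-line property is the same argument as in Remark~\ref{rem:nonadditive}.
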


\begin{proof} 
Let $\mathcal D$ be an $\F_{q^m}$-linear subspace contained in $\mathcal C_{S,T}$. Since $\syn_\bold{H}$ is $\F_{q^m}$-linear, $\syn_\bold{H}(\mathcal D)$ is an $\F_{q^m}$-subspace contained in $\hat{\mathcal E}$. We claim that \[ \dim_{\F_{q^m}}\syn_\bold{H}(\mathcal D)\leq1. \] Indeed, if this dimension were at least $2$, then $\mathbb P(\syn_\bold{H}(\mathcal D))$ would contain a projective line. Since \[ \mathbb P(\syn_\bold{H}(\mathcal D)) \subseteq \mathbb P(\hat{\mathcal E}) = \mathcal E_{S,T}, \] this would contradict Corollary~\ref{cor:no-line}. Thus either $\syn_\bold{H}(\mathcal D)=\{\boldsymbol{0}\}$ or $\syn_\bold{H}(\mathcal D)=\langle \bold{p}\rangle_{\F_{q^m}}$ for some $P=\langle \bold{p}\rangle \in\mathcal E_{S,T}$. In the first case, $\mathcal D\subseteq K$. In the second case, \[ \mathcal D\subseteq \syn_\bold{H}^{-1}\bigl(\langle \bold{p}\rangle_{\F_{q^m}}\bigr) = \mathcal D_P. \] Conversely, for every $P\in\mathcal E_{S,T}$, we have $\mathcal D_P\subseteq\mathcal C_{S,T}$. Moreover, $K\subsetneq\mathcal D_P$, because \[ \dim_{\F_{q^m}}\mathcal D_P=N-\delta \quad\text{and}\quad \dim_{\F_{q^m}}K=N-\delta-1. \] It follows that the maximal $\F_{q^m}$-linear subspaces contained in $\mathcal C_{S,T}$ are precisely the spaces $\mathcal D_P$. \end{proof}

\begin{thm}\label{thm:DP-MSRD}
Let $\bold{H}$ be the syndrome matrix defined in \eqref{H} and let $\F_q^*=S\,\dot\cup\,T$ be a
partition satisfying Condition~\eqref{eq:switch-conditions}.
Then the following hold.
\begin{enumerate}
\item [$(i)$]
For every $P\in\PG(\delta,q^m)$, the code $\mathcal D_P$ is an $\F_{q^m}$-linear MSRD code with $d_{\rm SR}(\mathcal D_P)=\delta+1$ if and only if $P\notin\Omega_{\delta-1}(\mathcal U_\Lambda)$.

\item [$(ii)$]
The code $\mathcal D_P$ is MSRD for every

\[P\in\{A,B\}\cup\bigcup_{b\in S}\mathcal X_b.
\]
\item [$(iii)$]
For every $a\in\F_q^*\setminus\Pi_\delta(\Lambda)$,
the code $\mathcal D_P$ is MSRD for every
$P\in\mathcal J_a$.
\end{enumerate}
Consequently, if $S\neq\emptyset$, then $\mathcal C_{S,T}$
contains at least
\[
2+
\bigl(|S|+|T\setminus\Pi_\delta(\Lambda)|\bigr)
\frac{q^m-1}{q-1}
\]
maximal $\F_{q^m}$-linear subcodes that are MSRD codes of
dimension $N-\delta$ and minimum sum-rank distance $\delta+1$.
\end{thm}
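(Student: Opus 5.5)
For $(i)$ I would work directly with the syndrome characterization of Lemma~\ref{lem:syndrome}. The code $\mathcal D_P$ has dimension $N-\delta$, so it is MSRD exactly when $d_{\rm SR}(\mathcal D_P)=\delta+1$, that is, when no nonzero codeword has weight at most $\delta$.

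A nonzero $\bold z\in\mathcal D_P$ either lies in $K$ or has $\langle\syn_\bold H(\bold z)\rangle=P$.
\begin{itemize}
\item If $\bold z\in K$, Lemma~\ref{lem:kernel-distance} gives $\mathrm{wt}_{\rm SR}(\bold z)\geq\delta+2$; when $\delta=N-1$ this case is empty since $K=\{\boldsymbol 0\}$.
\item If $\bold z\notin K$, Lemma~\ref{lem:syndrome} with $r=\delta$ shows that a codeword of weight at most $\delta$ forces $P\in\Omega_{\delta-1}(\mathcal U_\Lambda)$.
\end{itemize}
Conversely, if $P\in\Omega_{\delta-1}(\mathcal U_\Lambda)$, the second half of Lemma~\ref{lem:syndrome} produces $\bold z\notin K$ with $\syn_\bold H(\bold z)$ representing $P$ and $\mathrm{wt}_{\rm SR}(\bold z)\leq\delta$. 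This $\bold z$ lies in $\mathcal D_P$, so $\mathcal D_P$ is not MSRD. That settles $(i)$.

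For $(ii)$ and $(iii)$ the task is to show $P\notin\Omega_{\delta-1}(\mathcal U_\Lambda)$, which I would do by the hyperplane/root-space argument of Lemma~\ref{lem:points-exterior}, one dimension higher. Suppose $P$ lies in $\mathbb P(\langle W_1,\ldots,W_\ell\rangle_{\F_{q^m}})$ with $\sum_i r_i=\delta$. By Lemma~\ref{lem:independent-spans} this span has vector dimension $\delta$, so it is a hyperplane $\mathcal H_f$ with $f=\sum_{j=0}^{\delta} h_jX^j$. Point~3 of Lemma~\ref{lem:roots-curve} gives $d_{\lambda_i}(f)\geq r_i$, so $\sum_{\lambda\in\Lambda}d_\lambda(f)\geq\delta$.

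For $P=P(t)\in\mathcal X_b$ with $b\in S$, we have $b\notin\Lambda$ because $\Lambda\subseteq T$. Point~1 of Lemma~\ref{lem:roots-curve} gives $d_b(f)\geq1$. Proposition~\ref{prop:rootspace} with $\Xi=\Lambda\cup\{b\}$ then yields $\delta+1\leq\deg f\leq\delta$, a contradiction.

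For $P=A$ we get $h_\delta=0$, so $\deg f\leq\delta-1<\delta$, again a contradiction. For $P=B$ we get $h_0=0$, so $f=gX$ with $\deg g\leq\delta-1$. By \eqref{eq:right-X}, $\sum_{\lambda\in\Lambda}d_\lambda(g)=\sum_{\lambda\in\Lambda}d_\lambda(f)\geq\delta$, which contradicts Proposition~\ref{prop:rootspace}.

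For $(iii)$, let $P=(1,0,\ldots,0,(-1)^{\delta+1}u)$ with $N_m(u)=a$. Then $h_0=(-1)^{\delta}uh_\delta$. If $h_\delta=0$, then also $h_0=0$, and the same $f=gX$ argument with $\deg g\leq\delta-2$ gives a contradiction. So $h_0h_\delta\neq0$ and $\deg f=\delta$. The chain $\delta\leq\sum_{\lambda\in\Lambda} d_\lambda(f)\leq\deg f=\delta$ forces equality, with $d_{\lambda_i}(f)=r_i$ and $r_i\leq n_i$. Proposition~\ref{prop:viete} with $\Xi=\Lambda$ then gives $N_m(h_0/h_\delta)=(-1)^{\delta m}\prod_i\lambda_i^{r_i}$. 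Computing as in Theorem~\ref{thm:main-exterior}, this becomes $a=N_m(u)=\prod_i\lambda_i^{r_i}\in\Pi_\delta(\Lambda)$, contradicting the hypothesis on $a$.

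The delicate step is precisely this equality case: one has to check that the exponents satisfy $0\leq r_i\leq n_i$ and sum to $\delta$, so that the product genuinely lies in $\Pi_\delta(\Lambda)$, and that the sign bookkeeping matches.

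For the final count I would invoke Theorem~\ref{thm:maximal-subcodes}: when $S\neq\emptyset$, the maximal linear subcodes are exactly the $\mathcal D_P$ with $P\in\mathcal E_{S,T}$. By $(ii)$, the points $A$ and $B$ together with each $\mathcal X_b$, $b\in S$, give MSRD subcodes, and by $(iii)$ so does each $\mathcal J_a$ with $a\in T\setminus\Pi_\delta(\Lambda)$. Since every component has $(q^m-1)/(q-1)$ points and the components are pairwise disjoint, summing gives the stated bound.
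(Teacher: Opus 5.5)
Your proposal is correct and follows the paper's proof essentially step for step. It proves $(i)$ via Lemma~\ref{lem:syndrome} and Lemma~\ref{lem:kernel-distance}, proves $(ii)$ and $(iii)$ via the hyperplane/root-space argument with Proposition~\ref{prop:viete} in the equality case, and gets the count from Theorem~\ref{thm:maximal-subcodes}; the only step you leave implicit is the reduction to $\sum_i r_i=\delta$, which the paper justifies by enlarging the $W_i$ inside $U_{\lambda_i}$ (possible since $\dim_{\F_q}U_{\lambda_i}=n_i$ and $N>\delta$).
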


\begin{proof}
Let $P\in\PG(\delta,q^m)$. Since
$\dim_{\F_{q^m}}\mathcal D_P=N-\delta$, the Singleton-like bound
gives $d_{\rm SR}(\mathcal D_P)\leq\delta+1$, so $\mathcal D_P$
is MSRD if and only if it has no nonzero codeword of sum-rank
weight at most $\delta$.
Let $\bold{0}\neq\bold{z}\in\mathcal D_P$.
If $\bold{z}\in K$, then
$\operatorname{wt}_{\rm SR}(\bold{z})\geq\delta+2$
by Lemma~\ref{lem:kernel-distance}.
If $\bold{z}\notin K$ and
$\operatorname{wt}_{\rm SR}(\bold{z})\leq\delta$, then
Lemma~\ref{lem:syndrome}, applied with $r=\delta$, gives
\[
P=\langle\syn_{\bold{H}}(\bold{z})\rangle
\in\Omega_{\delta-1}(\mathcal U_\Lambda).
\]
Conversely, if $P\in\Omega_{\delta-1}(\mathcal U_\Lambda)$,
Lemma~\ref{lem:syndrome} gives $\bold{z}\notin K$ with
$\langle\syn_{\bold{H}}(\bold{z})\rangle=P$ and
$\operatorname{wt}_{\rm SR}(\bold{z})\leq\delta$.
Such a vector lies in $\mathcal D_P$, so $\mathcal D_P$
is not MSRD. This proves $(i)$.

To prove $(ii)$, it is enough to show that
\[
\Bigl(\{A,B\}\cup\bigcup_{b\in S}\mathcal X_b\Bigr)
\cap\Omega_{\delta-1}(\mathcal U_\Lambda)=\emptyset.
\]
Let $P\in\Omega_{\delta-1}(\mathcal U_\Lambda)$, say $P\in
\mathbb P(\langle W_1,\ldots,W_\ell\rangle_{\F_{q^m}})$
with $W_i\leq_{\F_q}U_{\lambda_i}$ and
$\sum_i\dim_{\F_q}W_i\leq\delta$.
Since $\dim_{\F_q}U_{\lambda_i}=n_i$ and $\sum_i n_i=N>\delta$,
we may enlarge the $W_i$ so that
$\sum_i r_i=\delta$, where $r_i=\dim_{\F_q}W_i$.
By Lemma~\ref{lem:independent-spans}, the space
$W=\langle W_1,\ldots,W_\ell\rangle_{\F_{q^m}}$
has dimension $\delta$, so $\mathbb P(W)$ is a hyperplane
\[
\mathcal H_f:
h_0X_0+\cdots+h_\delta X_\delta=0.
\]
Put $f(X)=\sum_{j=0}^{\delta}h_jX^j$.
Point~3 of Lemma~\ref{lem:roots-curve} gives
$d_{\lambda_i}(f)\geq r_i$, hence
$\sum_i d_{\lambda_i}(f)\geq\delta$.
We distinguish the following cases.

If $P=P(t)\in\mathcal X_b$ with $b\in S$, then
$b\notin\Lambda$ because $\Lambda\subseteq T$.
Point~1 of Lemma~\ref{lem:roots-curve} gives $f(t)=0$,
and hence $d_b(f)\geq1$.
Proposition~\ref{prop:rootspace}, applied to
$\Xi=\Lambda\cup\{b\}$, then yields
$\deg(f)\geq\delta+1$, contradicting $\deg(f)\leq\delta$.

If $P=A$, then $h_\delta=0$ and $\deg(f)\leq\delta-1$,
whereas
\[
\deg(f)\geq\sum_i d_{\lambda_i}(f)\geq\delta,
\]
a contradiction.\\
If $P=B$, then $h_0=0$, so $f=gX$ with
$\deg(g)\leq\delta-1$.
By \eqref{eq:right-X} and Proposition~\ref{prop:rootspace},
\[
\deg(g)
\geq\sum_i d_{\lambda_i}(g)
=\sum_i d_{\lambda_i}(f)
\geq\delta,
\]
again a contradiction.
Thus $(ii)$ follows from $(i)$.

To prove $(iii)$, let $a\in\F_q^*$ and suppose that
\[
P=(1,0,\ldots,0,(-1)^{\delta+1}u)
\in\mathcal J_a\cap\Omega_{\delta-1}(\mathcal U_\Lambda),
\qquad N_m(u)=a.
\]
Choose $W_i$ and $\mathcal H_f$ as above. Since
$P\in\mathcal H_f$, we have $h_0+(-1)^{\delta+1}uh_\delta=0.$
If $h_\delta=0$, then $h_0=0$, so $f=gX$ with
$\deg(g)\leq\delta-2$ and, as above, $\deg(g)\geq\delta$,
a contradiction.
Hence $h_0h_\delta\neq0$ and $\deg(f)=\delta$.
Therefore
\[
\delta=\deg(f)
\geq\sum_i d_{\lambda_i}(f)
\geq\delta,
\]
which forces $d_{\lambda_i}(f)=r_i$ for every $i$ and
$\deg(f)=\sum_{\lambda\in\Lambda}d_\lambda(f)$.
Since $u=(-1)^\delta h_0/h_\delta$,
Proposition~\ref{prop:viete}, applied to $\Xi=\Lambda$, gives
\[
\begin{aligned}
a=N_m(u)
&=N_m\bigl((-1)^\delta\bigr)
  N_m\!\left(\frac{h_0}{h_\delta}\right)\\
&=(-1)^{\delta m}(-1)^{\delta m}
  \prod_{i=1}^{\ell}\lambda_i^{r_i}\\
&=\prod_{i=1}^{\ell}\lambda_i^{r_i}
\in\Pi_\delta(\Lambda),
\end{aligned}
\]
since $0\leq r_i\leq n_i$ and $\sum_i r_i=\delta$.
Thus $\mathcal J_a\cap\Omega_{\delta-1}(\mathcal U_\Lambda)
=\emptyset$ whenever $a\notin\Pi_\delta(\Lambda)$.
Assertion $(iii)$ now follows from $(i)$.

Finally, suppose that $S\neq\emptyset$.
By Theorem~\ref{thm:maximal-subcodes}, the codes $\mathcal D_P$,
with $P\in\mathcal E_{S,T}$, are pairwise distinct maximal
$\F_{q^m}$-linear subcodes of $\mathcal C_{S,T}$.
By $(ii)$ and $(iii)$, those indexed by
\[
\{A,B\}
\cup\bigcup_{b\in S}\mathcal X_b
\cup\bigcup_{a\in T\setminus\Pi_\delta(\Lambda)}\mathcal J_a
\]
are MSRD.
Since these components are pairwise disjoint and each
$\mathcal X_b$ and $\mathcal J_a$ has $(q^m-1)/(q-1)$ points,
their total number of points is
\[
2+
\bigl(|S|+|T\setminus\Pi_\delta(\Lambda)|\bigr)
\frac{q^m-1}{q-1}.
\]
This proves the final assertion.
\end{proof}

\begin{remark}\label{rem:exceptional-classes} Let $G= \langle \Pi_{\delta-1}(\Lambda) \rangle $, we have \[ \Pi_\delta(\Lambda) \subseteq \Lambda G = \lambda_1G \subseteq T. \] Thus all the possible exceptional classes in Theorem~\ref{thm:DP-MSRD} belong to $T$, and \[ T\cap\Pi_\delta(\Lambda)=\Pi_\delta(\Lambda)\neq\emptyset. \] 
\end{remark}
\medskip
Let $\varphi:\F_{q^m}^{N_1}\longrightarrow \F_{q^m}^{N_2}$ be a semilinear map such that $\varphi(K_1)\subseteq K_2$, where $K_1$ and $K_2$ are $\F_{q^m}$-vector subspaces of $\F_{q^m}^{N_1}$ and $\F_{q^m}^{N_2}$, respectively. Then $\varphi$ induces a  semilinear map \[ \overline{\varphi}:\F_{q^m}^{N_1}/K_1\longrightarrow \F_{q^m}^{N_2}/K_2, \qquad \overline{\varphi}(\bold{v}+K_1)=\varphi(\bold{v})+K_2. \]

\begin{thm}\label{cor:geometric-invariant}
For $j\in\{1,2\}$, let $S_j,T_j\subseteq\F_q^*$ satisfy Condition~$(\diamond)$. Let $\bold{H}_j$ be the
corresponding syndrome matrix as in \eqref{H}, and put
$K_j=\ker(\syn_{\bold{H}_j})$ and $\mathcal E_j=\mathcal E_{S_j,T_j}$, so
that $\mathcal C_j=\syn_{\bold{H}_j}^{-1}(\hat{\mathcal E}_j)$.\\
Assume $S_j \neq\emptyset$, $j \in \{1,2\}$. If $\varphi$ is an $\F_{q^m}$-semilinear sum-rank
isometry of $\F_{q^m}^{N}$ with $\varphi(\mathcal C_1)=\mathcal C_2$, then
$\varphi(K_1)=K_2$, and
\[
  \psi=\overline{\syn}_{\bold{H}_2}\circ\overline{\varphi}\circ
  \overline{\syn}_{\bold{H}_1}^{-1}
\]
is an invertible $\F_{q^m}$-semilinear map of $\F_{q^m}^{\delta+1}$ such
that $\psi(\hat{\mathcal E}_1)=\hat{\mathcal E}_2$. In particular,
$\mathcal E_1$ and $\mathcal E_2$ are equivalent under
$\mathrm{P}\Gamma\mathrm{L}(\delta+1,q^m)$.
\end{thm}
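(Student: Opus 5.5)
The plan is to recover $K_j$ intrinsically from $\mathcal C_j$ using Theorem~\ref{thm:maximal-subcodes}, and then descend $\varphi$ to the quotient. An $\F_{q^m}$-semilinear bijection $\varphi$ maps $\F_{q^m}$-subspaces to $\F_{q^m}$-subspaces of the same dimension and preserves inclusion. Since $\varphi(\mathcal C_1)=\mathcal C_2$, it therefore induces a bijection between the maximal $\F_{q^m}$-linear subcodes of $\mathcal C_1$ and those of $\mathcal C_2$. By Theorem~\ref{thm:maximal-subcodes}, applicable because $S_j\neq\emptyset$, these are exactly the spaces $\mathcal D_P$ with $P\in\mathcal E_j$. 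There are $q^m+1\geq 2$ of them, and any two distinct ones intersect in $K_j$, as noted after \eqref{DP}. Taking two distinct $P,Q\in\mathcal E_1$, we get
\[
\varphi(K_1)=\varphi(\mathcal D_P\cap\mathcal D_Q)=\varphi(\mathcal D_P)\cap\varphi(\mathcal D_Q)=K_2,
\]
because $\varphi(\mathcal D_P)$ and $\varphi(\mathcal D_Q)$ are two distinct maximal linear subcodes of $\mathcal C_2$. One could alternatively characterize $K_j$ as the intersection of all maximal linear subcodes, which gives the same conclusion. The isometry property of $\varphi$ is not needed for this step; only semilinearity and bijectivity are used.

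Next I would build $\psi$. Since $\syn_{\bold H_j}$ is surjective with kernel $K_j$, it induces an $\F_{q^m}$-linear isomorphism $\overline{\syn}_{\bold H_j}:\F_{q^m}^N/K_j\to\F_{q^m}^{\delta+1}$. Because $\varphi(K_1)\subseteq K_2$, the map $\overline{\varphi}$ is well defined and semilinear, with the same field automorphism as $\varphi$. It is also bijective: $\varphi$ is bijective and $\varphi(K_1)=K_2$, so its kernel on the quotient is trivial and it is surjective. Hence $\psi$ is an invertible $\F_{q^m}$-semilinear map of $\F_{q^m}^{\delta+1}$.

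To check that $\psi(\hat{\mathcal E}_1)=\hat{\mathcal E}_2$, I would use that $\mathcal C_j=\syn_{\bold H_j}^{-1}(\hat{\mathcal E}_j)$ is a union of cosets of $K_j$ (Proposition~\ref{conecode}), so $\overline{\syn}_{\bold H_j}(\mathcal C_j/K_j)=\hat{\mathcal E}_j$, using surjectivity of the syndrome map. Then $\overline{\varphi}(\mathcal C_1/K_1)=\mathcal C_2/K_2$ follows from $\varphi(\mathcal C_1)=\mathcal C_2$ and $\varphi(K_1)=K_2$. Composing the three maps gives the claim.

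Finally, since $\psi$ is semilinear and invertible, it induces an element of $\mathrm{P}\Gamma\mathrm{L}(\delta+1,q^m)$, and this element sends $\mathbb P(\hat{\mathcal E}_1)=\mathcal E_1$ onto $\mathcal E_2$. The only delicate point is the identification $\varphi(K_1)=K_2$. This rests on $S_j\neq\emptyset$, which excludes the linear case where the whole code is a single linear space and $K_j$ cannot be recovered from $\mathcal C_j$.
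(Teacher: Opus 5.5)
Your proposal is correct and follows essentially the same route as the paper: recover $K_j$ as the intersection of the maximal $\F_{q^m}$-linear subcodes $\mathcal D_P$ from Theorem~\ref{thm:maximal-subcodes}, descend $\varphi$ to the quotients, and use that $\mathcal C_j$ is a union of $K_j$-cosets whose image under the induced syndrome isomorphism is $\hat{\mathcal E}_j$. Intersecting only two distinct maximal subcodes instead of all of them makes no difference, and you are right that the isometry property of $\varphi$ is never used; only semilinearity and bijectivity are.
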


\begin{proof}
By Theorem~\ref{thm:maximal-subcodes}, the maximal $\F_{q^m}$-linear
subspaces contained in $\mathcal C_j$ are the spaces
$\syn_{\bold{H}_j}^{-1}\bigl(\langle\bold{p}\rangle_{\F_{q^m}}\bigr)$, with
$\langle\bold{p}\rangle\in\mathcal E_j$, and their intersection is $K_j$.
Since $\varphi$ and $\varphi^{-1}$ are $\F_{q^m}$-semilinear, they map
$\F_{q^m}$-subspaces to $\F_{q^m}$-subspaces and preserve inclusions. Hence
$\varphi(\mathcal C_1)=\mathcal C_2$ implies that $\varphi$ maps the maximal
$\F_{q^m}$-linear subspaces of $\mathcal C_1$ onto those of $\mathcal C_2$,
and therefore $\varphi(K_1)=K_2$. Thus $\overline{\varphi}$ is a
well-defined semilinear isomorphism. Since $\bold{H}_j$ has full rank,
$\syn_{\bold{H}_j}$ is surjective with kernel $K_j$, so
$\overline{\syn}_{\bold{H}_j}$ is an $\F_{q^m}$-linear isomorphism. It
follows that $\psi=\psi=\overline{\syn}_{\bold{H}_2}\circ\overline{\varphi}\circ
  \overline{\syn}_{\bold{H}_1}^{-1}$ is an invertible semilinear map. Finally,
$\mathcal C_j$ is a union of cosets of $K_j$ with
$\overline{\syn}_{\bold{H}_j}(\mathcal C_j/K_j)=\hat{\mathcal E}_j$. Since
$\overline{\varphi}(\mathcal C_1/K_1)=\mathcal C_2/K_2$, we get
$\psi(\hat{\mathcal E}_1)=\hat{\mathcal E}_2$. Passing to projective points
gives the last assertion.
\end{proof}

\begin{remark} Theorem~\ref{cor:geometric-invariant} shows that the projective equivalence class of the switched set $\mathcal E_{S,T}$ is an invariant of the code under $\F_{q^m}$-semilinear sum-rank equivalence. In particular, if $S\neq\emptyset$, then $\mathcal E_{S,T}$ is not a line by Corollary~\ref{cor:no-line}, whereas $\mathcal E_{\emptyset,\F_q^*}=AB.$ Consequently, no code of the family with $S\neq\emptyset$ is $\F_{q^m}$-semilinearly sum-rank equivalent to $\mathcal C_{\emptyset,\F_q^*}$. Since $\mathcal C_{\emptyset,\F_q^*}$ is a linearized Reed--Solomon code, this gives a geometric distinction between the non-additive members of the family and the linear ones. More generally, the multiset \[ \left\{\!\{ |\mathcal E_{S,T}\cap\ell|: \ell\text{ is a line of }\PG(\delta,q^m) \right\}\!\} \] is an invariant of $\mathcal C_{S,T}$. In particular, this multiset contains the intersection number \[ |\mathcal E_{S,T}\cap AB| = 2+|T|\frac{q^m-1}{q-1}. \] Such intersection may be used to distinguish members of the family from one another and from other constructions. \end{remark}

For a subset $\mathcal S\subseteq\F_{q^m}^{N}$ and an integer $w\geq 0$, we set
\[
  A_w(\mathcal S)=\big|\{\bold{x}\in\mathcal S \,:\, \mathrm{wt}_{\rm SR}(x)=w\}\big|.
\]

Let $1\leq k\leq N$ and let
$\mathcal C\subseteq\F_{q^m}^{N}$ be an $\F_{q^m}$-linear MSRD code
of dimension $k$, with block lengths $n_1,\ldots,n_\ell$.
The sum-rank weight distribution of $\F_{q^m}$-linear MSRD codes
has been completely determined in~\cite[Theorem~VI.5 and
Remark~VI.11]{BGR}. In particular, $A_w(\mathcal C)$ depends
only on the parameters $q,m,n_1,\ldots,n_\ell, k,w$.
We denote this value by $\mathcal A_w(k)$. For $k=0$, we set
\[
\mathcal A_w(0)=A_w(\{\bold{0}\}),
\]
that is,
\[
\mathcal A_0(0)=1,
\qquad
\mathcal A_w(0)=0
\quad\text{for }w>0.
\]

\begin{lem}\label{lem:switch-weight-balance}
Let $\bold{H}$ be the syndrome matrix defined in \eqref{H} and let $\F_q^*=S\,\dot\cup\,T$ be a
partition satisfying Condition~\eqref{eq:switch-conditions}.
For every $b\in S$ and every integer $0\leq w\leq N$, one has
\begin{equation}\label{eq:switch-weight-balance}
\sum_{P\in\mathcal X_b} A_w(\mathcal D_P)
=
\sum_{P\in\mathcal J_b} A_w(\mathcal D_P)
=
\frac{q^m-1}{q-1}\,\mathcal A_w(N-\delta),
\end{equation}
\end{lem}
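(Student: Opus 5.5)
The plan is to show that every subcode $\mathcal D_P$ occurring in either sum is an $\F_{q^m}$-linear MSRD code of dimension $N-\delta$. The weight distribution of each such $\mathcal D_P$ is then the universal one, $\mathcal A_w(N-\delta)$, and each sum is just this value times the number of points in the component.

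\emph{Step 1 (the curve components).} Fix $b\in S$. By Theorem~\ref{thm:DP-MSRD}$(ii)$, $\mathcal D_P$ is MSRD for every $P\in\mathcal X_b$. It is $\F_{q^m}$-linear of dimension $N-\delta$, since $\syn_{\bold H}$ is surjective (see the discussion after \eqref{DP}). The weight distribution of an $\F_{q^m}$-linear MSRD code depends only on $q,m,n_1,\ldots,n_\ell,k,w$ \cite[Theorem~VI.5 and Remark~VI.11]{BGR}. Hence $A_w(\mathcal D_P)=\mathcal A_w(N-\delta)$ for every $P\in\mathcal X_b$ and every $0\le w\le N$. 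The component $\mathcal X_b$ is a scattered $\F_q$-linear set of rank $m$, so $|\mathcal X_b|=(q^m-1)/(q-1)$. Summing over $P\in\mathcal X_b$ gives the right-hand equality for the first sum.

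\emph{Step 2 (the line components).} The subtle point is that $\mathcal J_b$ with $b\in S$ is not a component of $\mathcal E_{S,T}$. Statement $(iii)$ of Theorem~\ref{thm:DP-MSRD} nevertheless applies to any $a\in\F_q^*\setminus\Pi_\delta(\Lambda)$, so it suffices to check $b\notin\Pi_\delta(\Lambda)$. This follows from Remark~\ref{rem:exceptional-classes}, which gives $\Pi_\delta(\Lambda)\subseteq\Lambda G\subseteq T$ under Condition~\eqref{eq:switch-conditions}, together with $b\in S$ and $S\cap T=\emptyset$.

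With $b\notin\Pi_\delta(\Lambda)$ established, $(iii)$ shows that $\mathcal D_P$ is MSRD of dimension $N-\delta$ for every $P\in\mathcal J_b$. The same appeal to \cite{BGR} gives $A_w(\mathcal D_P)=\mathcal A_w(N-\delta)$. Since $\mathcal J_b$ is a scattered linear set of rank $m$, it also has $(q^m-1)/(q-1)$ points, which yields the second equality.

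The only step needing care is the inclusion $\Pi_\delta(\Lambda)\subseteq T$ in Step 2. I would verify it directly, not just cite the remark. Any $\prod_i\lambda_i^{e_i}$ with $\sum_i e_i=\delta$ can be written as $\lambda_j\cdot\prod_i\lambda_i^{e_i'}$ with $\sum_i e_i'=\delta-1$ and $0\le e_i'\le n_i$, by lowering one positive exponent $e_j$ by one. So it lies in $\Lambda\,\Pi_{\delta-1}(\Lambda)\subseteq\Lambda G$. Moreover, $\Lambda G$ is disjoint from $S$ by Remark~\ref{Scoset}, hence contained in $T$. Everything else is a direct count.
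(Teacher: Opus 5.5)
Your proposal is correct and follows the paper's proof essentially step for step. You apply Theorem~\ref{thm:DP-MSRD}$(ii)$ to $\mathcal X_b$ and $(iii)$ to $\mathcal J_b$ after deducing $b\notin\Pi_\delta(\Lambda)$ from Remark~\ref{rem:exceptional-classes}, then invoke \cite{BGR} and count $(q^m-1)/(q-1)$ points per component; your direct check that $\Pi_\delta(\Lambda)\subseteq\Lambda\,\Pi_{\delta-1}(\Lambda)\subseteq\Lambda G\subseteq T$ simply fills in the justification the paper leaves to that remark.
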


\begin{proof}
Fix $b\in S$. By Remark~\ref{rem:exceptional-classes},
$\Pi_\delta(\Lambda)\subseteq T$.
Since $S\cap T=\emptyset$, it follows that
$b\notin\Pi_\delta(\Lambda).$

By Theorem~\ref{thm:DP-MSRD}$(ii)$, for every
$P\in\mathcal X_b$, the code $\mathcal D_P$ is an
$\F_{q^m}$-linear MSRD code of dimension $N-\delta$.
Moreover, since $b\notin\Pi_\delta(\Lambda)$,
Theorem~\ref{thm:DP-MSRD}$(iii)$ yields the same conclusion for every
$P\in\mathcal J_b$. Hence all the codes $\mathcal D_P$ with 
$P\in\mathcal X_b\cup\mathcal J_b$,
are $\F_{q^m}$-linear MSRD codes of dimension $N-\delta$.

By \cite[Theorem~VI.5 and Remark~VI.11]{BGR}, the sum-rank weight
distribution of an MSRD code is determined by its parameters.
Therefore,  
\[
A_w(\mathcal D_P)=\mathcal A_w(N-\delta)
\]
for every
$P\in\mathcal X_b\cup\mathcal J_b$.

Finally, since 
\[
|\mathcal X_b|
=
|\mathcal J_b|
=
\frac{q^m-1}{q-1},
\]
we have 
\[
\sum_{P\in\mathcal X_b} A_w(\mathcal D_P)
=
\sum_{P\in\mathcal J_b} A_w(\mathcal D_P)
=
\frac{q^m-1}{q-1}\,\mathcal A_w(N-\delta).
\]
\end{proof}

\begin{thm}\label{thm:weight-invariance}
Let $\bold{H}$ be the syndrome matrix defined in \eqref{H} and let $\F_q^*=S\,\dot\cup\,T$ be a
partition satisfying Condition~\eqref{eq:switch-conditions}.
Then, for every integer $0\leq w\leq N$,
\begin{equation}\label{eq:weight-invariance}
A_w(\mathcal C_{S,T})
=
A_w(\mathcal C_{\emptyset,\F_q^*})
=
\mathcal A_w(N-\delta+1).
\end{equation}
Consequently, the sum-rank weight distribution of
$\mathcal C_{S,T}$ depends only on the parameters $q,\ m,\ n_1,\ldots,n_\ell, \ \delta$.
In particular, the number of codewords of minimum sum-rank weight is
\begin{equation}\label{eq:minimum-weight-count}
A_\delta(\mathcal C_{S,T})
=
(q^m-1)
\sum_{\substack{
r_1+\cdots+r_\ell=\delta\\
0\leq r_i\leq n_i}}
\prod_{i=1}^{\ell}
\genfrac{[}{]}{0pt}{}{n_i}{r_i}_q.
\end{equation}
\end{thm}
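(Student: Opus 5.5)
We will compare $\mathcal C_{S,T}$ with $\mathcal C_{\emptyset,\F_q^*}$ through the decomposition of Remark~\ref{rm:dec}. Applied to both partitions, it gives
\[
A_w(\mathcal C_{S,T})=A_w(K)+\sum_{P\in\mathcal E_{S,T}}\bigl(A_w(\mathcal D_P)-A_w(K)\bigr),
\]
and the analogous formula for $\mathcal E_{\emptyset,\F_q^*}=AB$. The two point sets share $\{A,B\}$ and every component $\mathcal J_a$ with $a\in T$. Subtracting, the terms coming from $K$ cancel, because both sets have $q^m+1$ points. We are left with
\[
A_w(\mathcal C_{S,T})-A_w(\mathcal C_{\emptyset,\F_q^*})=\sum_{b\in S}\Bigl(\sum_{P\in\mathcal X_b}A_w(\mathcal D_P)-\sum_{P\in\mathcal J_b}A_w(\mathcal D_P)\Bigr).
\]
By Lemma~\ref{lem:switch-weight-balance}, each bracket vanishes. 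This proves the first equality in \eqref{eq:weight-invariance}.

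For the second equality, the Remark treating the trivial case $S=\emptyset$ shows that $\mathcal C_{\emptyset,\F_q^*}$ is an $\F_{q^m}$-linear MSRD code of dimension $N-\delta+1$. By \cite[Theorem~VI.5]{BGR}, its weight distribution is $\mathcal A_w(N-\delta+1)$. Hence the distribution depends only on $q,m,n_1,\ldots,n_\ell,\delta$.

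For the count \eqref{eq:minimum-weight-count}, we compute $A_\delta$ directly from the decomposition rather than from the general BGR formula. The kernel $K$ has minimum distance $\delta+2$ by Lemma~\ref{lem:kernel-distance}, so it contributes nothing. Therefore $A_\delta(\mathcal C_{S,T})$ is the number of vectors $\bold z\notin K$ of weight $\delta$ whose syndrome point lies in $\mathcal E_{S,T}$. By the invariance just proved, we may instead count in the linear code $\mathcal C_{\emptyset,\F_q^*}$, using the standard description of minimum-weight codewords of an MSRD code.

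A codeword of weight $\delta$ determines supports $V_i\le\F_q^{n_i}$, namely the $\F_q$-row spaces of its blocks, with $\dim V_i=r_i$ and $\sum r_i=\delta$. The codewords supported inside a fixed tuple $(V_i)$ form an $\F_{q^m}$-space of dimension $k-(N-\delta)=1$, where $k=N-\delta+1$. This follows from the MSRD property, since restricting to a set of weight $\delta$ gives a puncturing argument of Singleton type. Every nonzero vector of that line has full support: a proper subsupport would produce a nonzero codeword of weight less than $\delta$. The tuple of supports is then uniquely determined by the codeword.

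Hence there are exactly $q^m-1$ codewords of weight $\delta$ for each choice of $(V_1,\ldots,V_\ell)$. Summing $\prod_i\genfrac{[}{]}{0pt}{}{n_i}{r_i}_q$ over the admissible $(r_i)$ gives the formula. The main obstacle is justifying that each support tuple carries exactly a one-dimensional space of codewords. This follows from the fact that shortening an MSRD code on a support tuple of total rank $\delta$ leaves an MSRD code of dimension $1$. If needed, it can alternatively be read off from \cite[Theorem~VI.5]{BGR} at $w=d$.
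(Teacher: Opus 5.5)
Your proof of \eqref{eq:weight-invariance} is the paper's argument. You apply the decomposition of Remark~\ref{rm:dec} to both point sets, cancel the common components $\{A,B\}\cup\bigcup_{a\in T}\mathcal J_a$ and the $K$-terms (both sets have $q^m+1$ points), and kill each remaining bracket with Lemma~\ref{lem:switch-weight-balance}. You then invoke \cite{BGR} for the linear MSRD code $\mathcal C_{\emptyset,\F_q^*}$.

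The only divergence is the minimum-weight count. The paper reads \eqref{eq:minimum-weight-count} off the BGR weight distribution, whereas you derive it by a support-counting (shortening) argument in $\mathcal C_{\emptyset,\F_q^*}$. Your argument is correct, but state the key step precisely; ``puncturing argument of Singleton type'' is the wrong description. Take a tuple $(V_1,\ldots,V_\ell)$ with $V_i\le\F_q^{n_i}$ and $\sum_i\dim V_i=\delta$. The vectors whose block supports lie in the $V_i$ form the $\F_{q^m}$-space $\mathcal V=\bigoplus_i (V_i\otimes\F_{q^m})$ of dimension $\delta$. Then $\dim(\mathcal C_{\emptyset,\F_q^*}\cap\mathcal V)\geq (N-\delta+1)+\delta-N=1$ by a plain dimension count. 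It is also at most $1$: a $2$-dimensional intersection would meet the hyperplane of $\mathcal V$ obtained by shrinking one $V_i$ by one dimension in a nonzero codeword of weight at most $\delta-1$. This is the Singleton bound for the shortened code.

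What your route buys is a transparent explanation of the formula. Minimum-weight codewords, up to $\F_{q^m}^*$-scalars, are in bijection with support tuples of total rank $\delta$, and this uses only that $\mathcal C_{\emptyset,\F_q^*}$ is $\F_{q^m}$-linear MSRD. It does not remove the dependence on \cite{BGR}, which is still needed for the second equality in \eqref{eq:weight-invariance} at every weight $w$.
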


\begin{proof}
Let
$K=\ker(\syn_{\bold H})$ and 
consider first the trivial partition
$S= \emptyset$ and $T= \F_q^*$. As seen above, $\mathcal E_{\emptyset,\F_q^*}=AB$,
and 
$\mathcal C_{\emptyset,\F_q^*}
=
\syn_{\bold H}^{-1}
\bigl(
\langle\boldsymbol e_0,\boldsymbol e_\delta
\rangle_{\F_{q^m}}
\bigr)$ is MSRD. Hence, by
\cite[Theorem~VI.5 and Remark~VI.11]{BGR},
\begin{equation}\label{eq:C0-weight-distribution}
A_w(\mathcal C_{\emptyset,\F_q^*})
=
\mathcal A_w(N-\delta+1)
\end{equation}
for every $0\leq w\leq N$.

Now let $(S,T) \neq (\emptyset, \F_q^*)$.
For a projective point set
$\mathcal E\subseteq\PG(\delta,q^m)$,
let $$\hat{\mathcal E}=\{\alpha \bold{v} \in \F_{q^m}^{\delta+1} \colon  \alpha \in \F_{q^m}, \langle \bold{v} \rangle \in \mathcal{E} \}$$ 
and put $\mathcal C(\mathcal E)
=
\syn_{\bold H}^{-1}(\hat{\mathcal E}).$
By Remark~\ref{rm:dec},
\[
\mathcal C(\mathcal E)
=
K\,\dot\cup\!
\bigcup_{P\in\mathcal E}
(\mathcal D_P\setminus K).
\]
and hence
\begin{equation}\label{eq:general-weight-decomposition}
A_w(\mathcal C(\mathcal E))
=
\sum_{P\in\mathcal E}A_w(\mathcal D_P)
-
q^m A_w(K).
\end{equation}

By construction, the point set  $\mathcal E_{S,T}$ is obtained from $AB$ by replacing
$\mathcal J_b$ with $\mathcal X_b$ for every $b\in S$.

Subtracting the two instances of
\eqref{eq:general-weight-decomposition}, we obtain
\[
\begin{aligned}
&A_w(\mathcal C_{S,T})
-
A_w(\mathcal C_{\emptyset,\F_q^*})
\\
&\qquad =
\sum_{b\in S}
\left(
\sum_{P\in\mathcal X_b}A_w(\mathcal D_P)
-
\sum_{P\in\mathcal J_b}A_w(\mathcal D_P)
\right).
\end{aligned}
\]
By Lemma~\ref{lem:switch-weight-balance}, every term in parentheses
vanishes. Therefore,
\[
A_w(\mathcal C_{S,T})
=
A_w(\mathcal C_{\emptyset,\F_q^*})
\]
for every $0\leq w\leq N$.
Together with \eqref{eq:C0-weight-distribution}, this proves
\eqref{eq:weight-invariance}.
Since $\mathcal C_{\emptyset,\F_q^*}$ is an $\F_{q^m}$-linear MSRD code
of minimum sum-rank distance $\delta$, its support distribution, and hence
its sum-rank weight distribution, is completely determined by its
parameters; see \cite[Theorem~VI.5 and Remark~VI.11]{BGR}. In particular,
the number of codewords of minimum sum-rank weight is as \eqref{eq:minimum-weight-count}.
\end{proof}

\section{The case \texorpdfstring{$\ell=1$}{ell=1}}
\label{sec:rank-recovery}

Assume that $\ell=1$, $\Lambda=\{1\}$ and $a_1=1$. Let $N=n_1=m$ and $2\leq\delta\leq m-1$.
Since $\Pi_{\delta-1}(\{1\})=\{1\}$, the switching conditions reduce
to $1\in T$, where $S=\F_q^*\setminus T$. The set
$\mathcal E_{S,T}$ is the Donati--Durante exterior set
$\mathcal E_T$ in \eqref{eq:ET}, with $d=\delta$.

Let $(\beta_1,\ldots,\beta_m)$ be the $\F_q$-basis used to construct
$\bold{H}$ (cf. \eqref{H}). Our choice $a_1=1$ gives the Moore matrix
\[
 \bold{H}=
 \begin{pmatrix}
  \beta_1&\cdots&\beta_m\\
  \beta_1^\sigma&\cdots&\beta_m^\sigma\\
  \vdots&\ddots&\vdots\\
  \beta_1^{\sigma^\delta}&\cdots&\beta_m^{\sigma^\delta}
 \end{pmatrix}.
\]
The following result identifies the entire syndrome preimage with the
square cone code recalled in Section \ref{DonDurcode}.

Let $\sigma$ be a generator of $\mathrm{Gal}(\F_{q^m} \vert \F_q)$ and denote by
\[
\mathcal{L}_{m,q,\sigma}[X]
=
\left\{
\sum_{i=0}^{m-1} f_i X^{\sigma^i}
\colon f_i \in \F_{q^m}
\right\}
\]
the $\F_{q^m}$-vector space of $\sigma$-linearized polynomials of
$\sigma$-degree less than $m$.
Evaluation on $\F_{q^m}$ induces an isomorphism of $\F_{q^m}$-vector spaces
\[
\mathcal{L}_{m,q,\sigma}[X]
\longrightarrow \mathrm{End}_{\F_q}(\F_{q^m}),
\qquad f\longmapsto\bigl(x\longmapsto f(x)\bigr).
\]
Clearly, we use this identification to endow $\mathcal{L}_{m,q,\sigma}[X]$ with
the rank distance
\[
d_{\rm R}(f,g)=\operatorname{rk}_{\F_q}(f-g),
\]
where the rank is that of the associated $\F_q$-linear map, see \cite[Section~3.4]{LN}.

\begin{prop}\label{prop:rank-recovery}
Let $k=m-\delta+1$, and let $\mathcal C_{\sigma,T}$ be the code defined
in \eqref{eq:CsigmaT-geometric}, with $d=\delta$. The map
\[
\begin{split}
\Psi:\F_{q^m}^m&\longrightarrow\mathcal{L}_{m,q,\sigma}[X],\\
\bold{z}=(z_1,\ldots,z_m)&\longmapsto
\sum_{i=0}^{m-1}
\left(\sum_{j=1}^m z_j\beta_j^{\sigma^i}\right)X^{\sigma^i}
\end{split}
\]
is an $\F_{q^m}$-linear rank-metric isometry satisfying
\[
\Psi(\mathcal C_{S,T})=\mathcal C_{\sigma,T}.
\]
Thus the full-block construction with $\ell=1$, $\Lambda=\{1\}$ and
$a_1=1$ recovers the square MRD codes in
Section~\ref{DonDurcode}.
\end{prop}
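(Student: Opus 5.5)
The plan is to compute the coefficient vector of $\Psi(\bold z)$ explicitly and observe that its first $\delta+1$ entries are exactly the syndrome $\syn_\bold{H}(\bold z)$. Membership in both $\mathcal C_{S,T}$ and $\mathcal C_{\sigma,T}$ is then governed by the same set $\hat{\mathcal E}$.

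First, $\F_{q^m}$-linearity of $\Psi$ is clear. For the isometry property, I evaluate $\Psi(\bold z)$ at $x\in\F_{q^m}$. Since $\sigma$ generates $\mathrm{Gal}(\F_{q^m}\vert\F_q)$, for every $j$ we have $\sum_{i=0}^{m-1}\beta_j^{\sigma^i}x^{\sigma^i}=\mathrm{Tr}_{q^m/q}(\beta_jx)$, and hence
\[
\Psi(\bold z)(x)=\sum_{j=1}^m z_j\,\mathrm{Tr}_{q^m/q}(\beta_j x).
\]
The map $x\mapsto(\mathrm{Tr}_{q^m/q}(\beta_1x),\ldots,\mathrm{Tr}_{q^m/q}(\beta_mx))$ is an $\F_q$-linear bijection $\F_{q^m}\to\F_q^m$, because the trace form is non-degenerate and $(\beta_j)$ is a basis. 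Therefore the image of $\Psi(\bold z)$ is exactly $\langle z_1,\ldots,z_m\rangle_{\F_q}$. This gives $\operatorname{rk}_{\F_q}\Psi(\bold z)=\operatorname{wt}_{\rm SR}(\bold z)$ (here $\ell=1$), so $\Psi$ is an isometry. Injectivity follows, and by dimension count $\Psi$ is bijective. Equivalently, the full $m\times m$ Moore matrix $(\beta_j^{\sigma^i})$ is invertible.

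Next, I write $\Psi(\bold z)=f_{\boldsymbol\alpha}$ with $\alpha_i=\sum_j z_j\beta_j^{\sigma^i}$. Since $a_1=1$, the matrix $\bold H$ is the Moore matrix above, so $(\alpha_0,\ldots,\alpha_\delta)=\syn_\bold{H}(\bold z)$. The coordinates $\alpha_{\delta+1},\ldots,\alpha_{m-1}$ are the remaining rows of the invertible $m\times m$ Moore matrix. Hence, as $\bold z$ ranges over $\syn_\bold{H}^{-1}(\bold y)$, the tail $(\alpha_{\delta+1},\ldots,\alpha_{m-1})$ ranges over all of $\F_{q^m}^{m-\delta-1}$. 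It follows that
\[
\Psi(\mathcal C_{S,T})=\{f_{\boldsymbol\alpha}:(\alpha_0,\ldots,\alpha_\delta)\in\hat{\mathcal E}\}.
\]

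It remains to show that this set equals $\mathcal C_{\sigma,T}$. This is the step needing care. The cone $\mathbb K_T$ has vertex $\Lambda^\star$, on which coordinates $0,\ldots,d$ vanish, and base $\mathcal E_T\subseteq\Lambda$. A nonzero $\boldsymbol\alpha$ spans a point of $\mathbb K_T$ iff its projection $(\alpha_0,\ldots,\alpha_d)$ onto $\Lambda$ is either zero (the point lies in the vertex) or spans a point of $\mathcal E_T$. This is the content of Remark \ref{projcone} with $\mathcal A$ the vertex subspace and $\mathcal B=\hat{\mathcal E}_T$. Adjoining $0$, I get $\mathcal C_{\sigma,T}=\{f_{\boldsymbol\alpha}:(\alpha_0,\ldots,\alpha_d)\in\hat{\mathcal E}_T\}$, which is also consistent with the explicit description \eqref{eq:CsigmaT-explicit}.

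Finally, I check $\mathcal E_{S,T}=\mathcal E_T$ for $d=\delta$, component by component. The points $A$ and $B$ coincide. For the curve components, $P(t)=(1,t,N_2(t),\ldots,N_\delta(t))$ agrees with $(1,t,t^{\sigma+1},\ldots,t^{\sigma^{d-1}+\cdots+1})$, so $\mathcal X_b$ matches, and it is kept exactly for $b\in S=\F_q^*\setminus T$. For the line components, the $\mathcal J_a$ agree because $(-1)^{\delta+1}=(-1)^{d-1}$. Hence $\hat{\mathcal E}=\hat{\mathcal E}_T$, and so $\Psi(\mathcal C_{S,T})=\mathcal C_{\sigma,T}$.
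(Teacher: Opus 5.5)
Your proposal is correct and follows essentially the same route as the paper. The trace identity $\Psi(\mathbf z)(x)=\sum_j z_j\mathrm{Tr}_{q^m/q}(\beta_jx)$ gives the isometry (your non-degeneracy argument is the paper's evaluation on the trace-dual basis in another guise), and the first $\delta+1$ coefficients of $\Psi(\mathbf z)$ are $\syn_{\mathbf H}(\mathbf z)$, so both codes are cut out by the same condition $(\alpha_0,\ldots,\alpha_\delta)\in\hat{\mathcal E}$ describing the cone $\mathbb K_T$.

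You also make explicit two points the paper leaves implicit: that the tail coefficients range freely, and that $\mathcal E_{S,T}=\mathcal E_T$ component by component. Your argument also covers $\delta=m-1$ without needing a separate case.
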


\begin{proof}
The map $\Psi$ is $\F_{q^m}$-linear by definition. Since $\sigma$
generates $\operatorname{Gal}(\F_{q^m}\vert\F_q)$, the field trace
satisfies
\[
\operatorname{Tr}_{q^m/q}(y)=\sum_{i=0}^{m-1}y^{\sigma^i}
\qquad\text{for every }y\in\F_{q^m}.
\]
Consequently, for every $\bold{z}\in\F_{q^m}^m$ and $x\in\F_{q^m}$,
\[
\Psi(\bold{z})(x)
=\sum_{j=1}^m z_j\operatorname{Tr}_{q^m/q}(\beta_jx).
\]
Let $(\beta_1^*,\ldots,\beta_m^*)$ be the trace-dual basis of
$(\beta_1,\ldots,\beta_m)$.
It follows that
\[
\Psi(\bold{z})(\beta_h^*)=z_h
\qquad(1\leq h\leq m).
\]
Thus $\Psi$ is injective and hence bijective, since both its domain
and codomain have dimension $m$ over $\F_{q^m}$. Its inverse is
\[
\Psi^{-1}(f)=\bigl(f(\beta_1^*),\ldots,f(\beta_m^*)\bigr).
\]
The trace formula shows that
$\operatorname{im}\Psi(\bold{z})\subseteq
\langle z_1,\ldots,z_m\rangle_{\F_q}$, while evaluation on the
trace-dual basis gives the reverse inclusion. Therefore,
\[
\operatorname{rk}_{\F_q}\Psi(\bold{z})
=\dim_{\F_q}\langle z_1,\ldots,z_m\rangle_{\F_q}
=\operatorname{wt}_{\rm SR}(\bold{z}).
\]
Since $\ell=1$, the sum-rank weight is the rank weight. By linearity,
$\Psi$ preserves rank distances.

Now write
\[
\Psi(\bold{z})=\sum_{i=0}^{m-1}c_iX^{\sigma^i},
\qquad c_i=\sum_{j=1}^m z_j\beta_j^{\sigma^i}.
\]
By the definition of $\bold{H}$, we have 
\[
(c_0,\ldots,c_\delta)^{\mathsf T}
=\bold{H}\bold{z}^{\mathsf T}
=\syn_{\bold{H}}(\bold{z}).
\]
Moreover, 
\[
\bold{z}\in\mathcal C_{S,T}
\quad\Longleftrightarrow\quad
(c_0,\ldots,c_\delta)\in\hat{\mathcal E}.
\]
and 
\[
\Psi(\mathcal C_{S,T})
=\left\{
\sum_{i=0}^{m-1}c_iX^{\sigma^i}
\colon c_i\in\F_{q^m},\quad
(c_0,\ldots,c_\delta)\in\hat{\mathcal E}
\right\}.
\]

Under the coordinate embedding obtained by appending $m-\delta-1$
zeros, $\mathcal E_{S,T}$ is identified with the set $\mathcal E_T$
in \eqref{eq:ET}. By the definition of the projective cone
$\mathbb K_T=\mathbb K(\Lambda^\star,\mathcal E_T)$, a nonzero vector
$\bold{c}=(c_0,\ldots,c_{m-1})$ represents a point of $\mathbb K_T$
if and only if
\[
(c_0,\ldots,c_\delta)=\bold{0}
\quad\text{or}\quad
\langle(c_0,\ldots,c_\delta)\rangle\in\mathcal E_{S,T}.
\]
The first case gives the vertex $\Lambda^\star$; in the second case,
the unrestricted last coordinates give the points joining the base
point to the vertex. This condition is equivalent to
$(c_0,\ldots,c_\delta)\in\hat{\mathcal E}$. Taking all nonzero
representatives of $\mathbb K_T$, together with the zero vector,
and using \eqref{eq:CsigmaT-geometric}, we obtain
\[
\Psi(\mathcal C_{S,T})=\mathcal C_{\sigma,T}.
\]
When $\delta=m-1$, there are no remaining coefficients and the vertex
$\Lambda^\star$ is empty. In this case $\mathbb K_T=\mathcal E_T$,
and the same argument applies.
\end{proof}

\section*{Acknowledgements}

The authors are members of the Gruppo Nazionale per le Strutture
Algebriche, Geometriche e le loro Applicazioni (GNSAGA) of the Istituto
Nazionale di Alta Matematica (INdAM). This research was partially
supported by INdAM--GNSAGA.

\end{document}